%% file: main.tex
\documentclass[reprint,amsmath,amssymb,aps,superscriptaddress,nofootinbib,raggedbottom]{revtex4-2}
\usepackage{graphicx}
\usepackage{amsthm,mathtools,amsfonts}
\usepackage{newtxtext,newtxmath}
\usepackage{xcolor}
\usepackage{tikz-cd}
\tikzcdset{arrow style=math font}
\usetikzlibrary{fit,backgrounds,arrows.meta}
\usepackage{booktabs,ragged2e}
\usepackage[hypertexnames=false,colorlinks=true,citecolor=teal,linkcolor=teal,urlcolor=teal]{hyperref}

\usepackage{algpseudocode}

\newcounter{frcalgorithm}
\renewcommand{\thefrcalgorithm}{\arabic{frcalgorithm}}

\definecolor{productgreen}{HTML}{E3EBF1}
\input{newcommands}

\theoremstyle{plain}
\newtheorem{definition}{Definition}
\newtheorem{theorem}[definition]{Theorem}
\newtheorem{lemma}[definition]{Lemma}
\newtheorem{proposition}[definition]{Proposition}
\newtheorem{corollary}[definition]{Corollary}

\makeatletter
\long\def\@makecaption#1#2{\vskip\abovecaptionskip
{\justifying\noindent #1.\ #2\par}\vskip\belowcaptionskip}
\makeatother

\begin{document}

\title{Ultra-high-distance quantum memories from amplified qLDPC codes}

\author{\mbox{Zijian Liang}}
\thanks{These authors contributed equally to this work. \\ Ordering determined by a random number generator.}
\affiliation{International Center for Quantum Materials, School of Physics, Peking University, 100871 Beijing, China}

\author{\mbox{Boren Gu}}
\thanks{These authors contributed equally to this work. \\ Ordering determined by a random number generator.}
\affiliation{Dahlem Center for Complex Quantum Systems, Freie Universität Berlin, 14195 Berlin, Germany}

\author{\mbox{Yu-An Chen}}
\email{yuanchen@pku.edu.cn}
\affiliation{International Center for Quantum Materials, School of Physics, Peking University, 100871 Beijing, China}

\author{\mbox{Jens Eisert}}
\email{jense@zedat.fu-berlin.de}
\affiliation{Dahlem Center for Complex Quantum Systems, Freie Universität Berlin, 14195 Berlin, Germany}

\author{\mbox{Zongyuan Wang}}
\email{zongyuan.wang15@outlook.com}
\affiliation{International Center for Quantum Materials, School of Physics, Peking University, 100871 Beijing, China}

\date{\today}

\begin{abstract}
A large code distance does not by itself guarantee a well-protected quantum memory, as faults during syndrome extraction can propagate into correlated data errors. Certifying circuit distance becomes demanding as circuits grow. In this work, we introduce \emph{distance amplifiers}, a modular approach to increasing both code distance and certified circuit-level protection in quantum low-density parity-check codes. The construction tensors a base code with an amplifier, adapting their syndrome-extraction circuits to the amplified code. To certify the resulting memory, we develop a fault-response certificate that yields rigorous lower bounds on circuit distance. Suitable gate ordering and flag qubits control correlated errors, allowing us to establish explicit conditions under which the amplified circuit distance equals the product of the constituent circuit distances. We prove this multiplicative relation for three types of amplifiers: the flagged $\code{4,2,2}$, hypergraph-product, and rotated surface codes. The certificate extends recursively, so the circuit distance multiplies exactly at every amplification step. For example, four successive applications of the flagged $\code{4,2,2}$ amplifier to a $\code{18,4,4}$ seed yield a $\code{13320,64,64}$ code with the circuit distance $d_{\mathrm{circ}} \geq 48$ and check weights $w \leq 14$. We further show how individual logical qubits can be tracked explicitly through amplification. By enabling high-distance quantum memories to be built and certified from smaller building blocks, our framework offers a systematic route toward scalable fault-tolerant quantum computing.
\end{abstract}

\maketitle

\section*{Introduction}
\label{sec:introduction}

A central challenge in building scalable quantum computers is to develop error-correction and fault-tolerance methods capable of sustaining long computations in the presence of noise~\cite{Kitaev-AnnPhys-2003,GottesmanNewReview,QECBasic,Roads,RevModPhys.87.307,GottesmanThesis,NielsenChuang,MindTheGaps}. Substantial experimental progress and theoretical breakthroughs have made \emph{quantum low-density parity-check} (qLDPC) codes~\cite{PRXQuantum.2.040101,9996782, Panteleev2021degeneratequantum} an increasingly promising route towards this goal. Asymptotically good qLDPC constructions achieve constant encoding rate and linear distance~\cite{9996782,VidickLDPC,9567703}, while other code families can offer stronger performance at the finite block sizes relevant to near-term implementations~\cite{zhao2026ultrahighratequantumerrorcorrection,bravyi2024high,QGPU,BhardwajMaMeister2026}. Realizing these potential resource savings in practice requires codes to be co-designed with efficient syndrome-extraction circuits and logical-operation protocols~\cite{gidney2021factor2,Pinnacle,Preskill10000,OurPlanarArchitecture}.

Code distance alone, however, does not determine the protection of a quantum memory. During syndrome extraction, a single circuit fault can propagate into errors on multiple data qubits. The \emph{circuit distance}, the minimum number of faults producing an undetected logical error, can therefore be smaller than the code distance~\cite{DetectorErrorModels,ManesClaes2025}. Certifying circuit distance by direct numerical search also becomes increasingly demanding as the extraction circuit grows~\cite{webster2026distancefindingalgorithmsquantumcodes}.

\begin{figure*}[ht]
\centering
\includegraphics[width=\textwidth]{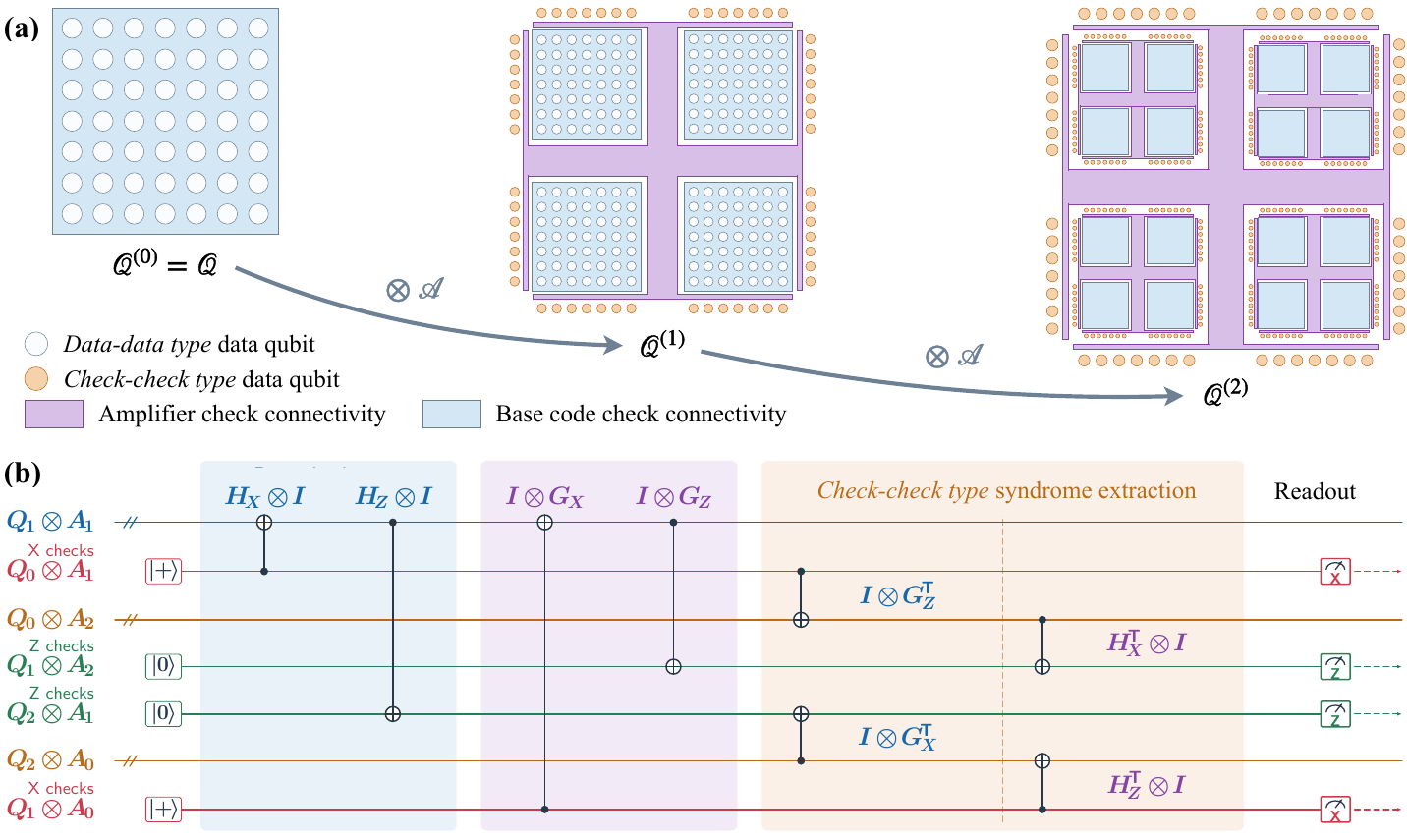}
\caption{
\textbf{Recursive amplification and staged joint syndrome extraction.}
\textbf{(a) Successive amplification.} Starting from $\QQQ^{(0)}=\QQQ$, each step applies the same amplifier according to $\QQQ^{(t+1)}=\QQQ^{(t)}\otimes\AAA$. Blue and purple shading denote check connectivity inherited from the base code and amplifier, respectively. White circles are data qubits in the central (data--data) register $Q_1\otimes A_1$. Orange circles are data qubits in the two side (check--check) registers, which complete the inherited and coupling checks while preserving commutation; the three data registers are defined in Eq.~\eqref{eq:tensor-product-complex}.
\textbf{(b) One round of staged joint syndrome extraction.} Copies of the base-code extraction circuit run in parallel on central slices, followed by parallel copies of the amplifier extraction circuit. Two ordered side-coupling windows then complete the inherited base checks and the new coupling checks, respectively. Flags, when present, are retained within the corresponding constituent circuits and measured before the appended side couplings. Syndrome auxiliary systems are read out after both side-coupling windows.}
\label{fig:css_tensor_overview}
\end{figure*}

In this work, we introduce \emph{distance amplifiers}, a modular approach to increasing both code distance and certified circuit-level protection. The construction uses two \emph{Calderbank-Shor-Steane} (CSS) codes~\cite{CalderbankShor,Steane1996}: a base code $\QQQ$ and a typically small amplifier code $\AAA$. We take one copy of $\QQQ$ for each qubit of $\AAA$. The checks of $\AAA$ then determine how corresponding data qubits in different copies are coupled~\cite{zhang2026coupledlayerconstructionquantumproduct,AudouxCouvreur2019}. We extend both the inherited base checks and the new coupling checks to act on additional data qubits so that all checks commute (Fig.~\ref{fig:css_tensor_overview}a). The resulting code is a CSS tensor-product code, whose distance gains can be certified using the logical-overlap criterion~\cite{AudouxCouvreur2019}. For example, the $\code{4,2,2}$ amplifier doubles both the number of logical qubits and the code distance, mapping a base code with parameters $\code{n,k,d}$ to
\begin{equation}
    \code{4n+m_X+m_Z,2k,2d},
    \label{eq:422_amplification}
\end{equation}
where $m_X$ and $m_Z$ are the numbers of $X$- and $Z$-checks, respectively, in the chosen base-code presentation.

Standard code concatenation expresses outer checks as products of inner-code logical operators~\cite{wills2026concatenatingalgebraiccodeshighrate,GidneyNewmanBrooks2025Yoked,low2026denserplanarsurfacecode}. Repeated logical-operator substitution can make check weights grow exponentially with concatenation depth. Our construction instead couples codes through physical checks. The tensor-product structure gives additive bounds on check weights and qubit degrees~\cite{AudouxCouvreur2019}, so both grow at most linearly with amplification depth for a fixed amplifier.

For syndrome extraction, we extend copies of the base-code and amplifier circuits with physical CNOT gates to the additional data qubits, completing the product checks without requiring logical gates on the base code. These new couplings introduce additional paths for fault propagation. We control the resulting correlated errors through suitable gate ordering or flag qubits within a \emph{staged joint} (SJ) extraction schedule (Fig.~\ref{fig:css_tensor_overview}b).

Our analysis begins by identifying the data-error patterns that faults can produce in individual check-measurement circuits. We introduce a \emph{fault-response certificate}, which lower-bounds the total assigned fault count needed to combine these local responses into a nontrivial logical error. In this counting problem, ordinary noisy syndrome outcomes are left unrestricted, while the flag constraints are retained. Every undetected logical fault history is covered by this analysis, so the certificate also gives a lower bound on circuit distance.

We choose amplifiers for which every nontrivial logical error of the amplified code yields several nontrivial base-code logical responses. Each response is obtained by summing the error patterns modulo two across a selected group of copies, with no copy used in more than one group. These responses need not arise from complete undetected histories of the base circuit, so we bound their fault costs using the base certificate rather than circuit distance alone. The amplifier's response properties and the prescribed extraction schedule allow these costs to be added without double-counting. We express the resulting multiplication factor as a \emph{transferable amplifier certificate}: multiplying it by the base certificate gives a certificate for the amplified circuit. This new certificate has the same form as the base certificate and can therefore be reused recursively, without a new search over the fault configurations of each amplified circuit.

We verify transferable certificates equal to the amplifier code distance in each Pauli sector for three families. A flagged $\code{4,2,2}$ amplifier~\cite{Knill2005,Koh2026PhantomCodes} detects dangerous correlated errors and doubles the base certificate. Canonical \textit{hypergraph-product} (HGP) amplifiers~\cite{TillichZemor2014,ManesClaes2025,quintavalle_reshape_2022} use their row and column structure to separate fault contributions, while rotated surface amplifiers~\cite{GoogleThreshold,PablaWangHong2026} control correlated errors through prescribed gate ordering without introducing new flags. For these families, when the seed certificate equals the seed circuit distance in a given Pauli sector, a matching upper bound establishes exact circuit-distance multiplication at every step under the prescribed extraction schedule.

To illustrate the finite-size gains, we start from a $\code{18,4,4}$ bivariate-bicycle code~\cite{bravyi2024high,liang2025generalized,Wang_2026} whose extraction circuit has distance $4$ and a fault-response certificate of $3$. A single amplification step with a $\code{361,1,19}$ rotated surface code yields a $\code{9738,4,76}$ memory with a certified bound $d_{\mathrm{circ}}\geq57$ and check weights at most $8$, retaining all four logical qubits. Alternatively, four successive applications of the flagged $\code{4,2,2}$ amplifier yield a $\code{13320,64,64}$ memory with a certified bound $d_{\mathrm{circ}}\geq48$ and check weights at most $14$. We compare different amplifiers in terms of their certified distance gains and data-qubit overheads.

Distance amplification also allows individual logical qubits and their Pauli operators to be tracked explicitly through recursion. Recent progress in qLDPC surgery highlights the value of explicit logical addresses for fast and parallel measurements~\cite{LiftedSurgery, XuZhouZheng2024Homological, QGPU, ZhengZhengJiangXu2026CanonicalLP, ZhengJiangXu2025HighRateSurgery, CowtanHeWilliamson2026}. Starting from dual logical bases of the constituent codes, we construct a tensor-product logical basis with inherited labels and preserved $X$--$Z$ pairing. Each chosen logical representative acts on copies of its base-code support selected by the corresponding amplifier representative, making its physical support explicit. We also show how compatible SWAP-based automorphisms~\cite{sayginel2025automorphisms} and algebraic surgery maps~\cite{IdeGowda2025} lift to the amplified code. Distance amplifiers thus combine recursive circuit-level certification with an explicit logical structure inherited from smaller building blocks.

\section*{Results}

\subsection{Distance amplification for qLDPC memories}

We first establish the amplified code parameters and distance gains. Let the CSS base code $\QQQ$ encode $k$ logical qubits in $n$ data qubits, and let the CSS amplifier $\AAA$ encode $k_A$ logical qubits in $n_A$ data qubits. The numbers of retained $X$- and $Z$-check rows are $m_X,m_Z$ for the base code and $m_X^A,m_Z^A$ for the amplifier. Throughout, we assume that each amplifier check matrix has full row rank. The base-code checks need not be independent.

We form the amplified code $\TTT=\QQQ\otimes\AAA$ by taking the central three-term truncation of the tensor complex~\cite{AudouxCouvreur2019}, as specified in Eq.~\eqref{eq:tensor-product-complex}. Its data occupy one \textit{central (data--data) register}, indexed by pairs of constituent data qubits, and two \textit{side (check--check) registers}, indexed by pairs of base-code and amplifier checks of opposite Pauli types (Fig.~\ref{fig:css_tensor_overview}a). Writing $\widetilde n$ and $\widetilde k$ for the numbers of data and logical qubits in the amplified code, we obtain
\begin{equation}
    \widetilde n=nn_A+m_Zm_X^A+m_Xm_Z^A,
    \qquad
    \widetilde k=kk_A.
\end{equation}
The three terms in $\widetilde n$ count the central register and the two side registers, respectively.

For $P\in\{X,Z\}$, let $d_P$ denote the minimum weight of a nontrivial $P$-type logical operator of the base code, where weight counts the data qubits on which the operator acts nontrivially. Define $d_P^A$ and $\widetilde d_P$ analogously for the amplifier and amplified code. The usual base-code distance is $d=\min\{d_X,d_Z\}$.

The logical-overlap criterion~\cite{AudouxCouvreur2019}, reviewed in Appendix~\ref{app:distance-amplification}, provides certified lower bounds on the distance gain. We call $\AAA$ an $(\alpha_X,\alpha_Z)$-distance amplifier, with $\alpha_X,\alpha_Z\geq1$, if it guarantees
\begin{equation}
    \left\lceil\alpha_P d_P\right\rceil
    \leq \widetilde d_P
    \leq d_P d_P^A,
    \qquad P\in\{X,Z\},
    \label{eq:main-static-bounds}
\end{equation}
for every CSS base code with $k>0$. Here, $\alpha_P$ is the certified multiplicative gain in the $P$ sector, and $\lceil\cdot\rceil$ denotes rounding up to an integer. The upper bound follows from tensoring minimum-weight logical operators of the two constituent codes. The certified gain $\alpha_P$ need not equal the amplifier distance $d_P^A$; when they coincide, the bounds establish exact code-distance multiplication.

Using the same fixed amplifier at each step, we define $\QQQ^{(0)}=\QQQ$ and $\QQQ^{(t+1)}=\QQQ^{(t)}\otimes\AAA$, taking the central truncation and retaining the full resulting check presentation after each step. Here, $t$ counts amplification steps, and $n^{(t)}$, $k^{(t)}$, and $d_P^{(t)}$ denote the data-qubit count, logical-qubit count, and $P$-sector code distance of $\QQQ^{(t)}$, respectively. The logical-qubit count and distances satisfy
\begin{equation}
    k^{(t)}=k k_A^t,
    \qquad
    \left\lceil\alpha_P^t d_P\right\rceil
    \leq d_P^{(t)}
    \leq (d_P^A)^t d_P.
\end{equation}

To quantify sparsity, let $w(\mathcal{C})$ be the largest row or column weight of either check matrix of a CSS code $\mathcal{C}$ in the chosen presentation. A row weight counts the data qubits in a check, while a column weight counts the checks of a given Pauli type acting on a data qubit. The tensor-product structure gives $w(\QQQ^{(t)})\leq w(\QQQ)+t w(\AAA)$ (Methods~\ref{subsec:code-distance-amplifier}). Thus, check weights and qubit degrees grow at most linearly with amplification depth for a fixed amplifier. This upper bound need not be tight. In particular, applying a fixed number of amplification steps to a qLDPC family preserves the LDPC property.

Amplifier choice also determines the data-qubit cost of reaching a target distance. For equal certified gains $\alpha_X=\alpha_Z=\alpha_A>1$, define the one-step data-qubit multiplication factor $\eta_A(\QQQ)=\widetilde n/n$ and the score
\begin{equation}
    \beta_A^{(1)}(\QQQ)
    =\frac{\log\alpha_A}{\log\eta_A(\QQQ)}.
\end{equation}
This score compares the guaranteed distance gain with the data-qubit growth; its value depends on the retained base-code check counts as well as on the amplifier. Under repeated amplification with $m_X^A,m_Z^A>0$, the data-qubit count $n^{(t)}$ grows asymptotically with a factor $c_A=n_A+\sqrt{2m_X^A m_Z^A}$ per step. The corresponding asymptotic score is $\beta_A^{(\infty)}=\log\alpha_A/\log c_A$, which gives the power-law exponent of the certified code-distance lower bound as a function of the data-qubit count. The derivation and exact qubit-count recurrence are given in Methods~\ref{subsec:amplifier-benchmark}. These benchmarks count data qubits only; they do not include syndrome or flag auxiliary systems, gate counts, circuit depth, or decoder performance.

Table~\ref{tab:balanced-amplifiers} compares representative amplifiers using the reference base-code check counts $m_X=m_Z=n/2$. Among the listed candidates, the $\code{4,2,2}$ amplifier has the lowest one-step data-qubit overhead for certified distance doubling, while rotated surface amplifiers have the lowest listed overheads at certified gains of three, four, and five. For a specified finite distance target, the comparison should instead use the exact qubit counts and the integer-rounded distance guarantees.

These code-distance bounds do not by themselves guarantee the same protection during noisy syndrome extraction. We next analyze the \emph{staged joint} (SJ) extraction schedule and establish recursive circuit-distance guarantees for the three amplifier families introduced above.

\begin{figure}[ht]
\centering
\includegraphics[width=\columnwidth]{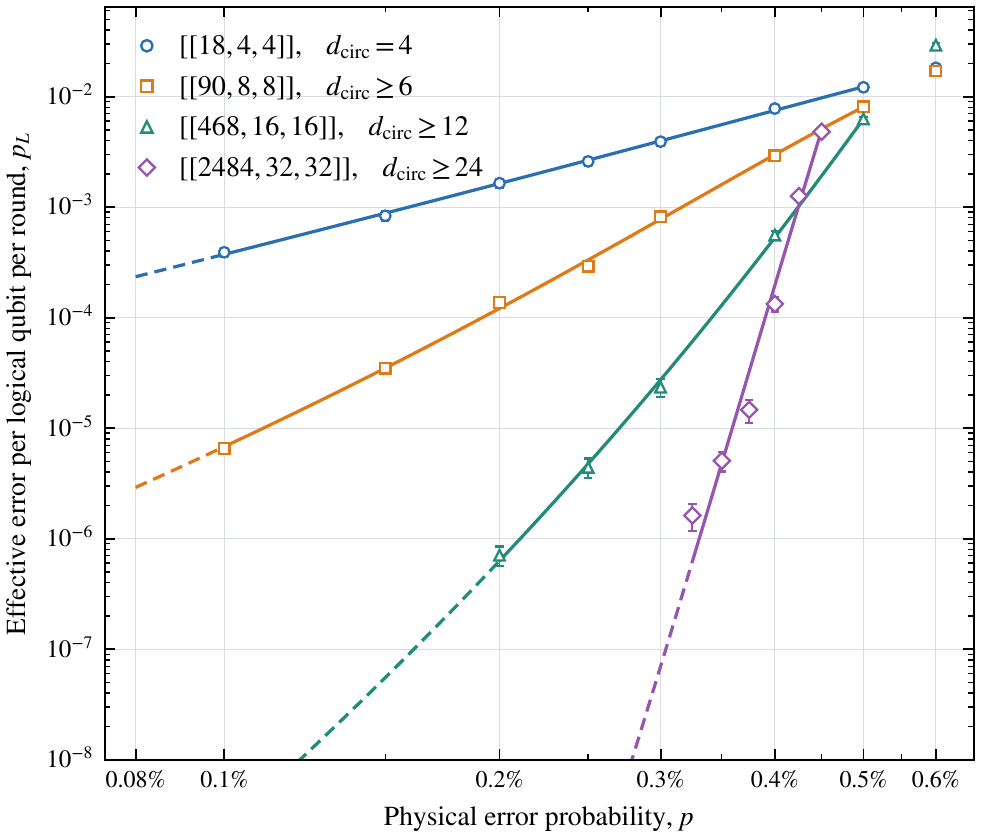}
\caption{\textbf{Circuit-level memory performance under recursive amplification.} Effective logical error rate $p_L$ per logical qubit per noisy round versus physical error probability $p$ for the $\code{18,4,4}$ seed and its first three flagged $\code{4,2,2}$ amplifications. Simulations use noiseless idling and independent BP+LSD decoding of the CSS sectors. Curves show empirical binomial maximum-likelihood fits and extrapolations using $p_L=p^{\delta/2}\exp(c_0+c_1p+c_2p^2)$~\cite{bravyi2024high}, with $\delta=4$ for the seed and $\delta=6,12,24$ for the three amplification levels, respectively, corresponding to the circuit-distance lower bounds shown in the legend.}
\label{fig:recursive-memory-performance}
\end{figure}

\subsection{Making the distance gain survive extraction}

We implement amplification using a \emph{staged joint} (SJ) syndrome-extraction schedule (Fig.~\ref{fig:css_tensor_overview}b; Methods~\ref{subsec:syndrome-circ-amplified}). Each round executes copies of the base-code circuits, followed by copies of the amplifier circuits and then two ordered windows of couplings to the new side data. The constituent gadgets retain their local gate orders and flag measurements. Flags are read before the newly appended couplings, while syndrome-auxiliary readouts are deferred until these couplings are complete.

For an extraction circuit $\mathcal S$, let $d_{\mathrm{circ},P}(\mathcal S)$ denote the minimum number of faulty physical locations in a pure-$P$ history producing a nontrivial final $P$ logical with zero complete measurement record, where $P\in\{X,Z\}$, i.e., the \textit{$P$-sector circuit distance}. The record includes every retained noisy check outcome, each individual flag outcome and the outcomes of the perfect final check measurements, relative to the fault-free experiment.

Our lower bounds use a \emph{fault-response certificate} $D_P$ for the base circuit. A measurement \textit{gadget} comprises the circuit for one stabilizer check, including its syndrome ancilla and any flags. The certificate asserts that every decomposition of a nontrivial $P$ logical into zero-flag fault-bundle responses from $P$-check gadgets and weight-one data responses has an assigned fault count of at least $D_P$ (Def.~\ref{def:joint-response-certificate} in Methods~\ref{subsec:circuit-distance-methods}). Each bundle has zero net response on every flag and is assigned its number of faulty locations; each weight-one data response is assigned a fault count of one. Ordinary syndrome outcomes are unrestricted, allowing the certificate to constrain logical responses whose allocated faults need not form complete undetected histories. Every undetected history admits such a decomposition, so $D_P\leq d_{\mathrm{circ},P}(\mathcal S)$, with equality not assumed.

For each of the three amplifier families below, a nontrivial product logical yields $d_P^A$ nontrivial base-code logical responses by \emph{contraction}: summing, modulo two, its central error patterns over selected copies of the base code. The prescribed amplifier circuits and SJ schedule allow fault responses to be allocated among these contractions without increasing their total assigned fault count. Each contraction requires at least $D_P$, giving a product lower bound. Conversely, copying a complete undetected base-code history along a minimum-weight amplifier logical produces an undetected product history with $d_P^A$ times as many faults. Under the circuit conditions specified in Methods~\ref{subsec:circuit-distance-methods} and Appendix~\ref{app:circuit_distance_amplification}, the amplified circuit $\widetilde{\mathcal S}$ therefore satisfies
\begin{equation}
    d_P^A D_P
    \leq d_{\mathrm{circ},P}(\widetilde{\mathcal S})
    \leq d_P^A d_{\mathrm{circ},P}(\mathcal S).
\label{eq:sj-circuit-distance-bounds}
\end{equation}
When the seed certificate is attained, $D_P=d_{\mathrm{circ},P}(\mathcal S)$, the bounds coincide with
\begin{equation}
    d_{\mathrm{circ},P}(\widetilde{\mathcal S}) =d_P^A d_{\mathrm{circ},P}(\mathcal S).
\label{eq:sj-circuit-distance-equality}
\end{equation}
The allocation argument also transfers the certificate itself, from $D_P$ to $d_P^A D_P$. Retaining the complete amplified circuit as the next base therefore gives recursive lower bounds and, for an attained seed certificate, exact multiplication at every compatible amplification step.

Certificates below the seed circuit distance still give substantial guarantees. For the $\code{18,4,4}$ memory with circuit distance $4$ and certificates $D_X=D_Z=3$, a distance-$19$ rotated surface amplifier gives $57\leq d_{\mathrm{circ}}\leq76$, while four flagged $\code{4,2,2}$ steps give $48\leq d_{\mathrm{circ}}\leq64$.

\subsubsection{The flagged four-copy distance amplifier}

The $\code{4,2,2}$ amplifier gives the simplest realization of fault-response certificate doubling and, for an attained seed certificate, circuit-distance doubling. Every nontrivial product logical admits two complementary pair contractions representing the same nontrivial base-code logical class. In a new amplifier gadget, a zero-flag fault bundle produces a response consisting of a component common to all four copies and a residual whose weight is no greater than the fault count. The common component cancels in both pair contractions, while the residual responses can be allocated between them without increasing their total assigned fault count (App.~\ref{app:Proof_for_flagged_422}). Thus a base certificate $D_P$ transfers to $2D_P$.

For a seed circuit $\mathcal S^{(0)}$ whose two sectors admit certificates at the common value $D_X^{(0)}=D_Z^{(0)}=d_{\mathrm{circ}}(\mathcal S^{(0)})$, recursive amplification yields
\begin{equation}
    d_{\mathrm{circ}}(\mathcal S^{(t)})=2^td_{\mathrm{circ}}(\mathcal S^{(0)}).
    \label{eq:flagged-circuit-distance-recursion}
\end{equation}
Circuit-level simulations of the $\code{18,4,4}$ base code memory and its first three flagged $\code{4,2,2}$ amplifications show increasingly steep logical-error suppression in the sampled physical error rate regime, from $0.1\%$ to $0.6\%$ (Fig.~\ref{fig:recursive-memory-performance}).

\subsubsection{Canonical HGP amplifiers}

Here, the row and column structure of an HGP code separates the fault responses that must be counted for distance amplification. Let $\AAA=\mathrm{HGP}(M_1,M_2)$, where the classical seed matrices have full row rank and nontrivial kernels of distances $d_1$ and $d_2$. The amplifier has $d_X^A=d_2$ and $d_Z^A=d_1$, and tensoring with a CSS base code gives the exact code distances $\widetilde d_X=d_2d_X$ and $\widetilde d_Z=d_1d_Z$~\cite{ZengPryadko2020}.

For the canonical check presentation and extraction circuits specified in Appendix~\ref{app:hgp-amplification}, every nontrivial product logical has $d_P^A$ nontrivial base-code contractions on disjoint amplifier slices. Each inherited zero-flag gadget response is assigned wholly to one slice, while each new amplifier fault contributes at most one weight-one data response to one selected slice. This gives Eq.~\eqref{eq:sj-circuit-distance-bounds} with gains $d_2$ and $d_1$ in the two sectors. No new flags are required, and any permitted base-circuit flags are retained. The slice structure underlying this allocation is closely related to the distance-preserving properties of canonical HGP syndrome extraction~\cite{ManesClaes2025,TanStambler2025}.

\subsubsection{Rotated surface amplifiers}

For surface-code amplifiers, disjoint logical strings provide the contractions, and the gate order controls their fault responses. A rotated rectangular surface amplifier $\AAA_{h,w}$ encodes one logical qubit and has $(d_X^A,d_Z^A)=(h,w)$. Its geometry makes the contractions explicit: a nontrivial product $X$ logical yields the same nontrivial base-code logical class in each of $h$ row contractions, while a product $Z$ logical gives the analogous contractions along $w$ columns.

The prescribed CNOT ordering orients two-qubit hooks transverse to the corresponding amplifier logical strings~\cite{GoogleThreshold}. After contraction, each new amplifier fault contributes at most one weight-one base-code response to a single row or column contraction. Each inherited zero-flag gadget bundle remains intact within a single row or column, so its assigned fault count is included at most once (App.~\ref{app:rotated-surface-code-amplifiers}). The SJ construction therefore transfers $(D_X,D_Z)$ to $(hD_X,wD_Z)$ without introducing new flags. When the corresponding seed certificates are attained, the circuit distances multiply exactly by $h$ and $w$. For a fixed square amplifier with parameters $\code{d_A^2,1,d_A}$, if both seed sectors admit a certificate at the common value $d_{\mathrm{circ}}(\mathcal S^{(0)})$, then the recursive circuit distance is $d_A^td_{\mathrm{circ}}(\mathcal S^{(0)})$.

\subsection{Logical addressability}
\label{sec:addressability}

The tensor construction also identifies where the amplified logical qubits reside. To track individual logical qubits through amplification, we choose dual $X$- and $Z$-logical bases of the base code $\QQQ$ and amplifier $\AAA$. Writing their representatives as $\ell_{P,\mu}$ and $\ell^A_{P,a}$, with $\mu=1,\ldots,k$ and $a=1,\ldots,k_A$, we construct
\begin{equation}
    \widetilde\ell_{P,\mu a}
    =\bigl(\ell_{P,\mu}\otimes\ell^A_{P,a},0,0\bigr),
    \qquad P\in\{X,Z\}.
\label{eq:canonical-basis-product}
\end{equation}
The first component acts on the central register, while the two zero components indicate identity on the side data. When both amplifier check matrices have full row rank, these representatives form a complete basis of $kk_A$ logical pairs, with $\widetilde\ell_{X,\mu a}^{\intercal}\widetilde\ell_{Z,\nu b}=\delta_{\mu\nu}\delta_{ab}$. The label $(\mu,a)$ therefore identifies an encoded qubit and its conjugate logical operators.

Each representative is supported on copies of the corresponding base-code region selected by the amplifier support. This carries supplied row--column~\cite{quintavalle_reshape_2022,quintavalle_partitioning_2023} or fiber~\cite{ZhengZhengJiangXu2026CanonicalLP} addresses through recursion, adding an amplifier label at each step when $k_A>1$. Base SWAP automorphisms also lift whenever their data permutations extend consistently to permutations of the retained check rows. The lifted physical permutation acts on all three data registers and induces the same base logical action at every amplifier label.

The inherited logical coordinates also support two algebraic lifts of base-code surgery. A complete lift imposes the base logical constraints at every amplifier label and retains the tensor structure of the merged code. When the full merged code retains logical qubits and does not reduce the code distance in either Pauli sector, the complete lift preserves the amplified memory's certified code-distance lower bounds. A selective lift targets a chosen amplifier label, but its distance must be assessed separately. Fault tolerance of the full measurement protocol requires a separate circuit analysis. Explicit constructions and proofs are given in Methods~\ref{subsec:addressability-methods} and Appendix~\ref{app:logical-addressability}.

\section*{Discussion}

In this work, we introduced distance amplifiers. We established Theorem~\ref{prop:common-joint-amplification} provides a modular route to circuit-distance amplification. Under the stated extraction conditions, a base-code fault-response certificate and a transferable amplifier certificate multiply to give a certificate for the amplified circuit. The resulting certificate can be transferred through subsequent amplification steps, yielding recursive lower bounds on circuit distance. A complementary upper bound from lifting undetected base-code histories gives exact multiplication when the amplifier certificate reaches its code distance and the base certificate is attained. We verify the required amplifier certificates and extraction conditions for three families: flagged $\code{4,2,2}$, canonical hypergraph-product, and rotated surface amplifiers.

This modularity allows different amplifiers to be combined when each stage satisfies the required circuit conditions. The complete preceding code and extraction circuit, including older side data and flags, become the next base, and the sector-wise certificate gains multiply. The inherited logical coordinates also retain an explicit connection between the logical qubits at successive levels. Together with the lifted symmetry and measurement maps, they provide a starting point for developing logical operations on amplified memories.

The results of this paper depend on the specified check presentations, local gate orders, and fault-response assumptions, with ideal encoded input and perfect final checks. Extending the analysis to initialization and logical measurement protocols would connect the memory-distance results to a broader set of fault-tolerant operations. In particular, the algebraic surgery maps require a separate analysis of faults during the complete merge--split procedure. Fixed-depth amplification preserves the LDPC property, while the growth of check weights and qubit degrees under unrestricted recursion motivates further work on controlling these resources.

Distance amplifiers have a simple physical interpretation in terms of the coupled-layer construction~\cite{zhang2026coupledlayerconstructionquantumproduct,ZhangWeiTantivasadakarn2026}. This perspective may provide a route to extending circuit-distance amplification beyond the CSS setting of the present chain-complex formulation. An important next step is to design decoders around the product structure, combining information within base-code copies with the constraints imposed by the coupling checks. The recursive organization may also support decoding strategies that process errors at successive amplification levels. We leave these to future work.

\clearpage

\section*{Methods}
\setcounter{subsection}{0}
\renewcommand{\thesection}{\Alph{section}}
\renewcommand{\theHsection}{methods.\Alph{section}}

\subsection{Recursive code-distance amplification as tensor products}
\label{subsec:code-distance-amplifier}

We derive the code parameters by representing the base and amplifier as chain complexes and retaining the central part of their tensor product. The base and amplifier are CSS codes as length-two chain complexes
\begin{equation}
    \QQQ:Q_2\xrightarrow{H_Z^\intercal}Q_1\xrightarrow{H_X}Q_0,
    \qquad
    \AAA:A_2\xrightarrow{G_Z^\intercal}A_1\xrightarrow{G_X}A_0,
\end{equation}
with $H_XH_Z^\intercal=0$ and $G_XG_Z^\intercal=0$. Data, $X$ checks, and $Z$ checks are indexed by the bases of degrees one, zero, and two. We use $n=\dim Q_1$, $m_X=\dim Q_0$, $m_Z=\dim Q_2$, and the corresponding amplifier counts $n_A,m_X^A,m_Z^A$. Throughout, $k,k_A>0$, and
\begin{equation}
    \operatorname{rank}G_X=m_X^A,
    \qquad \operatorname{rank}G_Z=m_Z^A.
    \label{eq:amplifier-full-rank}
\end{equation}
Thus, $H_0(\AAA)=H_2(\AAA)=0$ and $k_A=n_A-m_X^A-m_Z^A$.

The total complex has $T_r=\bigoplus_{i+j=r}Q_i\otimes A_j$ and differential $\partial_{\QQQ}\otimes I+I\otimes\partial_{\AAA}$. We retain its central terms $T_3\to T_2\to T_1$ to define the CSS code $\TTT$~\cite{AudouxCouvreur2019}. The shaded region below identifies this truncation.

\begin{equation}
\resizebox{\dimexpr\columnwidth-2.5em\relax}{!}{%
\begin{tikzpicture}[
baseline=(T2.base),
font=\small,
term/.style={inner xsep=1pt,inner ysep=2pt},
arr/.style={commutative diagrams/every arrow,shorten <=1pt,shorten >=1pt},
map/.style={font=\scriptsize,anchor=center,inner xsep=1.5pt,inner ysep=1pt,fill=productgreen,text=black},
outermap/.style={map,fill=white},
tmap/.style={font=\scriptsize,inner sep=1pt}
]
\node[term] (T0) at (0,2.70) {$T_0$};
\node[term] (T1) at (0,1.30) {$T_1$};
\node[term] (T2) at (0,0.00) {$T_2$};
\node[term] (T3) at (0,-1.30) {$T_3$};
\node[term] (T4) at (0,-2.70) {$T_4$};
\draw[arr] (T1) -- (T0);
\draw[arr] (T2) -- node[tmap,left=2pt] (t12) {$\widetilde H_X$} (T1);
\draw[arr] (T3) -- node[tmap,left=2pt] (t23) {$\widetilde H_Z^\intercal$} (T2);
\draw[arr] (T4) -- (T3);
\node[term] (Q0A0) at (3.50,2.70) {$Q_0\otimes A_0$};
\node[term] (Q1A0) at (1.20,1.30) {$Q_1\otimes A_0$};
\node[term] (Q0A1) at (5.80,1.30) {$Q_0\otimes A_1$};
\node[term] (Q2A0) at (1.20,0.00) {$Q_2\otimes A_0$};
\node[term] (Q1A1) at (3.50,0.00) {$Q_1\otimes A_1$};
\node[term] (Q0A2) at (5.80,0.00) {$Q_0\otimes A_2$};
\node[term] (Q2A1) at (1.20,-1.30) {$Q_2\otimes A_1$};
\node[term] (Q1A2) at (5.80,-1.30) {$Q_1\otimes A_2$};
\node[term] (Q2A2) at (3.50,-2.70) {$Q_2\otimes A_2$};
\draw[arr] (Q1A0) -- node[outermap] {$H_X\otimes I$} (Q0A0);
\draw[arr] (Q0A1) -- node[outermap] {$I\otimes G_X$} (Q0A0);
\draw[arr] (Q2A0) -- node[map] (mLup) {$H_Z^\intercal\otimes I$} (Q1A0);
\draw[arr] (Q1A1) -- node[map] (mCLup) {$I\otimes G_X$} (Q1A0);
\draw[arr] (Q1A1) -- node[map] (mCRup) {$H_X\otimes I$} (Q0A1);
\draw[arr] (Q0A2) -- node[map] (mRup) {$I\otimes G_Z^\intercal$} (Q0A1);
\draw[arr] (Q2A1) -- node[map] (mLdn) {$I\otimes G_X$} (Q2A0);
\draw[arr] (Q2A1) -- node[map] (mCLdn) {$H_Z^\intercal\otimes I$} (Q1A1);
\draw[arr] (Q1A2) -- node[map] (mCRdn) {$I\otimes G_Z^\intercal$} (Q1A1);
\draw[arr] (Q1A2) -- node[map] (mRdn) {$H_X\otimes I$} (Q0A2);
\draw[arr] (Q2A2) -- node[outermap] {$I\otimes G_Z^\intercal$} (Q2A1);
\draw[arr] (Q2A2) -- node[outermap] {$H_Z^\intercal\otimes I$} (Q1A2);
\begin{scope}[on background layer]
\node[
fit=(T1)(T2)(T3)(t12)(t23)(Q1A0)(Q0A1)(Q2A0)(Q1A1)(Q0A2)(Q2A1)(Q1A2)(mLup)(mCLup)(mCRup)(mRup)(mLdn)(mCLdn)(mCRdn)(mRdn),
inner xsep=5pt,inner ysep=5pt,rounded corners=5pt,fill=productgreen,draw=none
] {};
\end{scope}
\end{tikzpicture}
}
\label{eq:tensor-product-complex}
\end{equation}

The data-qubit space is
\begin{equation}
    T_2=(Q_1\otimes A_1)\oplus(Q_2\otimes A_0)\oplus(Q_0\otimes A_2).
    \label{eq:tensor-qubit-sectors}
\end{equation}
In addition to the central register ($Q_1\ox A_1$), the amplified code contains two side registers. Using the ordering in Eq.~\eqref{eq:tensor-qubit-sectors}, the product check matrices are
\begin{align}
    \widetilde H_X
    &=\begin{pmatrix}
    H_X\otimes I & 0 & I\otimes G_Z^\intercal \\
    I\otimes G_X & H_Z^\intercal\otimes I & 0
    \end{pmatrix},
    \label{eq:tensor-HX}\\
    \widetilde H_Z
    &=\begin{pmatrix}
    H_Z\otimes I & I\otimes G_X^\intercal & 0 \\
    I\otimes G_Z & 0 & H_X^\intercal\otimes I
    \end{pmatrix}.
    \label{eq:tensor-HZ}
\end{align}
The product checks satisfy the CSS condition $\widetilde H_X\widetilde H_Z^\intercal=0$, as follows from the two factor CSS conditions. Counting the data qubits in these three registers gives
\begin{equation}
    \widetilde n=nn_A+m_Zm_X^A+m_Xm_Z^A.
    \label{eq:tensor-length}
\end{equation}

The K\"unneth formula identifies $H_2(\mathcal T)\cong H_1(\QQQ)\otimes H_1(\AAA)$, and the corresponding cohomology (Appendix~\ref{app:distance-amplification}). Hence, $\widetilde k=kk_A$. The $(\alpha_X,\alpha_Z)$-amplifier definition follows Eq.~\eqref{eq:main-static-bounds}, with $\alpha_X,\alpha_Z\geq1$. For dual logical bases, if every $P$-type basis class has at least $\Sigma_P$ representatives with maximum coordinate overlap $o_P$, then $\alpha_X=\Sigma_Z/o_Z$ and $\alpha_Z=\Sigma_X/o_X$, which is proved in Lemma~\ref{lem:distance-amplification}. The pairing of opposite-type logical operators explains this exchange of labels.

For sparsity bounds, let $w(\mathcal C)$ denote the maximum row or column weight of either check matrix in the specified presentation. The block structure gives
\begin{equation}
    w(\QQQ\otimes\AAA)\leq w(\QQQ)+w(\AAA).
    \label{eq:tensor-sparsity}
\end{equation}
Define $\QQQ^{(0)}=\QQQ$ and $\QQQ^{(t+1)}=\QQQ^{(t)}\otimes\AAA$, retaining and reindexing the central three-term presentation at each step. Then
\begin{align}
    k^{(t)}&=kk_A^t,\\
    \left\lceil\alpha_P^t d_P\right\rceil
    &\leq d_P^{(t)}\leq(d_P^A)^t d_P,\\
    w(\QQQ^{(t)})&\leq w(\QQQ)+t w(\AAA).
    \label{eq:recursive-distance-bounds}
\end{align}

\subsection{Data-qubit overhead and amplifier benchmarks}
\label{subsec:amplifier-benchmark}

We now count the additional data qubits under repeated amplification and compare the resulting distance gains.

\begin{table*}[t]
\centering
\caption{\textbf{Selected distance amplifiers and their benchmarks.}
Here $w_{\mathrm{check}}$ denotes the maximum weight of the chosen stabilizer generators (the maximum row weight of either check matrix). The one-step overhead $\eta_A$ and score $\beta_A^{(1)}$ are evaluated for a base-code presentation with $m_X=m_Z=n/2$, corresponding to $\rho_Q=1$. Vertical separators group amplifiers by $\lceil\alpha_A\rceil$. For the even-distance rotated surface codes, the independent check counts are asymmetric: the $\code{4,1,2}$ amplifier has $2$ independent $X$ checks and $1$ independent $Z$ check, while the $\code{16,1,4}$ amplifier has $8$ independent $X$ checks and $7$ independent $Z$ checks.}
\label{tab:balanced-amplifiers}
\resizebox{\textwidth}{!}{%
\begin{tabular}{c*{2}{c}|*{3}{c}|*{2}{c}|c}
\toprule
Amplifier code & Four-qubit & Rotated surface & Steane & Rotated surface & Clustered cyclic & Rotated surface & Bivariate bicycle & Rotated surface \\
$\AAA$ & \cite{Knill2005,Koh2026PhantomCodes} & \cite{GoogleThreshold} & \cite{Steane1996} & \cite{GoogleThreshold} & \cite{QGPU} & \cite{GoogleThreshold} & \cite{liang2025generalized,Wang_2026} & \cite{GoogleThreshold} \\
\midrule
$\code{n_A,k_A,d_A}$ & $\code{4,2,2}$ & $\code{4,1,2}$ & $\code{7,1,3}$ & $\code{9,1,3}$ & $\code{12,4,3}$ & $\code{16,1,4}$ & $\code{18,4,4}$ & $\code{25,1,5}$ \\
$\alpha_A$ & $2$ & $2$ & $7/3$ & $3$ & $3$ & $4$ & $4$ & $5$ \\
$w_{\mathrm{check}}$ & $4$ & $4$ & $4$ & $4$ & $6$ & $4$ & $6$ & $4$ \\
$\eta_A$ & $5$ & $5.5$ & $10$ & $13$ & $16$ & $23.5$ & $25$ & $37$ \\
$\beta_A^{(1)}$ & $\mathbf{0.4307}$ & $0.4066$ & $0.3680$ & $\mathbf{0.4283}$ & $0.3962$ & $\mathbf{0.4391}$ & $0.4307$ & $\mathbf{0.4457}$ \\
$\beta_A^{(\infty)}$ & $\mathbf{0.4104}$ & $0.3869$ & $0.3502$ & $\mathbf{0.4092}$ & $0.3826$ & $\mathbf{0.4226}$ & $0.4165$ & $\mathbf{0.4307}$ \\
\bottomrule
\end{tabular}
}
\end{table*}

Our benchmark compares certified code-distance amplification against total data-qubit overhead. Syndrome auxiliary systems, flag auxiliary systems, gate count, circuit depth, and decoder performance are not included. Temporarily write $x_A=m_X^A$ and $z_A=m_Z^A$, allowing the amplifier to have unequal numbers of $X$- and $Z$-check rows. Under recursive amplification, the retained presentation counts satisfy
\begin{equation}
    \begin{pmatrix} n^{(t+1)}\\m_X^{(t+1)}\\m_Z^{(t+1)}\end{pmatrix}
    =\begin{pmatrix}n_A&z_A&x_A\\x_A&n_A&0\\z_A&0&n_A\end{pmatrix}
    \begin{pmatrix}n^{(t)}\\m_X^{(t)}\\m_Z^{(t)}\end{pmatrix}.
    \label{eq:benchmark-full-recurrence}
\end{equation}
Here, $m_X^{(t)}$ and $m_Z^{(t)}$ denote the numbers of check rows retained in the chosen presentation, rather than the corresponding check ranks.

For a single amplification step, the data-qubit overhead is therefore
\begin{equation}
    \eta_A(\QQQ)=\frac{\widetilde n}{n}
    =n_A+\frac{x_A m_Z+z_A m_X}{n}.
    \label{eq:benchmark-single-step}
\end{equation}
For the reference base-code presentation used in Table~\ref{tab:balanced-amplifiers}, where $m_X=m_Z=n/2$, this reduces to
\begin{equation}
    \eta_A=n_A+\frac{x_A+z_A}{2},
\end{equation}
including the asymmetric case $x_A\neq z_A$. For amplifiers with equal sector-wise guarantees $\alpha_X=\alpha_Z=\alpha_A>1$, we define the one-step score
\begin{equation}
    \beta_A^{(1)}(\QQQ)=\frac{\log\alpha_A}{\log\eta_A(\QQQ)}.
    \label{eq:benchmark-one-time-score}
\end{equation}
For a prescribed finite distance target, however, the more relevant comparison is obtained directly from the exact count recurrence together with the rounded certified distance bound.

Repeated amplification is controlled asymptotically by the dominant eigenvalue of the count matrix. For $x_A,z_A>0$, this is
\begin{equation}
    c_A=n_A+\sqrt{2x_Az_A}.
    \label{eq:benchmark-leading-eigenvalue}
\end{equation}
Any nonnegative initial count vector with $n>0$ has a nonzero projection onto the positive leading eigen-direction, and hence $n^{(t)}=\Theta(c_A^t)$. Combining this with the recursive distance guarantee gives
\begin{equation}
    \beta_A^{(\infty)}=\frac{\log\alpha_A}{\log c_A},
    \qquad
    d^{(t)}=\Omega\left((n^{(t)})^{\beta_A^{(\infty)}}\right).
    \label{eq:benchmark-growth-exponent}
\end{equation}

When the amplifier has equal check counts, $x_A=z_A=m_A$, the recurrence closes on $s^{(t)}=m_X^{(t)}+m_Z^{(t)}$:
\begin{equation}
    \begin{pmatrix}n^{(t+1)}\\s^{(t+1)}\end{pmatrix}
    =\begin{pmatrix}n_A&m_A\\2m_A&n_A\end{pmatrix}
    \begin{pmatrix}n^{(t)}\\s^{(t)}\end{pmatrix}.
    \label{eq:benchmark-count-recurrence}
\end{equation}
In this symmetric case, $c_A=n_A+\sqrt2\,m_A,\qquad \eta_A=n_A+m_A\rho_Q \tand \rho_Q=\frac{m_X+m_Z}{n}$.

\subsection{Recursive staged joint syndrome extraction}
\label{subsec:syndrome-circ-amplified}

In this subsection, we specify the gate schedule used in the circuit-distance analysis, including when the new side data are coupled and when the check auxiliary systems are read. As shown in Figure~\ref{fig:css_tensor_overview}, the amplified-code data registers are $Q_1\otimes A_1$, $Q_2\otimes A_0$, and $Q_0\otimes A_2$, ordered as in Eq.~\eqref{eq:tensor-qubit-sectors}. The $X$-syndrome families are indexed by $Q_0\otimes A_1$ and $Q_1\otimes A_0$, while the $Z$-syndrome families are indexed by $Q_2\otimes A_1$ and $Q_1\otimes A_2$. An $X$-syndrome ancilla is prepared in $|+\rangle$, acts as the control for its data CNOTs, and is measured in the $X$ basis; a $Z$-syndrome ancilla is prepared in $|0\rangle$, acts as the target of its data CNOTs, and is measured in the $Z$ basis.

For each base-code or amplifier check, we retain its original measurement circuit but defer the final syndrome-auxiliary readout. We refer to this retained subcircuit as its \emph{open core}. The open core contains all inherited data couplings and flag operations, including flag measurements. The subsequent couplings of the same syndrome ancilla to the newly introduced side data form the gadget's \emph{side tail}; the ancilla is measured at the end of the extraction round.

The SJ circuit is organized into four ordered coupling windows,
\begin{equation}
    \mathrm{base\ code}\ \longrightarrow\ \mathrm{amplifier} \longrightarrow\ \mathrm{side\ 1}\ \longrightarrow\ \mathrm{side\ 2},
\label{eq:tensor_parallel_windows}
\end{equation}
with the corresponding parity-check blocks shown in Figure~\ref{fig:css_tensor_overview}~(b). Within each window, gates may be parallelized provided that the local coupling order of every check is preserved and no qubit participates in more than one gate per layer. The base-code window consists of $n_A$ disjoint copies of the base-code schedule, while the amplifier window consists of $n$ disjoint copies of the amplifier schedule. The two side-coupling windows complete the product checks through disjoint $X$- and $Z$-coupling families. All syndrome auxiliary systems are read out only after side~2, while any inherited local flags are measured before the newly appended side tail of the corresponding gadget.

The construction is recursive. After one amplification step, the entire amplified code and its SJ extraction circuit become the base code and schedule for the next step. Thus $Q_1$ at the next level includes all previously introduced side data, and the inherited base-code core contains all couplings accumulated in earlier amplification steps.

At every level, the same four-window ordering is retained. Each inherited constituent core preserves its original internal $X/Z$ interleaving and local coupling order; recursion only defers its terminal syndrome readout and appends the new side couplings. In this sense, the construction reuses the complete current extraction circuit rather than rescheduling it from scratch.

\subsection{Circuit-distance amplification}
\label{subsec:circuit-distance-methods}

Syndrome-auxiliary faults can propagate to several data qubits. We analyze these correlations in the staged joint (SJ) construction using a fault-response certificate $D_P^Q$ for the base-code circuit $\mathsf S_Q$ and a transferable amplifier certificate $D_P^A$.

An \emph{extraction gadget} is the complete circuit measuring one stabilizer check, including its syndrome ancilla and any flags. All circuit-distance comparisons use the same fixed number of noisy extraction rounds, with ideal encoded input and perfect final check measurements. For $P\in\{X,Z\}$, let $\bar P$ denote the opposite Pauli type. For a pure-$P$ fault history $F$, write $E_P(F)$ for its final binary data-error pattern and define
\begin{align}
    \mathsf D(F)&=\bigl(s_{\mathrm{noisy}}(F),f(F),s_{\mathrm{final}}(F)\bigr),
    \label{eq:iva-complete-record}\\
    d_{\mathrm{circ},P}(\mathsf S_Q)
    &=\min_{\substack{F:\,\mathsf D(F)=0,\\{[E_P(F)]\neq0}}}|F|.
    \label{eq:iva-circuit-distance}
\end{align}
The complete record contains every noisy check outcome, every individual flag outcome and the perfect final syndrome, all relative to the fault-free experiment. Each faulty location contributes one to $|F|$, including a two-qubit fault at a single CNOT. Zero final syndrome ensures $E_P(F)\in\ker H_{\bar P}$, while $[E_P(F)]\neq0$ excludes stabilizers in $\operatorname{im}H_P^\intercal$.

The lower-bound proof contracts product logicals to base-code data responses that need not arise from complete undetected histories. We therefore require a fault-response certificate, rather than a circuit-distance lower bound alone.

\begin{definition}[Fault-response certificate]
    \label{def:joint-response-certificate}
    For a fixed extraction circuit, let $e_{\mathsf G}(B)$ and $f_{\mathsf G}(B)$ denote the data and flag responses of an allowed pure-$P$ fault bundle $B$ in a gadget $\mathsf G$. A response decomposition of a data pattern $e$ has the form
    \begin{align}
        e&=\sum_\lambda e_{\mathsf G_\lambda}(B_\lambda)+\sum_{\mu=1}^{s}u_\mu,\\
        K&:=\sum_\lambda|B_\lambda|+s,
    \end{align}
    where each $\mathsf G_\lambda$ measures a $P$ check, $f_{\mathsf G_\lambda}(B_\lambda)=0$, and the $u_\mu$ are weight-one data patterns, with $s\geq0$. An integer $D_P\geq1$ is a fault-response certificate if every such decomposition of every $e\in\ker H_{\bar P}\setminus\operatorname{im}H_P^\intercal$ has assigned fault count $K\geq D_P$.
\end{definition}

Ordinary syndrome outcomes are unrestricted, and repeated gadget labels and singleton terms are allowed. Each flagged bundle is kept intact because its individual faults may have nonzero flag responses that cancel collectively. Every zero-record history admits a decomposition with $K\leq|F|$, giving $D_P\leq d_{\mathrm{circ},P}(\mathsf S)$, where the proof is given in Appendix~\ref{app:circuit_distance_amplification}.

\begin{definition}[Transferable amplifier certificate] \label{def:transferable-amplifier-certificate}
    An integer $D_P^A\geq1$ is a transferable amplifier certificate if every nontrivial amplifier logical $a\in\ker G_{\bar P}\setminus\operatorname{im}G_P^\intercal$ admits $D_P^A$ representatives $\omega_i$ with pairwise disjoint supports satisfying
    \begin{equation}
        G_P\omega_i=0,\qquad \omega_i^\intercal a=1,
    \label{eq:methods-amplifier-detection}
    \end{equation}
    and
    \begin{equation}
        \operatorname{wt}\!\left[
        \bigl(\omega_i^\intercal e_{\mathsf G_A}(B)\bigr)_{i=1}^{D_P^A}
        \right]\leq |B|
    \label{eq:methods-amplifier-parity}
    \end{equation}
    for every allowed zero-flag pure-$P$ bundle $B$ in every amplifier $P$-check gadget $\mathsf G_A$. The representatives may depend on $a$, but the same collection must satisfy the bound for all such gadgets and bundles.
\end{definition}

To apply these certificates, write the central data pattern of a nontrivial product logical as a matrix $M$, with rows indexed by base-code qubits and columns by amplifier qubits. The contraction $M\omega_i$ is the modulo-two sum of the columns selected by $\omega_i$. Under the SJ projection conditions and full row rank of $G_X,G_Z$, the transferable amplifier certificate supplies $D_P^A$ representatives for which every contraction is a nontrivial base-code logical. Disjointness and Eq.~\eqref{eq:methods-amplifier-parity} ensure that a product-response decomposition of assigned fault count $K$ induces base-code decompositions with counts $C_i$ satisfying
\begin{equation}
    \sum_{i=1}^{D_P^A}C_i\leq K.
\end{equation}
Applying the base-code certificate to each contraction gives $K\geq D_P^A D_P^Q$. Because this holds for every logical-response decomposition, it transfers the certificate itself, not only the circuit-distance lower bound.

For the upper bound, we copy a minimum-fault undetected base-code history along a minimum-weight amplifier logical of weight $d_P^A$. The SJ history-lifting conditions preserve the zero complete record and nontrivial logical output, producing an undetected product history with $d_P^A$ times as many faults.

\begin{theorem}[Multiplication of fault-response certificates] \label{prop:common-joint-amplification}
    Suppose $\mathsf S_Q$ has fault-response certificate $D_P^Q$, the amplifier has a transferable amplifier certificate $D_P^A$, and $G_X,G_Z$ have full row rank. If the SJ circuit satisfies the projection and history-lifting conditions of Appendix~\ref{app:circuit_distance_amplification}, then
    \begin{equation}
        D_P^A D_P^Q
        \leq d_{\mathrm{circ},P}(\widetilde{\mathsf S})
        \leq d_P^A\,d_{\mathrm{circ},P}(\mathsf S_Q).
        \label{eq:common-joint-bounds}
    \end{equation}
    The amplified circuit inherits the certificate $D_P^A D_P^Q$. If $D_P^A=d_P^A$ and $D_P^Q=d_{\mathrm{circ},P}(\mathsf S_Q)$, then
    \begin{equation}
        d_{\mathrm{circ},P}(\widetilde{\mathsf S})
        =d_P^A\,d_{\mathrm{circ},P}(\mathsf S_Q),
        \label{eq:common-joint-equality}
    \end{equation}
    and the inherited certificate is attained.
\end{theorem}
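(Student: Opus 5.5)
The proof splits into three parts: a lower bound obtained by transferring the certificate, an upper bound obtained by lifting a history, and the equality case, which follows by combining the two. For the lower bound we prove the stronger claim that $D_P^AD_P^Q$ is a fault-response certificate for $\widetilde{\mathsf S}$ in the sense of Def.~\ref{def:joint-response-certificate}. The lower inequality in Eq.~\eqref{eq:common-joint-bounds} then follows from the general fact, proved in Appendix~\ref{app:circuit_distance_amplification}, that every zero-record history $F$ admits a response decomposition with $K\le |F|$.

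\textbf{Certificate transfer.} Fix $\widetilde e\in\ker\widetilde H_{\bar P}\setminus\operatorname{im}\widetilde H_P^\intercal$ together with a response decomposition into zero-flag bundles in $P$-check gadgets of $\widetilde{\mathsf S}$ and weight-one terms, with assigned count $K$.
\begin{enumerate}
\item \emph{Reduce to the central register.} Write the central part of $\widetilde e$ as a matrix $M$. Using the Künneth isomorphism $H_2(\mathcal T)\cong H_1(\QQQ)\otimes H_1(\AAA)$ and full row rank of $G_X,G_Z$, the side components can be cleaned into the centre modulo stabilizers. We then argue that the class of $\widetilde e$ is represented by $M$ and that $M$ pairs nontrivially with some $\ell\otimes a$, where $a$ is a nontrivial amplifier logical.
\item \emph{Choose representatives.} For this $a$, take the $D_P^A$ disjoint representatives $\omega_i$ from Def.~\ref{def:transferable-amplifier-certificate}. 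The conditions $G_P\omega_i=0$ and $\omega_i^\intercal a=1$ should make each contraction $M\omega_i$ a nontrivial element of $\ker H_{\bar P}\setminus\operatorname{im}H_P^\intercal$. This is where the projection conditions enter: side-register data must contribute nothing to the contractions, or contribute only base stabilizers.
\item \emph{Allocate fault responses.} Each term of the product decomposition contributes to the contractions as follows.
 \begin{itemize}
 \item An inherited base-gadget bundle lives on a single copy (column $j$). It contributes its base response $e_{\mathsf G}(B)$ to exactly those $i$ with $j\in\operatorname{supp}\omega_i$, and by disjointness that is at most one $i$. Its flag response is still zero, since flags are read before the side tail.
 \item An amplifier-gadget bundle acts on one base qubit row. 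By Eq.~\eqref{eq:methods-amplifier-parity} it contributes to at most $|B|$ of the contractions, and each contribution is a weight-one vector in that row. Charge one unit to each such contraction.
 \item Side-tail faults and weight-one terms are handled by the projection conditions: each contributes at most one unit, to at most one contraction.
 \end{itemize}
 This yields base decompositions of $M\omega_i$ with counts $C_i$ and $\sum_i C_i\le K$. Applying $D_P^Q$ to each gives $K\ge D_P^AD_P^Q$.
\end{enumerate}

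\textbf{Upper bound.} Take a minimum undetected base history $F$ and a minimum-weight amplifier logical $\bar a$ of weight $d_P^A$. Place a copy of $F$ in each base-window copy indexed by $\operatorname{supp}\bar a$, and leave the amplifier windows fault-free. The central error is $E_P(F)\otimes\bar a$, which is a nontrivial product logical.

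Now check the record.
\begin{itemize}
\item Base gadget syndromes and flags vanish copywise.
\item Amplifier checks of type $P$ see $E\otimes G_P\bar a$-type parities, which vanish because $\bar a\in\ker G_{\bar P}$ up to the appropriate type convention.
\item The side tails must not flip any readout; this is exactly the history-lifting condition.
\end{itemize}
The lifted history has $d_P^A|F|$ faults.

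\textbf{Equality.} Setting $D_P^A=d_P^A$ and $D_P^Q=d_{\mathrm{circ},P}(\mathsf S_Q)$ makes the two bounds coincide. The inherited certificate then equals $d_{\mathrm{circ},P}(\widetilde{\mathsf S})$, so it is attained.

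\textbf{Main obstacle.} The hardest step is step 2 combined with the allocation of side-tail and cross-window faults. We must show that contractions of an arbitrary response decomposition, not only of a genuine history, stay nontrivial and that nothing is double counted. The first thing to try is to phrase the projection conditions as a linear map $\pi_i:T_2\to Q_1$ with $\pi_i(\widetilde H_P^\intercal)\subseteq\operatorname{im}H_P^\intercal$ and $\pi_i(\ker\widetilde H_{\bar P})\subseteq\ker H_{\bar P}$. Then verify gadget by gadget that each term's image under $(\pi_i)_i$ has total weight at most its assigned count.
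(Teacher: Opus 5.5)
Your architecture matches the paper's: history reduction, then contraction plus allocation for certificate transfer, then lifting a minimal base history along a minimum-weight amplifier logical. Your allocation bullets also match Appendix~\ref{app:common-budgets}.

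\textbf{Gap in Step~2 (nontriviality of the contractions).} This is a genuine gap, and it is not where you place it. Whether $M\omega_i$ is nontrivial has nothing to do with the projection conditions or with the chosen decomposition. By linearity, $M\omega_i$ depends only on the total pattern $\widetilde e$, so its nontriviality is a purely algebraic fact about the product code.
\begin{itemize}
\item Step~1 misstates the reduction. $(M,0,0)$ need not represent $[\widetilde e]$. When $V\neq0$ it is not even a normalizer element, since $H_{\bar P}M=VG_P\neq0$ by full row rank of $G_P$.
\item More seriously, the amplifier logical $a$ you feed into Definition~\ref{def:transferable-amplifier-certificate} cannot be an arbitrary amplifier logical ``appearing in'' $[\widetilde e]$. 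Take $k_A>1$ and $[\widetilde e]=[u]\otimes[\ell^A_{P,1}+\ell^A_{P,2}]$, and choose $a=\ell^A_{P,1}$. Then $\omega=\ell^A_{\bar P,1}+\ell^A_{\bar P,2}$ satisfies $G_P\omega=0$ and $\omega^\intercal a=1$, yet $[M\omega]=[u]+[u]=0$.
\item The missing idea is Lemma~\ref{lem:common-central-pairing}. Fix a base opposite-type logical $x\in\ker H_P$ and set $a:=M^\intercal x$. Showing this $a$ is a nontrivial amplifier $P$ logical is where full row rank enters: solve $BG_{\bar P}^\intercal=U$, write $M+H_P^\intercal B=CG_P+NL$, deduce $V=H_{\bar P}C$ and $H_{\bar P}N=0$, and pick $x$ with $N^\intercal x\neq0$.
\item With this $a$, nontriviality takes one line: $H_{\bar P}M\omega_i=VG_P\omega_i=0$ and $x^\intercal M\omega_i=\omega_i^\intercal a=1$.
\item Your proposed remedy does not close the gap. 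A map $\pi_i$ sending cycles to cycles and boundaries to boundaries only shows $M\omega_i$ is a logical or a stabilizer. It cannot exclude $M\omega_i\in\operatorname{im}H_P^\intercal$.
\end{itemize}

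\textbf{Gap in the upper-bound record check.} You equate ``the side tails must not flip any readout'' with the history-lifting condition. In the paper, that condition consists of timing hypotheses: synchronized base cores, then amplifier cores, side~1 finishing before side~2 samples, deferred readouts, and embedded fault locations. The vanishing record is derived from these; the projection conditions alone do not give the lift.
\begin{itemize}
\item The missing computation: copying $F$ into every active slice gives the same residual same-type syndrome component $b$ in each active copy. So the side emissions onto each newest side coordinate sum to $b(G_{\bar P}\bar a)_\nu=0$.
\item Side~1 completes this cancellation before the opposite-type tails in side~2 sample that coordinate. Hence the newest side data are clean at every round boundary.
\item Only then do the deferred base readouts reproduce the base record, and induction over rounds gives a zero complete record. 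Your claim that base syndromes vanish ``copywise'' is not justified until this is shown, because those readouts occur after the side tails.
\item Your amplifier-check bullet mixes the types. The checks that see the copied $P$ error are the $\bar P$-type amplifier checks, sampling $(e_r)_q(G_{\bar P}\bar a)_\nu=0$. Same-type amplifier cores acquire no propagating syndrome component.
\end{itemize}
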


The full-row-rank requirement applies only to the amplifier; the base-code checks may be dependent. The specified flagged $\code{4,2,2}$, canonical HGP with full-row-rank seed matrices, and rotated-surface extraction circuits satisfy $D_P^A=d_P^A$.

For a sequence of amplifiers $\AAA_1,\ldots,\AAA_t$, preserving the required SJ interface at every stage gives
\begin{equation}
    \left(\prod_{j=1}^{t}D_P^{\AAA_j}\right)D_P^{(0)}
    \leq d_{\mathrm{circ},P}(\mathsf S^{(t)})
    \leq
    \left(\prod_{j=1}^{t}d_P^{\AAA_j}\right)
    d_{\mathrm{circ},P}(\mathsf S^{(0)}).
    \label{eq:mixed-circuit-bounds}
\end{equation}
Once the amplifier certificates and interface conditions are established, only the initial base-code certificate requires an independent verification. Subsequent certificates follow by transfer.

Writing $d_{\mathrm{circ}}=\min_{P\in\{X,Z\}}d_{\mathrm{circ},P}$, a fixed symmetric amplifier with $D_X^A=D_Z^A=d_X^A=d_Z^A=d_A$ gives exact scalar recursion whenever both seed sectors have a certificate at the common value $D_0=d_{\mathrm{circ}}(\mathsf S^{(0)})$:
\begin{equation}
    d_{\mathrm{circ}}(\mathsf S^{(t)})=d_A^tD_0.
    \label{eq:common-scalar-recursion}
\end{equation}
Separate attainment in both seed sectors is not required.

\subsection{Logical coordinates and measurement maps}
\label{subsec:addressability-methods}

The inherited labels $(\mu,a)$ in Eq.~\eqref{eq:canonical-basis-product} identify the logical qubits of the amplified code. Here we specify how compatible base-code permutations act on these coordinates and how surgery maps impose logical constraints at all amplifier labels or at a selected label. We retain the chain conventions and full-row-rank assumption on the amplifier check matrices of Methods~\ref{subsec:code-distance-amplifier}. We write $\mathcal T$ for the full tensor complex underlying $\TTT$ and construct surgery maps before taking the central CSS truncation.

A base data permutation $\Pi_1$ is compatible with the displayed check presentation when check-row permutations $\Pi_0$ and $\Pi_2$ satisfy
\begin{equation}
    H_X\Pi_1=\Pi_0H_X,
    \qquad H_Z^{\intercal}\Pi_2=\Pi_1H_Z^{\intercal}.
    \label{eq:permutation-chain-condition}
\end{equation}
In the sector ordering of Eq.~\eqref{eq:tensor-qubit-sectors}, its lifted data permutation is
\begin{equation}
    \widetilde\Pi_2=\operatorname{diag}\bigl(
    \Pi_1\otimes I_{A_1},\,
    \Pi_2\otimes I_{A_0},\,
    \Pi_0\otimes I_{A_2}\bigr).
    \label{eq:lifted-data-permutation}
\end{equation}
This induces the same base logical action at every amplifier label. The check-row condition is essential because base checks index physical side data; preservation of check row spaces alone does not guarantee such a physical-permutation lift.

To specify logical measurements, we begin with a base surgery and lift the map defining its logical constraints. For a $Z$-type homological measurement, we use the mapping-cone construction of Ref.~\cite{IdeGowda2025} in our chain convention. Let
\begin{equation}
    \mathcal{U} :U_1\xrightarrow{\partial^U_1}U_0\xrightarrow{\partial^U_0}U_{-1}
\end{equation}
be an auxiliary complex and let $f:\mathcal U\to\QQQ$ be a chain map. Its potentially nonzero components are $f_1:U_1\to Q_1$ and $f_0:U_0\to Q_0$, satisfying $H_Xf_1=f_0\partial^U_1$; all other components vanish. The merged complex is
\begin{equation}
    \begin{gathered}
    \mathcal M=\operatorname{Cone}(f),\qquad M_r=Q_r\oplus U_{r-1},\\
    \partial^M_r=\begin{pmatrix}\partial^Q_r&f_{r-1}\\0&\partial^U_{r-1}\end{pmatrix}.
    \end{gathered}
    \label{eq:surgery-cone}
\end{equation}
The base logical classes promoted to merged-code stabilizers form the subspace
\begin{equation}
    \mathcal W=\operatorname{im}\bigl(f_*:H_1(\mathcal U)\to H_1(\QQQ)\bigr).
    \label{eq:surgery-promoted-space}
\end{equation}
Set $m=\dim\mathcal W$ and $\kappa=\dim\ker\bigl(f_{*,0}:H_0(\mathcal U)\to H_0(\QQQ)\bigr)$. The merged code has $k-m+\kappa$ logical qubits: $m$ base logical degrees of freedom are removed, while the auxiliary complex contributes $\kappa$ additional ones. These must be retained explicitly or treated through a specified gauge-fixing prescription.

To impose the base constraints at every amplifier label, the complete lift tensors the entire map, $F=f\otimes I_{\AAA}:\mathcal U\otimes\AAA\to\mathcal T$. Regrouping tensor summands gives
\begin{equation}
    \operatorname{Cone}(f\otimes I_{\AAA})
    \cong\operatorname{Cone}(f)\otimes\AAA.
    \label{eq:cone-tensor-identity}
\end{equation}
This chain isomorphism identifies the degree-two merged code with the amplification of the base merged code. It promotes $\mathcal W\otimes H_1(\AAA)$ and has $(k-m+\kappa)k_A$ logical qubits. All components of the chain map, including the couplings outside the central register, are required for this identity.

To select one amplifier logical direction, choose $\gamma\in\ker G_X$ with $[\gamma]\ne0$ in $H_1(\AAA)$ and define the shifted auxiliary complex by $\mathcal U[1]_r=U_{r-1}$. The map
\begin{equation}
    F^\gamma_r:\mathcal U[1]_r\longrightarrow T_r,
    \qquad u\longmapsto f_{r-1}(u)\otimes\gamma
    \label{eq:selective-surgery-map}
\end{equation}
is a chain map. The degree-two truncation of $\operatorname{Cone}(F^\gamma)$ promotes precisely $\mathcal W\otimes\operatorname{span}\{[\gamma]\}$ and has $kk_A-m+\kappa$ logical qubits. Taking $\gamma=\ell^A_{Z,a}$ selects the inherited coordinate $a$. In particular, choosing $\mathcal W=\operatorname{span}\{[\ell_{Z,\mu}]\}$ targets the logical operator $\bar Z_{\mu,a}$; choosing a base logical parity in $\mathcal W$ targets that parity at the same amplifier label.

The selective map also gives an explicit stabilizer presentation. Embed $F^\gamma_2=f_1\otimes\gamma$ into the central summand of $T_2$, and $F^\gamma_1=f_0\otimes\gamma$ into $Q_0\otimes A_1\subseteq T_1$. Using the product checks in Eqs.~\eqref{eq:tensor-HX} and~\eqref{eq:tensor-HZ}, the merged code on data $T_2\oplus U_0$ has checks
\begin{equation}
    \begin{aligned}
    H_X^\gamma&=\begin{pmatrix}\widetilde H_X&F^\gamma_1\\0&\partial^U_0\end{pmatrix},\\
    H_Z^\gamma&=\begin{pmatrix}\widetilde H_Z&0\\(F^\gamma_2)^{\intercal}&(\partial^U_1)^{\intercal}\end{pmatrix}.
    \end{aligned}
    \label{eq:selective-surgery-checks}
\end{equation}
CSS commutativity follows from $\widetilde H_XF^\gamma_2=F^\gamma_1\partial^U_1$ and $\partial^U_0\partial^U_1=0$. Separate auxiliary copies with images $\mathcal W_a$ and independent classes $[\gamma_a]$ promote $\bigoplus_a\mathcal W_a\otimes\operatorname{span}\{[\gamma_a]\}$.

The support of the selected amplifier representative controls the connection cost. Writing $|\gamma|$ for Hamming weight and $\operatorname{cw}$ and $\operatorname{rw}$ for maximum column and row weights,
\begin{equation}
    \begin{aligned}
        \operatorname{cw}(f_i\otimes\gamma)&=|\gamma|\operatorname{cw}(f_i),\\
        \operatorname{rw}(f_i\otimes\gamma)&=\operatorname{rw}(f_i).
    \end{aligned}
    \label{eq:selective-connection-weights}
\end{equation}
Thus, selection can increase connection-check weights or qubit degrees, depending on the block. The multiplicative overhead from $\gamma$ is bounded for a fixed amplifier, but can grow when $\gamma$ is itself a recursively amplified logical representative.

These constructions specify algebraic logical access and can be reapplied to the complete current code at each amplification step. Appendix~\ref{app:logical-addressability} proves the lifts and states the distance guarantees for the merged codes; fault tolerance of the full measurement protocol additionally requires an analysis of preparation and merge--split boundaries.

\section*{Acknowledgements}

We are grateful to {Nathanan Tantivasadakarn} for valuable discussions. Z.W. thanks {Christopher A. Pattison} for helpful feedback on the manuscript. {Y.-A.C.} and {Z.W.} are supported by the National Natural Science Foundation of China (Grant No.~12474491) and the Central University Fundamental Research Funds (Peking University). The Berlin team has been supported by the BMFTR (QSolid, MUNIQC-Atoms, PasQuops), the DFG (CRC 183, BoLaCo, and SPP 2514), the Quantum Flagship (Millenion, PasQuanS2), the Munich Quantum Valley, Berlin Quantum, QuantERA (SDPCode), and the European Research Council (DebuQC).

\section*{Author contributions}

Z.W. conceived this project. Z.L., B.G., and Z.W. carried out the preliminary simulations and proved circuit distance bounds with inspiring guidance from Y.-A.C. and J.E. Z.L. devised the staged joint syndrome extraction circuit and carried out the main simulations presented in the manuscript. B.G. came up with the notion of certificate to unify circuit distance proofs and proved multiplication of fault-response certificates. All authors contributed extensively to writing the manuscript.

\section*{Data availability}

Code and data for the circuit-level simulations and the computation of the fault-response certificates $D_P$ are available at \url{https://github.com/zijian-liang/dp_certificates_github}.

\bibliographystyle{apsrev4-2}
\bibliography{references}

\appendix
\renewcommand{\theHsection}{appendix.\Alph{section}}
\clearpage

\section{Review of K\"unneth formula and code-distance amplification}
\label{app:distance-amplification}

In this appendix, we review the basic notions behind the tensor construction of CSS product codes and the code-distance multiplication results from Ref.~\cite{AudouxCouvreur2019} for tensoring two CSS codes. We first recall the K\"unneth formula and its interpretation for CSS tensor products. All chain complexes considered below are finite-dimensional over $\mathbb{F}_2$, with fixed coordinate bases. For a chain complex $\mathcal{C}$ with differentials $\partial_r^{\mathcal{C}}:C_r\rightarrow C_{r-1}$, its homology and cohomology are
\begin{align}
    H_r(\mathcal{C})&=\ker\partial_r^{\mathcal{C}}\big/\operatorname{im}\partial_{r+1}^{\mathcal{C}},\\
    H^r(\mathcal{C})&=\ker\bigl(\partial_{r+1}^{\mathcal{C}}\bigr)^\intercal\big/\operatorname{im}\bigl(\partial_r^{\mathcal{C}}\bigr)^\intercal.
\end{align}
For two chain complexes $\QQQ$ and $\AAA$, their total tensor-product complex $\mathcal{T}$ has chain groups and differential
\begin{equation}
    T_r=\bigoplus_{i+j=r}Q_i\otimes A_j
    \tand
    \partial_{\mathcal{T}}=\partial_{\QQQ}\otimes I+I\otimes\partial_{\AAA}.
\end{equation}
Since the coefficient ring is a field, the K\"unneth formula contains no torsion terms and gives
\begin{align}
    H_r(\mathcal{T})&\cong\bigoplus_{i+j=r}H_i(\QQQ)\otimes H_j(\AAA),
    \label{eq:app-kunneth-general-homology}\\
    H^r(\mathcal{T})&\cong\bigoplus_{i+j=r}H^i(\QQQ)\otimes H^j(\AAA).
    \label{eq:app-kunneth-general-cohomology}
\end{align}
These isomorphisms are induced by tensoring representatives. For cycles $u$ and $v$, the tensor $[u]\otimes[v]$ is mapped to $[u\otimes v]$, and similarly for cocycles. In particular, tensor products of homology bases give a homology basis of the total complex.

We apply this construction to the CSS complexes of the base code and the amplifier,
\begin{align}
    \QQQ:&\quad Q_2\xrightarrow{H_Z^\intercal}Q_1\xrightarrow{H_X}Q_0,\\
    \AAA:&\quad A_2\xrightarrow{G_Z^\intercal}A_1\xrightarrow{G_X}A_0,
\end{align}
with parameters $\code{n,k,d_X,d_Z}$ and $\code{n_A,k_A,d_X^A,d_Z^A}$, respectively. The $Z$- and $X$-logical spaces of the base code are
\begin{equation}
    H_1(\QQQ)=\ker H_X/\operatorname{im}H_Z^\intercal,\qquad
    H^1(\QQQ)=\ker H_Z/\operatorname{im}H_X^\intercal,
\end{equation}
and analogously for $\AAA$. The code distances are the minimum Hamming weights of representatives of nonzero classes in these spaces, respectively.

The CSS tensor-product code $\TTT=\QQQ\otimes\AAA$ is defined by the central truncation introduced in Eq.~\eqref{eq:tensor-product-complex}
\begin{equation}
    T_3\xrightarrow{\widetilde H_Z^\intercal}T_2\xrightarrow{\widetilde H_X}T_1,
    \label{eq:app-central-truncation}
\end{equation}
where
\begin{equation}
    T_2=(Q_1\otimes A_1)\oplus(Q_2\otimes A_0)\oplus(Q_0\otimes A_2).
    \label{eq:app-product-qubit-space}
\end{equation}
Thus, its $Z$- and $X$-logical spaces are $H_2(\mathcal{T})$ and $H^2(\mathcal{T})$, respectively. The central truncation does not change either of these groups, since they depend only on the maps adjacent to $T_2$. In general, Eq.~\eqref{eq:app-kunneth-general-homology} gives
\begin{align}
    H_2(\mathcal{T})\cong{}&
    \bigl(H_2(\QQQ)\otimes H_0(\AAA)\bigr)
    \nonumber\oplus
    \bigl(H_1(\QQQ)\otimes H_1(\AAA)\bigr)
    \nonumber\\
    &\oplus\bigl(H_0(\QQQ)\otimes H_2(\AAA)\bigr).
    \label{eq:app-kunneth-all-sectors}
\end{align}
As $G_X$ and $G_Z$ have full row rank, namely, $H_0(\AAA)=H_2(\AAA)=0$, the K\"unneth decompositions in Eqs.~\eqref{eq:app-kunneth-general-homology} and~\eqref{eq:app-kunneth-general-cohomology} reduce to
\begin{align}
    H_2(\mathcal{T})&\cong H_1(\QQQ)\otimes H_1(\AAA),
    \label{eq:app-kunneth-homology-group}\\
    H^2(\mathcal{T})&\cong H^1(\QQQ)\otimes H^1(\AAA).
    \label{eq:app-kunneth-cohomology-group}
\end{align}
So, $\TTT$ encodes $\widetilde k=kk_A$ logical qubits, and every logical class admits a representative supported entirely in the space $Q_1\otimes A_1$.

A lower bound on the product-code distance can be obtained by contracting a product-code logical operator against multiple representatives of an amplifier logical class. The relevant quantity is the overlap of these representatives. For a nonempty family $\Omega\subseteq\mathbb{F}_2^{n_A}$, define
\begin{equation}
    \operatorname{overlap}(\Omega):=\max_{a\in\{1,\ldots,n_A\}}
    \bigl|\{\omega\in\Omega:\omega_a=1\}\bigr|.
    \label{eq:app-logical-overlap}
\end{equation}
Thus, the overlap is the maximum number of family members acting on any one physical qubit. The following lemma states the basis-aware version of the overlap lower bound in Lemma~2.7 and Theorem~2.8 of Ref.~\cite{AudouxCouvreur2019}, together with the tensor-product upper bound of Corollary~2.14 of the same reference.

\begin{lemma}[Code-distance amplification] \label{lem:distance-amplification}
    Let $\QQQ$ and $\AAA$ be as above, with $H_0(\AAA)=H_2(\AAA)=0$. Let $\{\ell_{X,i}\}_{i=1}^{k_A}$ and $\{\ell_{Z,i}\}_{i=1}^{k_A}$ be dual $X$- and $Z$-logical operator bases of $\AAA$. For each $P\in\{X,Z\}$ and $i=1,\ldots,k_A$, suppose there exists a family $\Omega_{P,i}$ of physical representatives of the logical class $[\ell_{P,i}]$ such that
    \begin{equation}
        |\Omega_{P,i}|\geq\Sigma_P
        \tand \operatorname{overlap}(\Omega_{P,i})\leq o_P,
    \end{equation}
    where $\Sigma_P\geq o_P>0$ are integers. Then, $\AAA$ is an $(\alpha_X,\alpha_Z)$-distance amplifier with certified factors
    \begin{equation}
        \alpha_X=\frac{\Sigma_Z}{o_Z}\tand\alpha_Z=\frac{\Sigma_X}{o_X}.
        \label{eq:certified-amplification-factors}
    \end{equation}
    More precisely, the code distances of $\TTT=\QQQ\otimes\AAA$ satisfy
    \begin{equation}
        \left\lceil\alpha_Pd_P\right\rceil\leq\widetilde d_P\leq d_Pd_P^A,
        \qquad P\in\{X,Z\},
        \label{eq:amplifier-distance-bounds}
    \end{equation}
    where $d_P$ and $d_P^A$ are the code distances of the base code and amplifier, respectively.
\end{lemma}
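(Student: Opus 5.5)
The plan is to prove the two bounds separately, working in the $Z$ sector with $H_2(\mathcal T)$; the $X$ sector is identical after exchanging the roles of $H_X,H_Z$, of $G_X,G_Z$, and of homology and cohomology. In the paper's convention, $Z$-logicals of $\QQQ$ are classes in $H_1(\QQQ)=\ker H_X/\operatorname{im}H_Z^\intercal$. To bound them from below we contract against $X$-type amplifier representatives, which is why $\alpha_Z=\Sigma_X/o_X$ involves the $X$ families.

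\emph{Upper bound.} Take a minimum-weight base representative $u\in\ker H_X$ with $[u]\neq0$ and $|u|=d_Z$, and an amplifier representative $v\in\ker G_X$ with $[v]\neq0$ and $|v|=d_Z^A$. Set $x=(u\otimes v,0,0)$. By Eq.~\eqref{eq:tensor-HX}, $\widetilde H_Xx=(H_Xu\otimes v,\,u\otimes G_Xv)=0$. By the Künneth isomorphism \eqref{eq:app-kunneth-homology-group}, which is induced by tensoring representatives, $[x]$ corresponds to $[u]\otimes[v]\neq0$. Since $|x|=d_Zd_Z^A$, this gives $\widetilde d_Z\leq d_Zd_Z^A$.

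\emph{Lower bound via contraction.} Let $x=(x_c,x_L,x_R)\in\ker\widetilde H_X$ with $[x]\neq0$. Under \eqref{eq:app-kunneth-homology-group}, expand $[x]=\sum_i c_i\otimes[\ell_{Z,i}]$ with $c_i\in H_1(\QQQ)$, and fix an index $i$ with $c_i\neq0$. For each $\omega\in\Omega_{X,i}$ define the contraction $y_\omega=(I\otimes\omega^\intercal)x_c\in\mathbb F_2^{n}$. The proof then proceeds in three steps.
\begin{enumerate}
\item[(i)] \emph{$y_\omega$ is a base cycle.} Apply $I\otimes\omega^\intercal$ to the first block row of $\widetilde H_Xx=0$, namely $(H_X\otimes I)x_c+(I\otimes G_Z^\intercal)x_R=0$. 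Since $\omega\in\ker G_Z$, this yields $H_Xy_\omega=0$.
\item[(ii)] \emph{$[y_\omega]=c_i$ for every $\omega$.} For any base cocycle $\xi\in\ker H_Z$, the vector $(\xi\otimes\omega,0,0)$ lies in $\ker\widetilde H_Z$, so it is an $X$-logical of $\mathcal T$. Also $\xi^\intercal y_\omega=(\xi\otimes\omega,0,0)^\intercal x$. Changing $\omega$ within its class by $G_X^\intercal r$ changes this cocycle by $(\xi\otimes G_X^\intercal r,0,0)=\widetilde H_X^\intercal(0,\xi\otimes r)$, which is a coboundary because $H_Z\xi=0$. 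Hence the pairing depends only on $[\xi]$ and $[\ell_{X,i}]$. By the cohomological Künneth isomorphism and duality $\ell_{X,i}^\intercal\ell_{Z,j}=\delta_{ij}$, it equals $\langle[\xi],c_i\rangle$. Nondegeneracy of the homology--cohomology pairing of $\QQQ$ then gives $[y_\omega]=c_i\neq0$, so $|y_\omega|\geq d_Z$.
\item[(iii)] \emph{Counting.} Each $y_\omega$ is a mod-two sum of the columns of $x_c$, viewed as an $n\times n_A$ matrix, indexed by $\operatorname{supp}\omega$. Hence $|y_\omega|$ is at most the number of nonzero entries of $x_c$ in those columns. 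Summing over $\omega\in\Omega_{X,i}$, each column is counted at most $\operatorname{overlap}(\Omega_{X,i})\leq o_X$ times. Therefore $\Sigma_Xd_Z\leq\sum_\omega|y_\omega|\leq o_X|x_c|\leq o_X|x|$. Integrality of $|x|$ then gives $\widetilde d_Z\geq\lceil(\Sigma_X/o_X)d_Z\rceil$.
\end{enumerate}

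The main obstacle is step (ii). One must check that the contraction is independent of the chosen representative $\omega$, and that the side registers $x_L,x_R$ and the coboundary ambiguity of $x$ do not affect the class. I would handle this by expressing the contraction entirely through the pairing with the explicit product cocycles $(\xi\otimes\omega,0,0)$, as above. Pairings of $\ker\widetilde H_Z$ with $\ker\widetilde H_X$ are well-defined on classes, so no further bookkeeping of the side data is needed.
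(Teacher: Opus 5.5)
Your proposal is correct and follows essentially the same route as the paper. The upper bound comes from tensoring minimum-weight representatives. The lower bound contracts the central component against each $\omega\in\Omega_{X,i}$, shows every contraction represents the nonzero class $c_i$, and sums with the overlap bound $\Sigma_X d_Z\le o_X|x_c|\le o_X|x|$. The only difference is your step (ii): by pairing with the product cocycles $(\xi\otimes\omega,0,0)$ and using $(\ker H_Z)^\perp=\operatorname{im}H_Z^\intercal$, you explicitly verify the identity $[y_\omega]=\sum_j(\omega^\intercal\ell_{Z,j})c_j=c_i$, which the paper asserts in one line via the dual pairing. This also makes clear why the side registers and the choice of representatives do not affect the contraction class.
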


\begin{proof}
    We first prove the lower bound on $\widetilde d_Z$. Let $z$ be a nontrivial $Z$-logical representative of $\TTT$, and denote its component in $Q_1\otimes A_1$ by $z_{11}$. Using Eq.~\eqref{eq:app-kunneth-homology-group}, we identify its logical class with
    \begin{equation}
        [z]=\sum_{j=1}^{k_A}[u_j]\otimes[\ell_{Z,j}],
        \qquad [u_j]\in H_1(\QQQ).
    \end{equation}
    Since $[z]\neq 0$, there exists an index $i$ such that $[u_i]\neq 0$. For each $\omega\in\Omega_{X,i}$, define the contraction
    \begin{equation}
        z_\omega:=(I_{Q_1}\otimes\omega^\intercal)z_{11}.
        \label{eq:app-logical-contraction}
    \end{equation}
    Because $\omega$ represents the logical class $[\ell_{X,i}]$, it satisfies $G_Z\omega=0 \tand \omega^\intercal\ell_{Z,j}=\delta_{ij}$. Its induced action on logical classes is therefore determined by the dual pairing
    \begin{equation}
        [z_\omega]=\sum_{j=1}^{k_A}\bigl(\omega^\intercal\ell_{Z,j}\bigr)[u_j]
        =[u_i]\neq 0.
    \end{equation}
    Thus, every $z_\omega$ is a nontrivial $Z$-logical representative of $\QQQ$, and $|z_\omega|\geq d_Z$. To bound the total weight of these contractions, write $z_{11}=\sum_{a=1}^{n_A}z_a\otimes e_a$, where $\{e_a\}$ is the physical-coordinate basis of $A_1$ and $z_a\in Q_1$. Then,
    \begin{equation}
        z_\omega=\sum_{a:\,\omega_a=1}z_a\tand
        |z_{11}|=\sum_{a=1}^{n_A}|z_a|.
    \end{equation}
    Each amplifier coordinate occurs in at most $o_X$ members of $\Omega_{X,i}$. Using the triangle inequality for Hamming weight gives
    \begin{align}
        \Sigma_Xd_Z&\leq\sum_{\omega\in\Omega_{X,i}}|z_\omega|
        \nonumber\\
        &\leq\sum_{a=1}^{n_A}|z_a|\,
        \bigl|\{\omega\in\Omega_{X,i}:\omega_a=1\}\bigr|
        \nonumber\\
        &\leq o_X|z_{11}|\leq o_X|z|.
        \label{eq:app-overlap-weight-bound}
    \end{align}
    Minimizing over all nontrivial $Z$-logical representatives yields
    \begin{equation}
        \widetilde d_Z\geq\left\lceil\frac{\Sigma_X}{o_X}d_Z\right\rceil
        =\left\lceil\alpha_Zd_Z\right\rceil.
    \end{equation}
    Applying the same argument to cohomology, using Eq.~\eqref{eq:app-kunneth-cohomology-group} and the families $\Omega_{Z,i}$, yields
    \begin{equation}
        \widetilde d_X\geq\left\lceil\frac{\Sigma_Z}{o_Z}d_X\right\rceil
        =\left\lceil\alpha_Xd_X\right\rceil.
    \end{equation}
    For the upper bounds, fix $P\in\{X,Z\}$ and choose minimum-weight nontrivial $P$-logical representatives $l$ and $l^A$ of $\QQQ$ and $\AAA$, respectively. The operator represented by $l\otimes l^A$, supported entirely in $Q_1\otimes A_1$, has a nonzero logical class by the corresponding homological or cohomological K\"unneth decomposition. Its weight is
    \begin{equation}
        |l\otimes l^A|=|l|\,|l^A|=d_Pd_P^A,
    \end{equation}
    so $\widetilde d_P\leq d_Pd_P^A$.
\end{proof}

The exchange of $X$ and $Z$ in Eq.~\eqref{eq:certified-amplification-factors} reflects the pairing between opposite-type logical operators: families of $X$-logical representatives certify the $Z$-distance lower bound, and conversely. The certified factors need not equal the amplifier distances. When $\alpha_P=d_P^A$, the lower and upper bounds coincide and establish
\begin{equation}
    \widetilde d_P=d_Pd_P^A.
\end{equation}

\section{Fault response and circuit-distance amplification}
\label{app:circuit_distance_amplification}

We first state how faults in individual extraction gadgets contribute to a final data error, then prove how the resulting certificates transfer through the SJ circuit. Each extraction gadget uses a private syndrome ancilla, controlling the data CNOTs for an $X$ check and receiving them for a $Z$ check. The permitted local flags have zero ideal outcome and are unaffected by incoming data errors. With these directions an incoming data Pauli does not spread to another data qubit: the syndrome component it induces cannot propagate through later data or permitted flag couplings. Pure-$P$ faults in an opposite-type gadget, and any other allowed locations outside same-type gadgets, contribute at most one data-$P$ singleton each. We assume binary-linear fault propagation and ideal-correct extraction. A pure-$P$ fault has only $I$ or $P$ components on the location's qubits, in the corresponding CSS Pauli frame.

We use the response decompositions of Definition~\ref{def:joint-response-certificate}. A syndrome-auxiliary fault may create a multi-qubit suffix at assigned fault count one; flag cancellations must therefore be handled within complete bundles. Incoming errors are excluded when computing each gadget's own response.

\begin{lemma}[Reduction of a complete history] \label{lem:common-history-bounds}
    Every pure-$P$ history $F$ with $\mathsf D(F)=0$ admits a response decomposition of $E_P(F)$ with $K\leq|F|$.
\end{lemma}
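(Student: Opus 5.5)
We propose to prove Lemma~\ref{lem:common-history-bounds} by grouping the faulty locations of $F$ by where they occur and using binary-linear fault propagation to write $E_P(F)$ as a sum of per-location contributions. Because propagation is linear over $\mathbb F_2$, the final data pattern $E_P(F)$ is the modulo-two sum of the responses of the individual faulty locations, each propagated through the remainder of the circuit (including later rounds) to the end. So it suffices to partition $F$ into groups, each either assigned to a single same-type gadget as a bundle with zero net flag response, or contributing at most one weight-one data pattern, in such a way that the total assigned count does not exceed $|F|$.

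\textbf{Step 1: propagation of incoming errors.} We first show that a data-$P$ error entering any gadget, or any later part of the circuit, does not spread to other data qubits. This follows from the CNOT directions and flag assumptions stated at the beginning of this appendix. Consequently, each fault's contribution to $E_P(F)$ is determined by its response at the end of the gadget (or location) in which it occurs, transported unchanged to the final time; data errors only accumulate. We also need that the ideal correctness of extraction and the absence of data-to-data spreading make later rounds act as the identity on data errors. Hence $E_P(F)=\sum_{\mathsf G}e_{\mathsf G}(F\cap\mathsf G)+\sum(\text{singletons})$, where each $e_{\mathsf G}(F\cap\mathsf G)$ is the response computed with incoming errors excluded, as prescribed.

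\textbf{Step 2: grouping.} For each same-type ($P$-check) gadget occurrence $\mathsf G$ in each round, we set $B_{\mathsf G}=F\cap\mathsf G$. Faults outside same-type gadgets — in opposite-type gadgets, on idle data, or at other allowed locations — each contribute at most one data-$P$ singleton by the stated assumption, so $s$ is at most the number of such faults. Incoming data errors do not trigger flags, because permitted flags are unaffected by incoming data errors. Therefore each flag outcome depends only on faults within its own gadget, giving $f(F)=\bigoplus_{\mathsf G}f_{\mathsf G}(B_{\mathsf G})$ with disjoint coordinates. Since $\mathsf D(F)=0$ includes every individual flag outcome, we get $f_{\mathsf G}(B_{\mathsf G})=0$ for every $\mathsf G$, so each $B_{\mathsf G}$ is an allowed zero-flag bundle. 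Summing counts gives $K=\sum|B_{\mathsf G}|+s\le|F|$.

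\textbf{Step 3: membership conditions.} Zero final syndrome gives $E_P(F)\in\ker H_{\bar P}$, so the decomposition is of the required form. Nontriviality is not needed for this lemma itself, but it is used afterwards to conclude $D_P\le d_{\mathrm{circ},P}$.

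We expect the main obstacle to be Step 1 combined with the flag separation. Specifically, we must justify carefully that a fault in one gadget cannot alter the flag record or the data response of another gadget. This requires that propagated syndrome components do not reach other ancillas through data CNOTs, which holds because data act as controls for $Z$ checks and as targets for $X$ checks in the pure-$P$ frame. The treatment of two-qubit CNOT faults straddling a data qubit and an ancilla also needs care; we will assign such faults to the gadget containing that CNOT, with count one.
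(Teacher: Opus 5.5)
Your proposal is correct and follows essentially the paper's proof: group the faults of $F$ by $P$-check gadget occurrence, use that incoming data errors do not affect flags (so $\mathsf D(F)=0$ forces each bundle to have zero flag response), let every remaining fault supply at most one singleton, and conclude $K\leq|F|$ from linearity, the absence of data-to-data propagation, and disjointness of locations. One small correction to your closing remark: data-$P$ errors do reach opposite-type syndrome ancillas and flip their ordinary outcomes. The relevant fact is only that such errors never enter same-type ancillas or flags and never spread to other data qubits, which is exactly the stated propagation assumption.
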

\begin{proof}
    Group the faults within each occurrence of a $P$-check gadget. Incoming data errors do not affect its flags, so each resulting bundle has zero flag response. Each remaining fault supplies at most one singleton. Linearity and the absence of subsequent data-to-data propagation make these responses sum to $E_P(F)$. Their underlying faulty locations are disjoint, giving $K\leq|F|$.
\end{proof}

It then follows that $D_P\leq d_{\mathrm{circ},P}(\mathsf S)$. If every qubit of a minimum-weight logical has an allowed data-only fault location after its last noisy interaction, placing these faults in the final round also gives $d_{\mathrm{circ},P}(\mathsf S)\leq d_P(\QQQ)$. A data-only output fault on each qubit's last syndrome--data CNOT is sufficient; full row rank alone does not imply this coverage. The certificate depends on each gadget's local order---its sequence of data and flag couplings and flag readout---and its allowed fault responses. It is unchanged by global interleavings that preserve these responses and the stated propagation assumptions.

In the SJ construction, the central cores precede the newest side tails, and all relevant flags are read before those tails. Older side data remain part of the current base code. Ideal correctness follows by reordering gates with
\begin{equation}
    \operatorname{CX}_{d,z}\operatorname{CX}_{x,d}
    =\operatorname{CX}_{x,d}\operatorname{CX}_{d,z}\operatorname{CX}_{x,z}.
    \label{eq:common-cnot-exchange}
\end{equation}
Same-factor exchanges cancel by validity of the constituent schedules; the prescribed cross-factor order gives either no exchange or two identical canceling ancilla CNOTs. The flags disentangle ideally. The response conditions used below are explicit: discarding only the newest side data by $\pi$, every allowed zero-flag pure-$P$ product bundle $\widehat B$ projects as
\begin{equation}
    \pi e_{\widetilde{\mathsf G}}(\widehat B)=
    \begin{cases}
        e_{\mathsf G_Q}(B)\delta_b^\intercal,
        &\text{extended base check in copy }b,\\
        \delta_q e_{\mathsf G_A}(B)^\intercal,
        &\text{new coupling check at datum }q,
    \end{cases}
    \label{eq:common-gadget-projections}
\end{equation}
where $B$ is an allowed zero-flag bundle of the corresponding constituent gadget with $|B|\leq|\widehat B|$, and $\delta$ denotes a coordinate unit vector. These conditions include faults on newly added gates: deleting newest-tail faults leaves the central response and earlier flags unchanged.

The upper bound additionally uses synchronized copies of the complete base cores, followed by the amplifier cores and then the two side-coupling windows. The side~1 window finishes before the side~2 window samples the same side coordinates. Syndrome readouts are deferred until the required tails finish, and every base fault location embeds with the same allowed fault and count, including faults at deferred readouts. Ideal-correct interleaving within a core is permitted. These timing conditions ensure cancellation before sampling. The projection conditions alone do not imply a complete-history lift.

\subsection{Certificate multiplication and history lifting}
\label{app:common-budgets}

The lower bound contracts each product logical into base-code logical responses; the upper bound lifts an undetected base history. Write a product $P$-error as $E=(M,V,U)$, with $M\in\mathbb F_2^{n_Q\times n_A}$, $V\in\mathbb F_2^{m_{\bar P}^Q\times m_P^A}$, and $U\in\mathbb F_2^{m_P^Q\times m_{\bar P}^A}$. The side-register names exchange with $P$. Its normalizer equations and the generating forms of product $P$ stabilizers are
\begin{align}
    H_{\bar P}M+VG_P&=0,\qquad MG_{\bar P}^\intercal=H_P^\intercal U,
    \label{eq:common-central-normalizer}\\
    (H_P^\intercal B,0,BG_{\bar P}^\intercal),&\qquad
    (CG_P,H_{\bar P}C,0),
    \label{eq:common-central-stabilizers}
\end{align}
where $B,C$ have compatible dimensions. For an opposite-type amplifier logical $\omega$, the \emph{contraction} $M\omega$ sums the central columns on its support and satisfies $H_{\bar P}M\omega=0$. To ensure nontriviality, we first establish the following pairing fact.

\begin{lemma}[Central logical pairing] \label{lem:common-central-pairing}
    If $G_X,G_Z$ have full row rank, every nontrivial product $P$ logical $E=(M,V,U)$ admits $x\in\ker H_P$ for which $a=M^\intercal x$ is a nontrivial amplifier $P$ logical. Every $\omega\in\ker G_P$ with $\omega^\intercal a=1$ then gives a nontrivial base logical $M\omega$.
\end{lemma}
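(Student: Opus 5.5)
The plan is to prove the lemma directly from the normalizer equations~\eqref{eq:common-central-normalizer} and the stabilizer forms~\eqref{eq:common-central-stabilizers}. The one external input is the K\"unneth isomorphism~\eqref{eq:app-kunneth-homology-group}--\eqref{eq:app-kunneth-cohomology-group}, which holds under full row rank of $G_X,G_Z$. Throughout, the nondegenerate pairing between $P$-type classes and opposite-type classes (of the base and of the amplifier) detects nontriviality: a vector pairing to $1$ with some opposite-type logical cannot be a stabilizer.

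\textbf{Step 1: the contraction is a map on logical classes.} For $\omega\in\ker G_P$, consider $E\mapsto M\omega$. The first normalizer equation gives $H_{\bar P}M\omega=VG_P\omega=0$, so $M\omega$ is a base $P$-logical candidate. The stabilizer forms~\eqref{eq:common-central-stabilizers} have central parts $H_P^\intercal B$ and $CG_P$. These contract to $H_P^\intercal B\omega\in\operatorname{im}H_P^\intercal$ and to $CG_P\omega=0$, respectively. Hence $E\mapsto M\omega$ descends to a linear map from product $P$ classes to base $P$ classes. On a product representative $u\otimes\ell$ it evaluates to $(\omega^\intercal\ell)\,u$. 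By K\"unneth and linearity, if $[E]=\sum_j[u_j]\otimes[\ell_j]$ in dual amplifier bases, then $[M\omega]=\sum_j(\omega^\intercal\ell_j)[u_j]$.

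\textbf{Step 2: construct $x$.} Since $[E]\neq0$, some $[u_i]\neq0$. By nondegeneracy of the base pairing, choose $x\in\ker H_P$ with $x^\intercal u_i=1$. Set $a=M^\intercal x$. Using the second normalizer equation, $G_{\bar P}a=(MG_{\bar P}^\intercal)^\intercal x=U^\intercal H_P x=0$, so $a$ is a candidate amplifier $P$-logical. To show $a\notin\operatorname{im}G_P^\intercal$, take the dual representative $\omega=\ell_{\bar P,i}\in\ker G_P$ with $\omega^\intercal\ell_j=\delta_{ij}$. Step~1 gives $\omega^\intercal a=x^\intercal M\omega$. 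Because $x$ annihilates $\operatorname{im}H_P^\intercal$, this depends only on $[M\omega]=[u_i]$, so $\omega^\intercal a=x^\intercal u_i=1$. Any $G_P^\intercal c$ pairs to zero with $\omega\in\ker G_P$, so $a$ is nontrivial.

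\textbf{Step 3: the second claim.} Let $\omega\in\ker G_P$ satisfy $\omega^\intercal a=1$. Step~1 shows $H_{\bar P}M\omega=0$. Moreover, $x^\intercal M\omega=\omega^\intercal a=1$. Every base stabilizer $H_P^\intercal B$ satisfies $x^\intercal H_P^\intercal B=(H_Px)^\intercal B=0$. Hence $M\omega\notin\operatorname{im}H_P^\intercal$, and $M\omega$ is a nontrivial base logical.

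\textbf{Main obstacle.} The hardest part is Step~1: showing that contraction is well-defined on classes and agrees with the K\"unneth decomposition. Any nontrivial product class has representatives with nonzero side components $V,U$, so the argument cannot assume $E$ is purely central. The plan is to verify that the side parts disappear under contraction through $VG_P\omega=0$ and $CG_P\omega=0$. Once that is settled, Steps~2 and~3 are short pairing computations, and Step~3 no longer needs K\"unneth at all.
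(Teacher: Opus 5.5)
Your proof is correct, but it takes a different route from the paper's.

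\textbf{How the paper argues.} The paper never invokes the K\"unneth isomorphism here; it normalizes $E$ by hand.
\begin{itemize}
\item It first adds $(H_P^\intercal B,0,BG_{\bar P}^\intercal)$ with $BG_{\bar P}^\intercal=U$ to remove $U$. This $B$ exists because $G_{\bar P}$ has full row rank.
\item It then writes $M'=CG_P+NL$, where the rows of $L$ complete $\operatorname{row}G_P$ to $\ker G_{\bar P}$. Independence of these rows (full row rank of $G_P$) gives $V=H_{\bar P}C$ and $H_{\bar P}N=0$. Hence $E$ is equivalent to $(NL,0,0)$.
\item Nontriviality of $E$ forces some column of $N$ to lie outside $\operatorname{im}H_P^\intercal$. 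The paper takes any $x\in\ker H_P$ with $N^\intercal x\neq0$.
\item Then $a=G_P^\intercal C^\intercal x+L^\intercal N^\intercal x$ is nontrivial, because $L^\intercal N^\intercal x$ is a nonzero combination of rows independent of $\operatorname{row}G_P$.
\item The final step coincides with your Step~3.
\end{itemize}

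\textbf{How the two relate.} The columns of $N$ and the rows of $L$ play the roles of your $u_j$ and $\ell_j$. So the paper re-derives by hand exactly the part of K\"unneth you cite. It certifies nontriviality of $a$ by linear independence, rather than by pairing with a dual representative $\ell_{\bar P,i}$ as you do.

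\textbf{What each buys.} The paper's version is self-contained and shows exactly where full row rank enters: in solving for $B$ and in the unique splitting of $M'$. Yours is shorter and more conceptual. Your Step~1 shows that contraction by any $\omega\in\ker G_P$ sends both stabilizer families, side components included, into $\operatorname{im}H_P^\intercal$, and so descends to classes. This makes explicit a fact the paper only asserts in the proof of Lemma~\ref{lem:distance-amplification}. The cost is that you rely on the K\"unneth isomorphism and a choice of dual amplifier bases. Full row rank then enters only through $H_0(\mathcal A)=H_2(\mathcal A)=0$.
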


\begin{proof}
    Choose $B$ with $BG_{\bar P}^\intercal=U$ and add the first stabilizer in Eq.~\eqref{eq:common-central-stabilizers}, obtaining $(M',V,0)$ with $M'G_{\bar P}^\intercal=0$. Let the rows of $L$ supplement $\operatorname{row}G_P$ to $\ker G_{\bar P}$ and write $M'=CG_P+NL$. Independence of the rows of $G_P$ together with $L$ and the first normalizer equation give $V=H_{\bar P}C$ and $H_{\bar P}N=0$. Adding the second stabilizer reduces the pattern to $(NL,0,0)$. Some column of $N$ is outside $\operatorname{im}H_P^\intercal$: otherwise $N=H_P^\intercal R$ would make this pattern a stabilizer with $B=RL$. Choose $x\in\ker H_P$ with $N^\intercal x\neq0$. Then
    \begin{equation}
        a=M^\intercal x=G_P^\intercal C^\intercal x+L^\intercal N^\intercal x
        \label{eq:common-central-pairing-proof}
    \end{equation}
    has nonzero amplifier logical class. Finally, $H_{\bar P}M\omega=0$ and $x^\intercal M\omega=\omega^\intercal a=1$ show that $M\omega$ is a nontrivial base logical.
\end{proof}

Now, take any response decomposition of a nontrivial product logical with count $K$, and select the $D_P^A$ representatives supplied by Definition~\ref{def:transferable-amplifier-certificate} for the $a$ in the lemma. All $M\omega_i$ are nontrivial base logicals. An extended base bundle projects to $e_{\mathsf G_Q}(B)\delta_b^\intercal$ and contributes the entire zero-flag bundle to contraction $i$ when $(\omega_i)_b=1$. Disjointness places it in at most one contraction, at no greater count; no flagged base bundle is split. A new coupling bundle projects to $\delta_qv^\intercal$, with $v=e_{\mathsf G_A}(B)$, and contributes the singleton $\delta_q$ when $\omega_i^\intercal v=1$. Equation~\eqref{eq:methods-amplifier-parity} bounds the combined count of these singletons by $|B|\leq|\widehat B|$. A central singleton enters at most one contraction, and a newest-side singleton enters none. The resulting base decompositions therefore have counts $C_i$ obeying
\begin{equation}
    K\geq\sum_{i=1}^{D_P^A}C_i\geq D_P^A D_P^Q.
    \label{eq:common-certificate-transfer}
\end{equation}
This proves certificate transfer because the input decomposition was arbitrary. The history-reduction lemma gives the circuit lower bound.

The same parity count applied directly to an amplifier logical shows that $D_P^A$ is also a standalone fault-response certificate. However, its transferable definition has the additional disjointness restriction $D_P^A d_{\bar P}^A\leq n_A$, since each detecting representative has weight at least $d_{\bar P}^A$. Thus full-value transfer $D_P^A=d_P^A$ is an extra property of the chosen amplifier circuit, not a consequence of its standalone distance.

For the upper bound, choose an amplifier $P$ logical $\gamma_P$ of weight $d_P^A$ and a minimum-fault zero-record base history $F$. Copy $F$ into the inherited slices in $\operatorname{supp}\gamma_P$, with all new amplifier-core and newest-tail locations fault free. The copied history has $d_P^A|F|$ faults. After the base cores in round $r$, its central data response is $e_r\gamma_P^\intercal$. The amplifier cores then sample opposite-type parities $(e_r)_q(G_{\bar P}\gamma_P)_\nu=0$; same-type amplifier cores acquire no propagating syndrome component, and the flags are unaffected by the incoming errors. If $b$ is the residual propagating syndrome component of an inherited same-type check, it is identical in every active copy, so its combined newest-side emission is
\begin{equation}
    \sum_a b(\gamma_P)_a(G_{\bar P})_{\nu a} = b(G_{\bar P}\gamma_P)_\nu=0.
    \label{eq:common-tail-cancellation}
\end{equation}
The side~1 window completes this cancellation before the new opposite-type tail in side~2 samples that coordinate. The other side register receives no propagating same-type amplifier-syndrome component. Consequently newest side data are clean at each round boundary, and the inherited flags and deferred readouts reproduce the complete base record. Induction over rounds gives zero complete record and final response
\begin{equation}
    \bigl(E_P(F) \gamma_P^{\intercal},0,0\bigr).
\label{eq:common-lifted-response}
\end{equation}
Opposite-type logical representatives detecting the two factors have a central tensor product pairing to one with this response, so it is nontrivial. This proves the upper bound in Theorem~\ref{prop:common-joint-amplification}, and equality follows when its lower and upper endpoints coincide.

Finally, the transferred certificate applies to every decomposition and hence is a valid base-code fault-response certificate at the next level. Retaining all older data, couplings, and flags and preserving the SJ interface proves Eq.~\eqref{eq:mixed-circuit-bounds} by induction. For the scalar statement, both sectors have lower bound $d_A^tD_0$, while an attained history in either seed sector lifts to a matching upper bound in that sector. This proves Eq.~\eqref{eq:common-scalar-recursion} without assuming that the seed certificate is attained separately in both sectors.

\subsection{Computing fault-response certificates}
\label{app:compute-fault-response-certificate}

Consider a CSS extraction circuit on $n$ data qubits, with check matrices $H_X$ and $H_Z$. For $P\in\{X,Z\}$, let $D_P^{\max}$ be the largest certificate value allowed by Definition~\ref{def:joint-response-certificate}. We compute it as the minimum assigned fault count of a response decomposition producing a nontrivial $P$ logical.

Each catalog entry $(e_i,c_i)$ consists of a data response $e_i\in\mathbb F_2^n$ and a positive integer cost $c_i$ equal to its assigned fault count. Include all weight-one data responses at cost one and responses of allowed pure-$P$ fault bundles in $P$-check gadgets at their fault counts. Each bundle must have zero net response on every flag, although contributions from individual faults may cancel. For exactness, the catalog must be complete: every admissible $r$-fault bundle response must be expressible as a sum of catalog responses with total cost at most $r$.

For an unflagged weight-$w$ $P$-check gadget, let $(a_1,\ldots,a_w)$ denote its data-qubit labels in CNOT order, and let $\mathbf b_a\in\mathbb F_2^n$ be the unit vector supported on qubit $a$. A single pure-$P$ fault immediately after the $j$th CNOT has a data response in
\begin{equation}
     \begin{aligned}
         \mathcal E_j
         &=\{\mathbf b_{a_j},h_j,\mathbf b_{a_j}+h_j\},\\
         h_j&=\sum_{\ell=j+1}^{w}\mathbf b_{a_\ell},
         \qquad j=1,\ldots,w.
     \end{aligned}
     \label{eq:dp-bare-responses}
\end{equation}
Each entry has cost one. Under the stated fault model, collecting these responses from all unflagged gadgets, together with the weight-one responses, gives a complete catalog: multi-fault data responses are sums of single-fault responses.

To reduce the search space, identify responses that differ by a $P$-type stabilizer in $S_P=\operatorname{im}H_P^{\mathsf T}$. Put $H_P$ into reduced row-echelon form over $\mathbb F_2$ and clear the pivot coordinates of each response $e$ by adding the corresponding reduced rows. This defines a linear map $\rho_P:\mathbb F_2^n\to\mathbb F_2^n$ selecting one representative per class, with
\begin{equation}
     \rho_P(e)=\rho_P(e')
     \quad\Longleftrightarrow\quad e+e'\in S_P.
     \label{eq:dp-response-equivalence}
\end{equation}
Set $q_i=\rho_P(e_i)$, discard zero representatives, and retain a minimum-cost entry for each distinct nonzero representative. Stabilizer reduction preserves the data syndrome, and these simplifications preserve the minimum cost of a nontrivial logical response. Let $m$ be the number of retained entries and relabel them as $\{(q_i,c_i)\}_{i=1}^{m}$, keeping their original realizations for witness reconstruction.

Construct a weighted graph with vertex set $\mathcal U_P=\rho_P(\mathbb F_2^n)$. Each vertex $u$ represents an accumulated data response modulo $S_P$; adding catalog entry $i$ gives an edge $u\to u+q_i$ of weight $c_i$. Thus, $\operatorname{dist}(0,u)$ is the minimum total catalog cost of reaching $u$. Every path yields an admissible response decomposition, and completeness ensures that every admissible decomposition has a catalog representation of no greater cost. Consequently,
\begin{equation}
 D_P^{\max}
 =\min_{\substack{u\in\mathcal U_P\setminus\{0\}\\
                  H_{\bar P}u=0}}
   \operatorname{dist}(0,u),
 \label{eq:dp-exact-distance}
\end{equation}
where $\bar P$ denotes the opposite CSS sector. Here $H_{\bar P}u$ is the algebraic syndrome of the accumulated data response; requiring it to vanish for $u\ne0$ selects a nontrivial logical class. Intermediate states and ordinary syndrome measurement outcomes remain unrestricted. The search computes a fault-response certificate, while circuit distance additionally requires a complete zero-record circuit history.

Algorithm~\ref{alg:dp-exact} finds the minimum using Dijkstra's algorithm; breadth-first search suffices when all $c_i=1$. The search space contains $2^{n-\operatorname{rank}H_P}$ response states, so the search has exponential worst-case complexity in the quotient dimension.

\begin{figure}[t]
\begin{minipage}{0.47\textwidth}
\raggedright
\small

\refstepcounter{frcalgorithm}
\label{alg:dp-exact}

\hrule
\smallskip

\noindent
\textbf{Algorithm \thefrcalgorithm. Exact computation of $D_P^{\max}$}

\smallskip

\begin{algorithmic}[1]
\Require Complete reduced catalog $\{(q_i,c_i)\}_{i=1}^{m}$ and $H_{\bar P}$
\Ensure $D_P^{\max}$

\State Initialize $d(0)\gets 0$
\State Initialize $d(u)\gets+\infty$ for every other vertex $u$
\State Initialize a minimum-priority queue $Q\gets\{0\}$, with priorities given by $d$

\While{$Q\neq\varnothing$}
    \State Extract from $Q$ the response state $u$ having minimum tentative cost $d(u)$

    \If{$u\neq0$ \textbf{and} $H_{\bar P}u=0$}
        \State \Return $d(u)$
    \EndIf

    \For{$i=1,\ldots,m$}
        \State $v\gets u+q_i$
        \State $C\gets d(u)+c_i$

        \If{$C<d(v)$}
            \State $d(v)\gets C$

            \If{$v\in Q$}
                \State Decrease the priority of $v$ to $C$
            \Else
                \State Insert $v$ into $Q$ with priority $C$
            \EndIf
        \EndIf
    \EndFor
\EndWhile

\State \Return $+\infty$ \Comment{No logical class is reachable}
\end{algorithmic}

\smallskip
\hrule
\end{minipage}
\end{figure}

Table~\ref{tab:dp-example-results} compares the certificates for two $\code{18,4,4}$ unflagged extraction circuits and two $\code{4,2,2}$ extraction circuits. For the $\code{18,4,4}$ examples, $\operatorname{rank}H_X=\operatorname{rank}H_Z=7$, giving $2048$ response states modulo stabilizers per sector. The two $\code{4,2,2}$ schedules have the same circuit distance but different fault-response certificates.

\begin{table}[t]
\centering
\small
\setlength{\tabcolsep}{3pt}
\caption{Exact fault-response certificates for the specified circuits. The first two rows refer to $\code{18,4,4}$. Circuit distances are reported for comparison and are not computed by the response search.}
\label{tab:dp-example-results}
\begin{tabular}{@{}lccc@{}}
\toprule
Circuit & $d_{\mathrm{circ}}$ & $D_X^{\max}$ & $D_Z^{\max}$ \\
\midrule
IBM schedule~\cite{bravyi2024high} & 3 & 2 & 3 \\
Alternative unflagged schedule & 4 & 3 & 3 \\
Unflagged parallel $\code{4,2,2}$ & 2 & 1 & 1 \\
Flagged $\code{4,2,2}$ & 2 & 2 & 2 \\
\bottomrule
\end{tabular}
\end{table}

\section{The flagged four-copy distance amplifier}
\label{app:Proof_for_flagged_422}

This appendix establishes code-distance doubling and circuit-distance amplification for the flagged $\code{4,2,2}$ construction. Complementary logical representatives provide two nontrivial base-code contractions, while the local flag bound allows the terms of an admissible product-code fault-response decomposition to be allocated between them without increasing the total assigned fault count. This transfers a base-code fault-response certificate from $D_P$ to $2D_P$. We then verify that the circuit assumptions are preserved under recursion and apply the general circuit-distance bounds to obtain exact recursive doubling under the seed condition in Eq.~\eqref{eq:iva-seed-condition}. Throughout, we use the circuit and fault model of Appendix~\ref{app:circuit_distance_amplification} and the SJ schedule of Method~\ref{subsec:syndrome-circ-amplified}.

\subsection{Logical contractions and the local flag bound}

Two complementary pair contractions detect each product logical, while a flag limits which correlated errors one amplifier fault can contribute to them. The amplifier is specified by
\begin{equation}
    A_1=\mathbb F_2^4,\quad A_0=A_2=\mathbb F_2,
    \qquad G_X=G_Z=(1,1,1,1).
    \label{eq:iva-amplifier}
\end{equation}
Label its four data coordinates by $e_0,e_1,e_2,e_3$ and its $X$ and $Z$ checks by $e_X,e_Z$, respectively. The product data coordinates are $q\otimes e_a$, $z\otimes e_X$, and $x\otimes e_Z$, where $q$, $z$, and $x$ range over the coordinate bases of $Q_1$, $Q_2$, and $Q_0$, respectively.

The new amplifier $X$ check $q\otimes e_X$ uses the flagged core in Fig.~\ref{fig:flagged422-twenty-faults}(a). The corresponding $Z$-check core is obtained by exchanging $X$ and $Z$, including the ancilla preparations and measurement bases, and reversing all CNOT directions. These cores are executed in the amplifier window of Fig.~\ref{fig:css_tensor_overview}(b). The two flag CNOTs bracket the data couplings at $a=1,2$. For either check type, the flag is measured after the coupling at $a=3$ and before the newly appended side couplings; the syndrome ancilla is measured only after its side~2 couplings are complete. The local gate orders are interleaved according to the prescribed SJ schedule.

We first identify the two logical contractions that will receive the allocated fault responses. The contraction argument in the proof of Lemma~\ref{lem:distance-amplification} specializes to the following complementary-pair statement.

\begin{corollary}[Complementary-pair contractions]
\label{lem:iva-logical-pairing}
Let $E$ be a nontrivial $P$ logical operator of $\TTT$, with $P\in\{X,Z\}$, and write its component in $Q_1\otimes A_1$ as $\sum_{a=0}^3 E_a\otimes e_a$, where $E_a\in Q_1$. There exists a pairing $ab\mid cd$, with $\{a,b,c,d\}=\{0,1,2,3\}$, such that $E_a+E_b$ and $E_c+E_d$ are base-code $P$ logical operators satisfying
\begin{equation}
    [E_a+E_b]=[E_c+E_d]\ne0.
    \label{eq:iva-complementary-pairing}
\end{equation}
Here, the brackets denote equivalence modulo base-code $P$-type stabilizers. The statement allows arbitrary support of $E$ on the side data qubits.
\end{corollary}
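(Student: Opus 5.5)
We specialize the contraction argument of Lemma~\ref{lem:distance-amplification}, through the central pairing of Lemma~\ref{lem:common-central-pairing}, to the amplifier in Eq.~\eqref{eq:iva-amplifier}.

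\textbf{Step 1: a nontrivial amplifier class.} Since $G_X=G_Z=(1,1,1,1)$ have full row rank, Lemma~\ref{lem:common-central-pairing} applies to $E=(M,V,U)$, where the columns of $M$ are $E_0,\dots,E_3$. It supplies $x\in\ker H_P$ such that $a=M^\intercal x\in\mathbb F_2^4$ is a nontrivial amplifier $P$ logical. Concretely, $a$ is even, since $G_{\bar P}a=0$, and $a\notin\{0000,1111\}=\operatorname{im}G_P^\intercal$. Hence $a$ is one of the six weight-two vectors $e_a+e_b$.

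\textbf{Step 2: choose the pairing.} The detecting representatives $\omega\in\ker G_P$ are the even vectors with $\omega^\intercal a=1$. These are exactly the weight-two vectors that meet $\operatorname{supp}a$ in one position. Write $a=e_i+e_j$. Among the three perfect matchings of $\{0,1,2,3\}$, we choose one that splits $i$ from $j$; there are two such matchings, and we pick either, say $\{i,k\}\mid\{j,l\}$. Set $\omega_1=e_i+e_k$ and $\omega_2=e_j+e_l$. These two representatives are complementary, so $\omega_1+\omega_2=1111$, and both satisfy $\omega^\intercal a=1$. By Lemma~\ref{lem:common-central-pairing}, $M\omega_1=E_i+E_k$ and $M\omega_2=E_j+E_l$ are nontrivial base $P$ logicals: each lies in $\ker H_{\bar P}$ and pairs to $1$ with $x$.

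\textbf{Step 3: equality of the two classes.} The sum of the two contractions is $M\omega_1+M\omega_2=M\,(1,1,1,1)^\intercal=MG_{\bar P}^\intercal$. The second normalizer equation in Eq.~\eqref{eq:common-central-normalizer} gives $MG_{\bar P}^\intercal=H_P^\intercal U$, which is a base $P$-stabilizer. Therefore $[E_i+E_k]=[E_j+E_l]$, and this common class is nonzero by Step 2.

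Step 3 uses the side component $U$ only through the normalizer identity, with no constraint on its support, and Step 1 allows arbitrary $V$. This justifies the final sentence of the statement. The only point needing care is this normalizer equation. It is the $Q_1\otimes A_1$ block of the opposite-type checks in Eqs.~\eqref{eq:tensor-HX}--\eqref{eq:tensor-HZ} applied to $E$, which combines $I\otimes G_{\bar P}$ acting on $M$ with $H_P^\intercal\otimes I$ acting on $U$. Once it is written out, the rest is routine.
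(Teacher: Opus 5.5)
Your argument is correct and is essentially the paper's: both contract the central component against two complementary weight-two representatives of a single detecting amplifier class, which differ by the all-ones stabilizer $(1,1,1,1)^\intercal=G_{\bar P}^\intercal$. The paper fixes dual bases with families $\Omega_{P,i}$ and cites the K\"unneth contraction argument of Lemma~\ref{lem:distance-amplification}, whereas you obtain nontriviality from Lemma~\ref{lem:common-central-pairing} and the class equality directly from $MG_{\bar P}^\intercal=H_P^\intercal U$, which makes the side-data clause explicit (that identity comes from the check block indexed by $Q_1\otimes A_2$ or $Q_1\otimes A_0$, not ``the $Q_1\otimes A_1$ block,'' a harmless labeling slip).
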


\begin{proof}
Choose dual logical bases of the amplifier as
\[
\begin{aligned}
    \ell_{X,1}&=e_0+e_1,&\ell_{X,2}&=e_0+e_2,\\
    \ell_{Z,1}&=e_0+e_2,&\ell_{Z,2}&=e_0+e_1.
\end{aligned}
\]
For each $P\in\{X,Z\}$ and $i=1,2$, define the representative family
\[
    \Omega_{P,i}:=\left\{\ell_{P,i},\,
    \ell_{P,i}+\sum_{a=0}^3e_a\right\}.
\]
This is our choice of the family $\Omega_{P,i}$ appearing in Lemma~\ref{lem:distance-amplification}. Its two members represent the same logical class $[\ell_{P,i}]$, since their difference $\sum_{a=0}^3e_a$ is the support of the amplifier $P$ stabilizer. Each member has weight two, and their supports are complementary pairs of amplifier coordinates.

By the contraction argument in the proof of Lemma~\ref{lem:distance-amplification}, every nontrivial product $P$ logical has an index $i$ for which contraction against either member of $\Omega_{\bar P,i}$ gives the same nonzero base-code logical class. If these two representatives have supports $\{a,b\}$ and $\{c,d\}$, the contractions defined in Eq.~\eqref{eq:app-logical-contraction} are precisely $E_a+E_b$ and $E_c+E_d$. This proves Eq.~\eqref{eq:iva-complementary-pairing}. The contraction argument applies to the original logical representative, including any support on the side data.
\end{proof}

The above representative families have $\Sigma_X=\Sigma_Z=2$ and $o_X=o_Z=1$. Since $d_X^A=d_Z^A=2$, Lemma~\ref{lem:distance-amplification} gives
\[
    d_P(\TTT)=2d_P(\QQQ),\qquad P\in\{X,Z\}.
\]
The K\"unneth identification in Eqs.~\eqref{eq:app-kunneth-homology-group} and~\eqref{eq:app-kunneth-cohomology-group} also gives $k(\TTT)=2k(\QQQ)$.

The complementary contractions identify nontrivial logical responses. To obtain a circuit-distance bound, we must also control the assigned fault counts of admissible decompositions of these responses. For a new amplifier check, the following local bound separates a zero-flag central response into a contribution common to all four base-code copies and a residual error whose weight does not exceed the fault count. The common contribution cancels in either pair sum, allowing the residual terms to be assigned to individual copies.

\begin{figure}[t]
\centering
\includegraphics[width=0.48\textwidth]{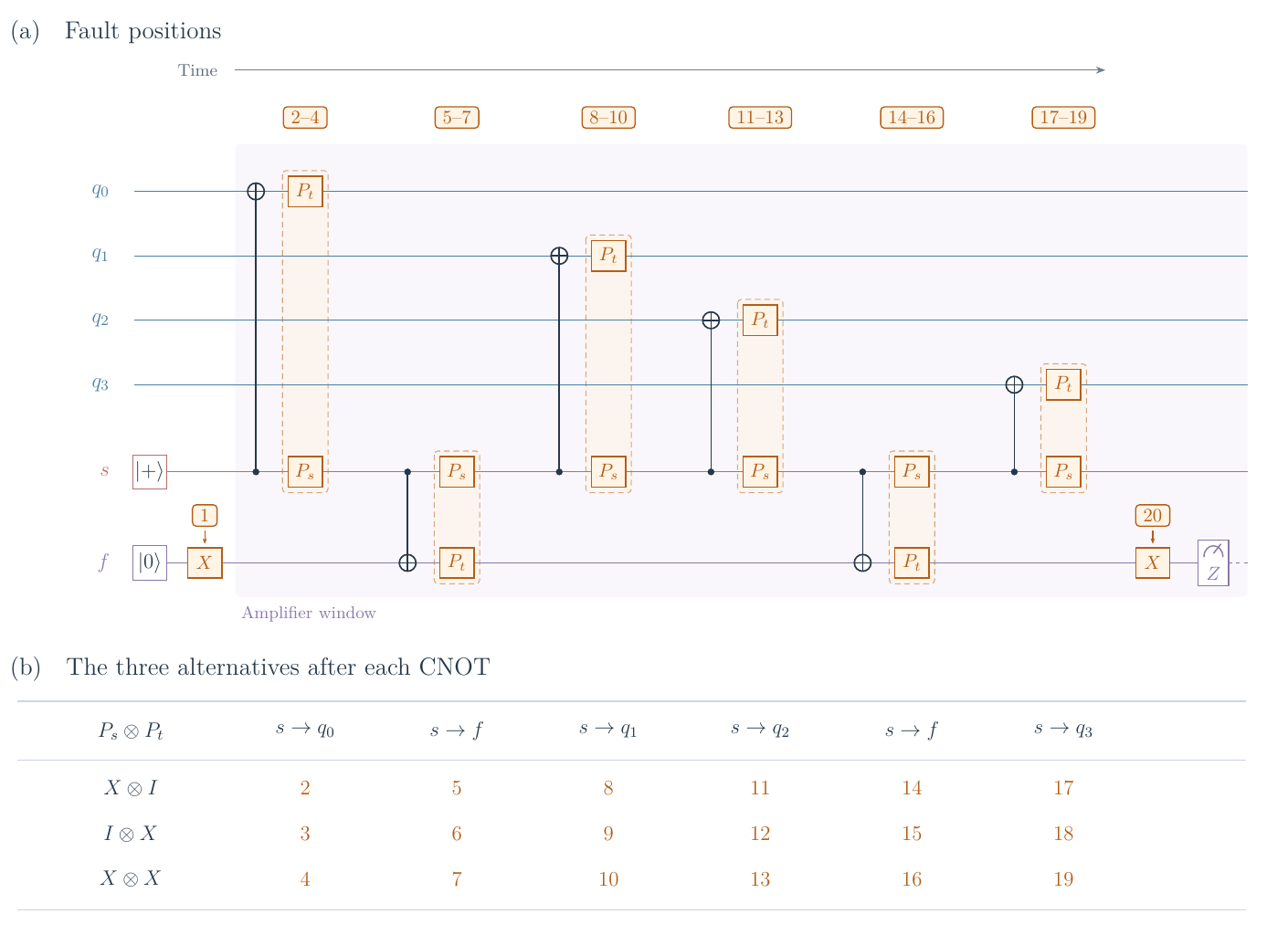}
\caption{\textbf{Flagged amplifier $X$-check core and its 20 single-fault cases.} (a) Core circuit and fault positions before the newly appended side data couplings, with $q_a=q\otimes e_a$, $s=q\otimes e_X$, and $f=f_X$. Cases 1 and 20 are $X$ errors immediately after flag preparation and before its $Z$ measurement, respectively. Each pair of orange boxes represents one CNOT fault, with the Pauli factors ordered as control and target. (b) Cases 2--19 correspond to $X\otimes I$, $I\otimes X$, and $X\otimes X$ after each of the six CNOTs. Each numbered case is considered separately, giving 20 cases at eight physical locations. The syndrome-preparation fault is a $Z$ error and is excluded from this pure-$X$ list.}
\label{fig:flagged422-twenty-faults}
\end{figure}

\begin{lemma}[Local flag bound]
\label{lem:iva-local-flag}
Consider the flagged amplifier $X$-check core in Fig.~\ref{fig:flagged422-twenty-faults}(a), or its $Z$ counterpart defined above. For $P\in\{X,Z\}$, restrict the $P$-check circuit to the operations from ancilla preparation through flag measurement. Suppose pure-$P$ faults occur at $r$ physical locations in this segment, and let $v\in\mathbb F_2^4$ denote their data response on the four qubits $q\otimes e_a$, ordered by $a=0,1,2,3$. If the flag response is zero, then there exist $v'\in\mathbb F_2^4$ and $b\in\mathbb F_2$ such that
\begin{equation}
v=v'+b(1,1,1,1),\qquad\operatorname{wt}(v')\le r.
\label{eq:iva-flag-bound}
\end{equation}
\end{lemma}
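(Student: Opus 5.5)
The plan is to reduce the statement to a finite check over single-fault responses, using the binary-linear fault propagation assumed in Appendix~\ref{app:circuit_distance_amplification}. I take the $X$ case. As I read Fig.~\ref{fig:flagged422-twenty-faults}(a), the six CNOTs of the core are, in order, $s\to q_0$, $s\to f$, $s\to q_1$, $s\to q_2$, $s\to f$, $s\to q_3$, where $q_a=q\otimes e_a$. Here $s$ is prepared in $|+\rangle$, and $f$ is prepared in $|0\rangle$ and measured in the $Z$ basis. The $Z$ case then follows by the stated exchange of $X$ and $Z$ together with reversal of all CNOT directions, because this map sends pure-$X$ fault responses to pure-$Z$ fault responses with the same data and flag patterns.

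\emph{Step 1: tabulate single-fault responses.} A configuration of $r$ faulty locations, with one Pauli choice at each location, has response $(v,\phi)=\sum_j (v_j,\phi_j)\in\mathbb F_2^4\times\mathbb F_2$, summed over the 20 single-fault cases. I would propagate each case forward to the flag measurement. An $X$ on $s$ after the CNOT onto $q_j$ spreads to all later data couplings and to every later flag CNOT. An $X$ on a data qubit stays put, since incoming data errors do not propagate to other data or to the flag. An $X$ on $f$ flips only the flag. Typical rows of the table are:
\begin{itemize}
\item an $X$ on $s$ after $s\to q_0$ gives $(0111,0)$;
\item after the first flag CNOT it gives $(0111,1)$;
\item after $s\to q_1$ it gives $(0011,1)$;
\item after $s\to q_2$ it gives $(0001,1)$;
\item after the second flag CNOT it gives $(0001,0)$;
\item a pure flag fault gives $(0000,1)$.
\end{itemize}
The $X\otimes I$, $I\otimes X$ and $X\otimes X$ variants of each CNOT fault add at most one data unit vector.

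\emph{Step 2: change the norm.} On the quotient $\mathbb F_2^4/\langle 1111\rangle$ I use $\nu(v)=\min\{\operatorname{wt}v,\,4-\operatorname{wt}v\}$. The claim~\eqref{eq:iva-flag-bound} is then exactly $\nu(v)\le r$. The quotient has eight classes and one flag bit, so there are only 16 states $(\bar v,\phi)$.

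\emph{Step 3: finite shortest-path check.} I would run the exact search of Algorithm~\ref{alg:dp-exact}, with catalog equal to the 20 single-fault responses at cost one. Using 20 cases at eight locations is harmless: two cases at the same location combine to another listed case or to nothing, so the configuration space is covered by sums of catalog entries at cost at most $r$. The search gives, for every state $(\bar v,0)$, the minimum number of faults needed to reach it. The statement follows once that minimum is at least $\nu(\bar v)$ for all $\bar v$.

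Structurally I expect the following. Every unflagged single response has $\nu\le1$. Every flagged response lies within $\nu$-distance $1$ of either $0$ or $0011$. The pattern $0011$ has $\nu=2$, but a flagged response can only be cancelled in the flag bit by a second flagged fault. So $\nu=2$ can be reached only with at least two faults, and an odd number of $0011$-type centres always comes paired with another flagged fault.

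The main obstacle is the bookkeeping in Step 3: the flagged responses with centre $0011$ and the flagged responses with centre $0$ (pure flag errors, or $0111\equiv 1000$) must pair up without the deviations exceeding the fault count. If a hand argument via the centres-plus-deviations decomposition gets messy, I would fall back on the explicit 16-state Dijkstra table, which is a complete and rigorous verification.
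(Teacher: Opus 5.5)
Your proposal is correct. Your reading of the core ($s\to q_0$, $s\to f$, $s\to q_1$, $s\to q_2$, $s\to f$, $s\to q_3$) reproduces exactly the paper's set of zero-flag single-fault responses, $\{0000,1111,1000,0100,0010,0001,0111\}$.

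Where you differ from the paper is in how you treat multiple faults. The paper splits on $r$:
\begin{itemize}
\item $r=0$ is trivial;
\item $r=1$ is the table check;
\item for $r\ge2$ the claim is automatic, because $\nu(v)=\min\{\operatorname{wt}v,\,4-\operatorname{wt}v\}\le 2$ for every $v\in\mathbb F_2^4$.
\end{itemize}
Your own $\nu$ already exposes this. Only the three weight-two classes have $\nu=2$, and every nonzero class needs at least one fault. So the whole lemma reduces to one fact: no single fault with zero flag response produces a weight-two data response.

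As a result, the ``main obstacle'' you anticipate never arises. That obstacle was pairing flagged responses centred at $0011$ with those centred at $0$ without the deviations exceeding $r$. Your 16-state Dijkstra search is sound, since allowing repeated catalog entries can only lower the computed minimum, so it still gives a valid lower bound. It is simply unnecessary for this weight-four gadget.

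What your search-based route buys is robustness. For a flagged gadget of larger weight, $\nu$ can exceed $2$ and multi-fault combinations genuinely matter. There the paper's one-line ceiling argument is unavailable, and an exhaustive quotient search like Algorithm~\ref{alg:dp-exact} is the natural tool.
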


\begin{proof}
We prove the $X$ case by considering $r=0$, $r=1$, and $r\ge2$.

For $r=0$, the response vanishes, so we take $v'=0$ and $b=0$.

For $r=1$, the allowed pure-$X$ faults are the 20 cases listed in Fig.~\ref{fig:flagged422-twenty-faults}. Propagating these faults through the remaining CNOTs gives zero flag response precisely for cases
\[
    2,\ 3,\ 4,\ 7,\ 9,\ 12,\ 14,\ 17,\ 18,\ 19.
\]
Their data responses form the set
\[
    v\in\{0000,1111,1000,0100,0010,0001,0111\}.
\]
For every vector in this set, either $v$ or $v+(1,1,1,1)$ has weight at most one. Choosing the corresponding value of $b$ and setting $v'=v+b(1,1,1,1)$ proves Eq.~\eqref{eq:iva-flag-bound} for $r=1$.

For $r\ge2$, every $v\in\mathbb F_2^4$ satisfies
\[
    \begin{aligned}
    \min_{b\in\mathbb F_2}\operatorname{wt}\bigl(v+b(1,1,1,1)\bigr)
    &=\min\{\operatorname{wt}(v),4-\operatorname{wt}(v)\}\\
    &\le2\le r.
    \end{aligned}
\]
Thus, a suitable $b$ always exists. This argument applies to the combined response of the entire fault bundle, including cases in which individual flag responses cancel or faults affect the flag readout.

Exchanging $X$ and $Z$ and reversing all CNOT directions proves the $Z$ case.
\end{proof}

\subsection{Certificate transfer and recursive circuit bounds}

We now allocate the terms of an arbitrary admissible product-code decomposition. The construction below keeps inherited base-check bundles intact and distributes the residual amplifier responses among the four copies. The nontrivial complementary pairing is then selected using Corollary~\ref{lem:iva-logical-pairing}.

\begin{lemma}[Four-qubit fault allocation]
\label{lem:iva-event-doubling}
Fix $P\in\{X,Z\}$. Every admissible decomposition of a nontrivial $P$ logical operator of $\TTT$, with assigned fault count $K$, induces admissible decompositions of two nontrivial base-code $P$ logical operators with assigned fault counts $K_1,K_2$ satisfying
\[
    K_1+K_2\le K.
\]
Thus, the fault-allocation bound underlying Eq.~\eqref{eq:common-certificate-transfer} holds with two nontrivial base-code contractions.
\end{lemma}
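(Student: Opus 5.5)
Fix an admissible decomposition of a nontrivial product $P$ logical $E$,
\[
E=\sum_\lambda e_{\widetilde{\mathsf G}_\lambda}(\widehat B_\lambda)+\sum_{\mu}u_\mu,
\qquad K=\sum_\lambda|\widehat B_\lambda|+s .
\]
By Corollary~\ref{lem:iva-logical-pairing}, choose a pairing $ab\mid cd$ for which the two complementary contractions $E_a+E_b$ and $E_c+E_d$ are nontrivial base logicals in the same class.

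The plan is to sort every term of the decomposition into one of two contraction budgets, the $ab$ budget or the $cd$ budget. Each term goes to at most one of them, and no term receives more cost than its assigned count. The two induced decompositions then have $K_1+K_2\le K$.

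We project each term with $\pi$, which discards only the newest side data, and use Eq.~\eqref{eq:common-gadget-projections}. Only the central columns enter the contractions, so side data play no role.

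\textbf{Inherited base bundles.} An extended base check in copy $b'$ projects to $e_{\mathsf G_Q}(B)\delta_{b'}^\intercal$, where $B$ is a zero-flag bundle with $|B|\le|\widehat B|$. Copy $b'$ lies in exactly one of $\{a,b\}$ or $\{c,d\}$. We place the whole bundle, intact, into that contraction's decomposition at cost $|B|$. It is a legitimate base-gadget term because its flag response is zero.

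\textbf{Singletons.} A central singleton $\delta_q\delta_{a'}^\intercal$ goes to the contraction containing $a'$ at cost one. A newest-side singleton contributes to neither contraction and is simply dropped.

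\textbf{New amplifier coupling bundles.} These are the only delicate terms. Such a bundle at datum $q$ projects to $\delta_q v^\intercal$, where $v=e_{\mathsf G_A}(B)$ and $B$ is a zero-flag bundle of the flagged core with $|B|\le|\widehat B|$.

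Faults in the side tail occur after the flag readout, so they do not change the central response. Deleting them reduces $B$ to the segment from ancilla preparation through flag measurement, which is exactly the setting of Lemma~\ref{lem:iva-local-flag}. That lemma gives
\[
v=v'+\beta(1,1,1,1),\qquad \operatorname{wt}(v')\le|B|.
\]
The common part $\beta(1,1,1,1)$ adds $\beta\delta_q$ to both pair sums, so it cancels in each contraction separately. The residual $v'$ is written as a sum of $\operatorname{wt}(v')$ central singletons $\delta_q\delta_{a'}^\intercal$. Each of these goes to the contraction containing $a'$ at cost one, so the total charge is at most $|B|\le|\widehat B|$.

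This is precisely the parity bound of Eq.~\eqref{eq:methods-amplifier-parity} for the two disjoint representatives $\omega_1=e_a+e_b$ and $\omega_2=e_c+e_d$. Both satisfy $G_{\bar P}\omega_i=0$, and the component of $v$ along $(1,1,1,1)$ has even overlap with each.

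\textbf{Opposite-type locations.} Any remaining pure-$P$ faults, at opposite-type gadgets or other locations, contribute at most one singleton each and are handled as above.

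Summing the contributions, the terms assigned to the $ab$ budget reproduce $E_a+E_b$, since contraction is linear and the common parts vanish mod two. Likewise the $cd$ terms reproduce $E_c+E_d$. Hence both induced decompositions are admissible decompositions of nontrivial base logicals, with
\[
K_1+K_2\le\sum_\lambda|\widehat B_\lambda|+s=K .
\]

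The main obstacle is the amplifier-bundle step. We must check that the bundle response relevant to the central data is exactly the pre-flag-measurement segment to which Lemma~\ref{lem:iva-local-flag} applies, i.e.\ that interleaving under the SJ schedule does not let later gates alter it. We must also check that a bundle whose individual flag responses cancel is still treated as one unit. I would verify the first point directly from the SJ ordering: the flag is read before the side tail, and the side-tail CNOTs only touch the newest side data, so central data and flags are unaffected. The second point is already covered by the lemma's statement for combined bundles.
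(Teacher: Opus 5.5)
Your proof is correct and follows essentially the paper's argument. Inherited base bundles stay intact in their copy, and singletons go to the copy they sit in. Amplifier bundles are truncated at the flag readout and split, via the local flag bound, into a part common to all four copies plus at most $|B|$ singletons. Every term then lands in at most one of the two complementary pair contractions, which gives $K_1+K_2\le K$.

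The difference is only bookkeeping. The paper first builds per-copy decompositions of $E_a+B$ with a shared common vector $B=\sum_\lambda b_\lambda q_\lambda$, independently of the pairing, and only then groups copies. You fix the pairing first and cancel $\beta(1,1,1,1)$ directly inside each pair sum.
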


\begin{proof}
Let $E$ be the product logical operator. Partition the bundle terms in its given admissible decomposition into inherited base-check bundles, indexed by $\Lambda_B$, and new amplifier-check bundles, indexed by $\Lambda_A$. Write
\begin{eqnarray}
    E&=&\sum_{\lambda\in\Lambda_B\cup\Lambda_A}
    e_{\widetilde{\mathsf G}_\lambda}(\widehat F_\lambda)
    +\sum_{\mu=1}^{s}u_\mu,\\
    K&=&\sum_{\lambda\in\Lambda_B\cup\Lambda_A}c_\lambda+s,
    \qquad c_\lambda:=|\widehat F_\lambda|.
\end{eqnarray}
Here, each $\widehat F_\lambda$ is a zero-flag pure-$P$ bundle, and each $u_\mu$ has weight one, as in Definition~\ref{def:joint-response-certificate}. All response sums below are over $\mathbb F_2$, whereas fault counts are summed over the nonnegative integers.

Let $\pi$ denote the projection onto the central register, and write
\begin{equation}
    \pi E=\sum_{a=0}^{3}E_a\otimes e_a,\qquad E_a\in Q_1.
\end{equation}
First, we construct a common vector $B\in Q_1$ and an admissible base-code decomposition of each $E_a+B$. The construction will apply simultaneously to all four copies, before choosing the two logical contractions.

For $\lambda\in\Lambda_B$, let $a_\lambda$ index the base-code copy containing the inherited check. Restrict $\widehat F_\lambda$ to the locations of the copied base gadget $\mathsf G_\lambda$, obtaining $F_\lambda$. This restriction retains all couplings belonging to $\mathsf G_\lambda$ and the corresponding syndrome readout; only faults in the newest side-coupling tail are removed. By the SJ ordering in Method~\ref{subsec:syndrome-circ-amplified}, these removed faults occur after the central data couplings and inherited flag measurements. Consequently,
\begin{equation}
    \begin{aligned}
    \pi e_{\widetilde{\mathsf G}_\lambda}(\widehat F_\lambda)
    &=e_{\mathsf G_\lambda}(F_\lambda)\otimes e_{a_\lambda},\\
    f_{\mathsf G_\lambda}(F_\lambda)
    &=f_{\widetilde{\mathsf G}_\lambda}(\widehat F_\lambda)=0,\\
    |F_\lambda|&\le c_\lambda.
\end{aligned}
\end{equation}
Thus, $e_{\mathsf G_\lambda}(F_\lambda)$ is an admissible base-code bundle term in copy $a_\lambda$, with assigned fault count $|F_\lambda|$. We retain $F_\lambda$ as a single bundle.

For $\lambda\in\Lambda_A$, let $q_\lambda\in Q_1$ be the physical-coordinate unit vector specifying the four central qubits $q_\lambda\otimes e_a$ coupled to this amplifier check. Restrict $\widehat F_\lambda$ to the gadget segment ending at the flag measurement, and denote its fault count by $r_\lambda\le c_\lambda$. The discarded tail faults affect neither the central response nor the flag outcome, so this restricted bundle still has zero flag response. Lemma~\ref{lem:iva-local-flag} therefore gives
\[
\begin{aligned}
v_\lambda&=b_\lambda\mathbf 1+v'_\lambda,
\qquad\mathbf 1=(1,1,1,1)^\intercal,\\
\operatorname{wt}(v'_\lambda)&=\sum_{a=0}^{3}v'_{\lambda,a}
\le r_\lambda\le c_\lambda,
\end{aligned}
\]
where $b_\lambda,v'_{\lambda,a}\in\{0,1\}$. In particular,
\[
\pi e_{\widetilde{\mathsf G}_\lambda}(\widehat F_\lambda)
=\sum_{a=0}^{3}(b_\lambda+v'_{\lambda,a})q_\lambda\otimes e_a.
\]
The contribution $b_\lambda q_\lambda$ is common to all four copies. Each nonzero $v'_{\lambda,a}$ supplies a weight-one base-code term $q_\lambda$ in copy $a$, and the total assigned fault count of these terms is at most $c_\lambda$.

For the weight-one terms already present in the original decomposition, assign a term supported on $q\otimes e_a$ to copy $a$ as the base-code term $q$, with assigned fault count one. Terms supported on the newest side data have zero central projection. Let
\[
u_{a,1},\ldots,u_{a,s_a}\in Q_1
\]
be the resulting weight-one terms in copy $a$. Then
\[
\sum_{a=0}^{3}s_a\le s.
\]

Define the accumulated common contribution by
\[
B:=\sum_{\lambda\in\Lambda_A}b_\lambda q_\lambda.
\]
Collecting the central responses yields, for every $a$,
\[
\begin{aligned}
E_a+B&=\sum_{\substack{\lambda\in\Lambda_B\\a_\lambda=a}}
e_{\mathsf G_\lambda}(F_\lambda)\\
&\quad+\sum_{\lambda\in\Lambda_A}v'_{\lambda,a}q_\lambda
+\sum_{\mu=1}^{s_a}u_{a,\mu}.
\end{aligned}
\]
Every term on the right is either a retained zero-flag base bundle response or a weight-one vector. This is therefore an admissible decomposition of $E_a+B$, with assigned fault count
\[
C_a:=\sum_{\substack{\lambda\in\Lambda_B\\a_\lambda=a}}|F_\lambda|
+\sum_{\lambda\in\Lambda_A}v'_{\lambda,a}+s_a.
\]
Summing over the four copies gives the budget
\begin{equation}
\begin{aligned}
\sum_{a=0}^{3}C_a
&=\sum_{\lambda\in\Lambda_B}|F_\lambda|
+\sum_{\lambda\in\Lambda_A}\operatorname{wt}(v'_\lambda)
+\sum_{a=0}^{3}s_a\\
&\le\sum_{\lambda\in\Lambda_B}c_\lambda
+\sum_{\lambda\in\Lambda_A}c_\lambda+s\\
&=K.
\end{aligned}
\label{eq:iva-column-budget}
\end{equation}

It remains to select the logical contractions. Applied to the full logical operator $E$, Corollary~\ref{lem:iva-logical-pairing} supplies a complementary pairing $ab\mid cd$ such that
\[
e_1:=E_a+E_b,\qquad e_2:=E_c+E_d,\qquad[e_1]=[e_2]\ne0.
\]
The allocation above is independent of this choice of pairing. Since $B+B=0$,
\[
\begin{aligned}
e_1&=(E_a+B)+(E_b+B),\\
e_2&=(E_c+B)+(E_d+B).
\end{aligned}
\]
Concatenating the constructed decompositions for the two copies in each pair, while keeping every base bundle intact, gives admissible decompositions of $e_1$ and $e_2$. Their assigned fault counts are
\[
K_1=C_a+C_b,\qquad K_2=C_c+C_d.
\]
Because the complementary pairs partition the four copies, Eq.~\eqref{eq:iva-column-budget} implies
\[
K_1+K_2=\sum_{a=0}^{3}C_a\le K.
\]
This establishes the required allocation for either choice of $P$.
\end{proof}

Lemma~\ref{lem:iva-event-doubling} supplies the fault-allocation property required by Theorem~\ref{prop:common-joint-amplification}. To obtain recursive bounds, we must additionally verify that the amplified circuit satisfies the same interface assumptions when it becomes the next base circuit.

\begin{theorem}[Flagged circuit-distance amplification]
\label{thm:iva-flagged-doubling}
Consider the flagged $\code{4,2,2}$ construction specified above, under the SJ schedule and the common circuit assumptions of Appendix~\ref{app:circuit_distance_amplification}. Fix a sector $P\in\{X,Z\}$. If the base-code circuit $\mathsf S$ has a fault-response certificate at $D_P$, then the amplified circuit $\widetilde{\mathsf S}$ has a certificate at $2D_P$, and
\begin{equation}
2D_P\le d_{\mathrm{circ},P}(\widetilde{\mathsf S})\le 2d_{\mathrm{circ},P}(\mathsf S).
\label{eq:iva-one-step-circuit-bounds}
\end{equation}
Starting from a seed certificate at $D_P^{(0)}$, the recursive circuit $\mathsf S^{(t)}$ has a certificate at $2^tD_P^{(0)}$ and satisfies
\begin{equation}
\begin{aligned}
2^tD_P^{(0)}&\le d_{\mathrm{circ},P}(\mathsf S^{(t)})\\
&\le 2^t d_{\mathrm{circ},P}(\mathsf S^{(0)})
\end{aligned}
\label{eq:iva-recursive-circuit-bounds}
\end{equation}
for every integer $t\ge0$. These bounds coincide whenever
\[
D_P^{(0)}=d_{\mathrm{circ},P}(\mathsf S^{(0)}).
\]
If both seed sectors admit certificates at the common value
\begin{equation}
D_X^{(0)}=D_Z^{(0)}=d_{\mathrm{circ}}(\mathsf S^{(0)}),
\label{eq:iva-seed-condition}
\end{equation}
then the overall circuit distance obeys the exact doubling relation in Eq.~\eqref{eq:flagged-circuit-distance-recursion}.
\end{theorem}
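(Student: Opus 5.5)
The proof specializes Theorem~\ref{prop:common-joint-amplification} to the flagged $\code{4,2,2}$ amplifier, with Lemma~\ref{lem:iva-event-doubling} supplying the certificate transfer, and then inducts on the amplification depth.

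\textbf{One-step lower bound.} Take any admissible decomposition of a nontrivial product $P$ logical with assigned count $K$. Lemma~\ref{lem:iva-event-doubling} gives two nontrivial base-code logicals with admissible decompositions of counts $K_1+K_2\le K$. The base certificate gives $K_1,K_2\ge D_P$, hence $K\ge 2D_P$. Since the decomposition was arbitrary, $2D_P$ is a fault-response certificate for $\widetilde{\mathsf S}$. Lemma~\ref{lem:common-history-bounds} then yields $2D_P\le d_{\mathrm{circ},P}(\widetilde{\mathsf S})$.

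This route deliberately bypasses Definition~\ref{def:transferable-amplifier-certificate}. Its disjoint-representative parity condition can fail for the $\code{4,2,2}$ gadget, because zero-flag responses such as $0111$ have weight three. The allocation lemma instead uses the common-mode cancellation $b(1,1,1,1)$ inside each complementary pair, which plays the role of Eq.~\eqref{eq:common-certificate-transfer} directly.

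\textbf{One-step upper bound.} Apply the history-lifting half of Theorem~\ref{prop:common-joint-amplification}. Choose a weight-two amplifier logical $\gamma_P$, for instance $e_0+e_1$, and copy a minimum-fault zero-record base history into those two slices. The copied history has $2d_{\mathrm{circ},P}(\mathsf S)$ faults.

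The conditions to check are the following.
\begin{itemize}
\item Amplifier cores of opposite type sample $G_{\bar P}\gamma_P=0$.
\item Copied errors entering the flagged amplifier cores do not trigger flags. This holds because flags couple only to the syndrome ancilla, and incoming data errors of the relevant type do not propagate to it.
\item The tail emissions of inherited same-type checks cancel as in Eq.~\eqref{eq:common-tail-cancellation}, since $(1,1,1,1)\gamma_P=0$.
\item The resulting response $(E_P(F)\gamma_P^\intercal,0,0)$ is nontrivial.
\end{itemize}

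\textbf{Recursion.} This is the step I expect to need the most care. I must verify that $\widetilde{\mathsf S}$ again satisfies the interface assumptions of Appendix~\ref{app:circuit_distance_amplification} when it serves as the next base:
\begin{itemize}
\item each gadget keeps a private syndrome ancilla with the correct CNOT directions;
\item flags have zero ideal outcome and are insensitive to incoming data errors;
\item ideal correctness holds via Eq.~\eqref{eq:common-cnot-exchange};
\item the projection conditions \eqref{eq:common-gadget-projections} hold for the next step.
\end{itemize}
For the projection conditions, the point is that each gadget of $\widetilde{\mathsf S}$ (inherited plus tail, or flagged amplifier core plus tail) becomes a single base gadget at the next level. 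Its zero-flag bundles are exactly what the next-level certificate counts, and the SJ schedule reads its flags before the next newest tails. Flags being read before the tails is what allows discarding newest-tail faults without changing the central response or the flags.

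\textbf{Conclusion.} Induction on $t$ then gives a certificate $2^tD_P^{(0)}$ and the bounds \eqref{eq:iva-recursive-circuit-bounds}; the upper bound iterates the one-step lift. If $D_P^{(0)}=d_{\mathrm{circ},P}(\mathsf S^{(0)})$, the two endpoints coincide.

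For the scalar statement, both sectors have lower bound $2^t d_{\mathrm{circ}}(\mathsf S^{(0)})$. Take the seed sector attaining $d_{\mathrm{circ}}(\mathsf S^{(0)})$, whose certificate equals its circuit distance by \eqref{eq:iva-seed-condition}. Its lifted history gives the matching upper bound $2^t d_{\mathrm{circ}}(\mathsf S^{(0)})$ in that sector. Taking the minimum over sectors proves Eq.~\eqref{eq:flagged-circuit-distance-recursion}.
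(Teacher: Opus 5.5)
Your argument is correct and follows the paper's proof. Lemma~\ref{lem:iva-event-doubling} gives the transfer to $2D_P$, and through Lemma~\ref{lem:common-history-bounds} the circuit lower bound. The SJ history lift along a weight-two amplifier logical gives the upper bound; note that the tail cancellation there also needs side~1 to finish before side~2 samples those side coordinates. The recursion is closed by re-verifying the common interface at each level. The scalar case combines the lower bound in both sectors with the lift in the minimizing sector.

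Your aside that the flagged $\code{4,2,2}$ fails Definition~\ref{def:transferable-amplifier-certificate} is mistaken, however. Eq.~\eqref{eq:methods-amplifier-parity} bounds the weight of the parity vector $(\omega_1^\intercal v,\omega_2^\intercal v)$, not the weight of $v$. For any nontrivial $a$, the two disjoint detecting representatives are necessarily complementary weight-two pairs. So an odd response such as $0111$ flips exactly one parity, $1111$ flips none, and bundles with $r\ge2$ satisfy the bound trivially. Hence $D_P^A=2$ holds, as the Methods assert, and the paper applies Theorem~\ref{prop:common-joint-amplification} directly. Your direct use of the allocation lemma is an equivalent shortcut, not a necessary bypass.
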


\begin{proof}
Lemma~\ref{lem:iva-event-doubling} establishes the required fault-allocation property for two nontrivial base-code contractions. Since the amplifier has $d_X^A=d_Z^A=2$, Theorem~\ref{prop:common-joint-amplification} gives both the transferred certificate at $2D_P$ and the one-step bounds in Eq.~\eqref{eq:iva-one-step-circuit-bounds}.

To apply the same result recursively, we verify that the amplified circuit remains within the common interface. The entire amplified code becomes the next base code: all three data registers, including the two side registers, and all product check rows are retained. Their extraction gadgets retain all existing data couplings, internal CNOT orders, and individual flag readouts. At the next tensor step, the newly appended side couplings occur after the inherited flag measurements, while the corresponding syndrome-auxiliary readouts are deferred until these couplings are complete. Each check continues to use a private syndrome ancilla, and the valid amplified extraction schedule serves as the next base schedule.

Every product data coordinate is incident on a measured check. Central data couple to the new amplifier checks. Since $G_X=G_Z=(1,1,1,1)$, each newest side datum in $Q_2\otimes A_0$ couples to four inherited $Z$ checks, and each newest side datum in $Q_0\otimes A_2$ couples to four inherited $X$ checks. Together with the preservation of the inherited gadgets and the prescribed ordering of the new couplings, this verifies that the common interface and the hypotheses of Lemma~\ref{lem:iva-event-doubling} remain valid at every recursive level.

Equation~\eqref{eq:mixed-circuit-bounds}, specialized to $D_P^{\AAA_j}=d_P^{\AAA_j}=2$ at every level, now gives Eq.~\eqref{eq:iva-recursive-circuit-bounds}, with the propagated certificate $2^tD_P^{(0)}$. Sectorwise equality follows when the corresponding seed certificate is attained. Under Eq.~\eqref{eq:iva-seed-condition}, the scalar recursion in Eq.~\eqref{eq:common-scalar-recursion} gives Eq.~\eqref{eq:flagged-circuit-distance-recursion}.
\end{proof}

\section{Canonical HGP amplifiers}
\label{app:hgp-amplification}

For HGP amplifiers, disjoint bit--bit slices certify the code-distance gain and also separate the circuit-fault contributions. Let $M_i\in\mathbb F_2^{r_i\times n_i}$ have full row rank, with $r_i<n_i$, $k_i=n_i-r_i$, and kernel distance $d_i$. We use the literal canonical presentation~\cite{ManesClaes2025}
\begin{equation}
    \begin{aligned}
        G_X&=(M_1\otimes I_{n_2}\mid I_{r_1}\otimes M_2^\intercal),\\
        G_Z&=(I_{n_1}\otimes M_2\mid M_1^\intercal\otimes I_{r_2}).
    \end{aligned}
    \label{eq:hgp-app-checks}
\end{equation}
Both check matrices have full row rank, and
\begin{equation}
    \begin{aligned}
        n_A&=n_1n_2+r_1r_2,&k_A&=k_1k_2,\\
        m_X^A&=r_1n_2,&m_Z^A&=n_1r_2.
    \end{aligned}
    \label{eq:hgp-amplifier-parameters}
\end{equation}
The two data sectors internal to the HGP amplifier are indexed by bit--bit pairs $(i,j)$ and check--check pairs $(\alpha,\beta)$. Their slice structure supplies both the exact code distances and the transferable amplifier certificates.

\subsection{Logical slices and code distances}

In the following, for each nontrivial amplifier $X$ or $Z$ logical operator, we construct disjoint opposite-type logical representatives that anti-commute with it. The resulting slice contractions give lower bounds on the product code distances, and explicit tensor-product logical operators attain these bounds.

\begin{lemma}[Disjoint detecting slices]
\label{lem:hgp-app-slices}
Every nontrivial amplifier $Z$ logical admits $d_1$ detecting $X$ logical representatives supported on distinct bit--bit rows. Dually, every nontrivial amplifier $X$ logical admits $d_2$ detecting $Z$ representatives supported on distinct bit--bit columns.
\end{lemma}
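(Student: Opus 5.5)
The plan is to build the detecting representatives explicitly from classical codewords, using the bit--bit coordinates $(i,j)$ with $i\in[n_1]$, $j\in[n_2]$. I treat the $Z$ statement; the $X$ statement follows by exchanging the roles of $(M_1,\text{rows})$ and $(M_2,\text{columns})$. Write a nontrivial amplifier $Z$ logical as $a=(a_{bb},a_{cc})\in\ker G_X$, with $a_{bb}\in\mathbb F_2^{n_1}\otimes\mathbb F_2^{n_2}$ and $a_{cc}\in\mathbb F_2^{r_1}\otimes\mathbb F_2^{r_2}$.

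\textbf{Step 1 (row-type $X$ operators).} For $y\in\ker M_2$ and a row index $i$, set $\omega_{i,y}=(e_i\otimes y,0)$. Since $G_Z\omega_{i,y}=e_i\otimes M_2y=0$, each $\omega_{i,y}$ is a valid $X$-type logical operator candidate. It is supported only on bit--bit row $i$, so operators for distinct $i$ are automatically disjoint.

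\textbf{Step 2 (contraction gives a classical codeword).} Define $c_y:=(I_{n_1}\otimes y^\intercal)a_{bb}\in\mathbb F_2^{n_1}$. Applying $I\otimes y^\intercal$ to $G_Xa=0$, i.e.\ to $(M_1\otimes I)a_{bb}=(I\otimes M_2^\intercal)a_{cc}$, gives
\[
M_1c_y=(I\otimes (M_2y)^\intercal)a_{cc}=0.
\]
Hence $c_y\in\ker M_1$. Moreover, $\omega_{i,y}^\intercal a=(c_y)_i$. So if some $y$ gives $c_y\neq0$, then $c_y$ has weight at least $d_1$. Choosing $d_1$ indices $i$ in its support then yields $d_1$ row-disjoint representatives $\omega_{i,y}$ with $\omega_{i,y}^\intercal a=1$. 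These are nontrivial $X$ logicals, since they anticommute with a $Z$ logical and therefore do not lie in $\operatorname{im}G_X^\intercal$.

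\textbf{Step 3 (main obstacle: some $y$ must work).} It remains to show that a nontrivial $a$ has $c_y\ne0$ for some $y\in\ker M_2$. I would prove that the $X$-logical space $H^1(\AAA)$ is spanned by classes of the form $(x\otimes y,0)$, with $x$ ranging over $\mathbb F_2^{n_1}$ modulo $\operatorname{im}M_1^\intercal$ and $y\in\ker M_2$.

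The route is Künneth. Present the HGP code as the tensor product of the two-term classical complexes built from $M_1$ and $M_2$. Full row rank of both matrices kills the extra Künneth summands, leaving cohomology $\operatorname{coker}M_1^\intercal\otimes\ker M_2$, with representatives obtained by tensoring cocycles, which live on the bit--bit block.

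Given this, the pairing $(x\otimes y,0)^\intercal a=x^\intercal c_y$ vanishes for every such basis element if all $c_y=0$. Then $a$ pairs trivially with all of $H^1(\AAA)$, and nondegeneracy of the logical pairing forces $[a]=0$, a contradiction. So some $c_y\ne0$, and Step 2 finishes the proof. The careful bookkeeping is matching the literal presentation in Eq.~\eqref{eq:hgp-app-checks} to the Künneth degrees, which is the step I expect to need the most care.
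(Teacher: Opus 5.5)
Your proof is correct. Steps 1--2 are the paper's construction: your $c_y$ is the paper's $W\lambda$, and your $\omega_{i,y}$ is its $(\delta_i\lambda^\intercal,0)$.

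Where you differ is Step 3, the proof that some contraction is nonzero. The paper argues directly by contraposition. If $W\lambda=0$ for every $\lambda\in\ker M_2$, then every row of $W$ lies in $(\ker M_2)^\perp=\operatorname{row}M_2$, so $W=LM_2$. The normalizer equation then becomes $(T+M_1L)M_2=0$, and full row rank of $M_2$ forces $T=M_1L$. Hence $(W,T)=(LM_2,M_1L)\in\operatorname{im}G_Z^\intercal$ is a stabilizer.

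You instead use K\"unneth to identify the $X$-logical space with $\operatorname{coker}M_1^\intercal\otimes\ker M_2$, represented on the bit--bit block. You then invoke nondegeneracy, $(\ker G_Z)^\perp=\operatorname{im}G_Z^\intercal$. The bookkeeping you flagged does work out. Take the complexes $\mathbb F_2^{n_1}\xrightarrow{M_1}\mathbb F_2^{r_1}$ and $\mathbb F_2^{r_2}\xrightarrow{M_2^\intercal}\mathbb F_2^{n_2}$. Their total complex has differentials exactly $G_Z^\intercal$ and $G_X$, and the extra cohomology summand $\ker M_1^\intercal\otimes\operatorname{coker}M_2$ vanishes.

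The paper's route is a two-line elementary argument. It handles the check--check block $T$ explicitly rather than through duality, and it never has to match the literal presentation to a tensor complex. Yours proves more than the lemma needs: bit--bit product classes span the entire $X$-logical space. That fits the paper's K\"unneth framework but costs the extra identification step.
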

\begin{proof}
Write an amplifier $Z$ pattern as $(W,T)$, with $W\in\mathbb F_2^{n_1\times n_2}$ and $T\in\mathbb F_2^{r_1\times r_2}$. The normalizer equation is
\begin{equation}
    M_1W+TM_2=0.
    \label{eq:hgp-amplifier-normalizer}
\end{equation}
If $W\lambda=0$ for all $\lambda\in\ker M_2$, then $W=LM_2$ for some $L$. Full row rank of $M_2$ gives $T=M_1L$, making $(W,T)$ a $Z$ stabilizer. Hence, a nontrivial logical admits $\lambda\in\ker M_2$ with $y=W\lambda\neq0$. Since $M_1y=0$, at least $d_1$ entries $y_i$ equal one. For any $d_1$ such rows, choose
\begin{equation}
    \omega_i=(\delta_i\lambda^\intercal,0).
    \label{eq:hgp-detecting-representatives}
\end{equation}
These have disjoint supports, satisfy $G_Z\omega_i=0$, and pair to one with $(W,T)$. The transposed argument exchanges $M_1,M_2$ and $X,Z$, giving $d_2$ detecting columns.
\end{proof}

Applying this lemma after the common central-pairing Lemma~\ref{lem:common-central-pairing}, every nontrivial product $Z$ logical $E$ has $d_1$ nontrivial base-code contractions
\begin{equation}
z_i=\sum_{j=1}^{n_2}\lambda_j E_{\bullet,i,j}
\label{eq:hgp-app-contraction}
\end{equation}
on distinct bit--bit rows. Here $E_{\bullet,i,j}$ is its central response at amplifier coordinate $(i,j)$, and the detecting collection, including $\lambda$, is chosen for the amplifier logical supplied by the common lemma. The dual statement uses $d_2$ columns. Both newest side registers may be nonzero; the common lemma already accounts for them.

\begin{proposition}[HGP code-distance amplification]
\label{prop:hgp-code-distance}
For every CSS base code with $k>0$,
\begin{equation}
\widetilde d_Z=d_1d_Z,\qquad\widetilde d_X=d_2d_X.
\label{eq:hgp-app-static-proof}
\end{equation}
\end{proposition}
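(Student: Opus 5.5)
The plan is to sandwich each product distance between a contraction lower bound and a tensor-product upper bound, using the slice structure of Lemma~\ref{lem:hgp-app-slices} to make the two coincide.

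\emph{Upper bounds.} These follow directly from Lemma~\ref{lem:distance-amplification}, which gives $\widetilde d_P\le d_Pd_P^A$. It remains to identify the amplifier distances $d_Z^A=d_1$ and $d_X^A=d_2$.
\begin{itemize}
\item For $d_Z^A\le d_1$: take $y\in\ker M_1$ of weight $d_1$ and $\lambda\in\ker M_2$ that is detected by some $\mu\in\mathbb F_2^{n_2}$ with $\mu^\intercal\lambda=1$ (that is, $\mu\notin\operatorname{row}M_2$). The pattern $(y\lambda^\intercal\ldots)$ should then be chosen so that its row-support has size $d_1$. Concretely, I would use $(y\delta_j^\intercal,0)$, where $\delta_j$ is a column outside the support of $\operatorname{row}M_2$ (an information set). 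This is a weight-$d_1$ nontrivial $Z$ logical of the standard HGP code.
\item For $d_Z^A\ge d_1$: this is exactly the content of Lemma~\ref{lem:hgp-app-slices}, since a nontrivial $Z$ logical $(W,T)$ has $d_1$ disjoint detecting representatives, and each must meet its support.
\item The $X$ sector is symmetric, giving $d_X^A=d_2$.
\end{itemize}
Alternatively, the standard HGP distance can simply be cited via~\cite{TillichZemor2014,ZengPryadko2020}.

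\emph{Lower bounds.} I would apply Lemma~\ref{lem:distance-amplification} in its single-class form, following the common route of Lemma~\ref{lem:common-central-pairing}. Take a nontrivial product $Z$ logical $E=(M,V,U)$.
\begin{itemize}
\item Lemma~\ref{lem:common-central-pairing} supplies $x$ such that $a=M^\intercal x$ is a nontrivial amplifier $Z$ logical.
\item Lemma~\ref{lem:hgp-app-slices} gives $d_1$ representatives $\omega_i$ with disjoint supports (distinct bit--bit rows), each satisfying $G_Z\omega_i=0$ and $\omega_i^\intercal a=1$.
\item By the second part of Lemma~\ref{lem:common-central-pairing}, each contraction $M\omega_i$ (Eq.~\eqref{eq:hgp-app-contraction}) is a nontrivial base $Z$ logical, so $|M\omega_i|\ge d_Z$.
\item Disjointness together with the triangle inequality, as in Eq.~\eqref{eq:app-overlap-weight-bound} with overlap $1$, gives $d_1d_Z\le\sum_i|M\omega_i|\le|M|\le|E|$.
\end{itemize}
The $X$ sector uses the $d_2$ column representatives in the same way.

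\emph{Main obstacle.} The delicate point is that the representatives $\omega_i$ depend on the logical $a$, whereas Lemma~\ref{lem:distance-amplification} is phrased with fixed families per basis class. I avoid this by routing through Lemma~\ref{lem:common-central-pairing}, which accepts any $\omega$ with $\omega^\intercal a=1$ and therefore handles side-register support automatically. The remaining checks are that the amplifier has full row rank (stated after Eq.~\eqref{eq:hgp-app-checks}) and that $k>0$, so that nontrivial logicals exist at all.
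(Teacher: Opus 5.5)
Your proposal is correct and follows the paper's proof. The lower bound goes through Lemma~\ref{lem:common-central-pairing} and the $d_1$ (resp.\ $d_2$) disjoint slice representatives of Lemma~\ref{lem:hgp-app-slices}, and the upper bound tensors a minimum-weight base logical with the amplifier logical $(y\delta_j^\intercal,0)$. One wording fix: $j$ must satisfy $\delta_j\notin\operatorname{row}M_2$, equivalently $j\in\operatorname{supp}(\ker M_2)$, which exists since $r_2<n_2$; it should not lie ``outside the support of $\operatorname{row}M_2$,'' since that would require a zero column of $M_2$.
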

\begin{proof}
The disjoint contractions give
\begin{equation}
    \operatorname{wt}(E)
    \geq\sum_i\sum_j\operatorname{wt}(E_{\bullet,i,j})
    \geq\sum_i\operatorname{wt}(z_i)\geq d_1d_Z,
\end{equation}
where $i$ ranges over the selected rows. For the upper bound, choose a weight-$d_1$ word $c\in\ker M_1$ and a unit vector $\delta_j$ with nonzero class in $\operatorname{coker}M_2^\intercal$. The amplifier pattern $(c\delta_j^\intercal,0)$ is a nontrivial $Z$ logical of weight $d_1$; its central tensor product with a minimum-weight base $Z$ logical attains $d_1d_Z$. The dual construction proves the $X$ identity.
\end{proof}

Taking the algebraic base code to be one unchecked logical qubit gives $d_Z^A=d_1$ and $d_X^A=d_2$; this makes no circuit-coverage assumption. The code-distance identity also follows by applying the two-term tensor-distance formula of Ref.~\cite[Theorem~17]{ZengPryadko2020} to the two classical factors of the amplifier.

\subsection{Transferable amplifier certificates and recursive closure}
\label{app:hgp-circuit-assumptions}

The literal HGP check presentation keeps each gadget response within one detecting slice, making the certificate transferable. Measure each literal row of Eq.~\eqref{eq:hgp-app-checks} with an unflagged private-ancilla gadget visiting each supported datum once, in any internal order compatible with an ideal-correct CSS schedule. Assume every amplifier datum has a physical check incidence, and use the SJ interface of Appendix~\ref{app:circuit_distance_amplification}, preserving the base gadgets and their permitted flags.

\begin{proposition}[Transferable amplifier certificates for HGP codes]
\label{prop:hgp-transferable-certificate}
These amplifier circuits satisfy Definition~\ref{def:transferable-amplifier-certificate} with
\begin{equation}
    D_Z^A=d_1=d_Z^A,\qquad D_X^A=d_2=d_X^A.
    \label{eq:hgp-transferable-certificates}
\end{equation}
\end{proposition}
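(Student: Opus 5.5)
The plan is to verify Definition~\ref{def:transferable-amplifier-certificate} directly in the $Z$ sector using the detecting slices of Lemma~\ref{lem:hgp-app-slices}; the $X$ sector then follows by exchanging $M_1\leftrightarrow M_2$ and rows with columns. Fix a nontrivial amplifier $Z$ logical $a=(W,T)$. Lemma~\ref{lem:hgp-app-slices} supplies $\lambda\in\ker M_2$ and $d_1$ distinct rows $i_1,\dots,i_{d_1}$ with $(W\lambda)_{i_t}=1$. It also supplies the representatives $\omega_t=(\delta_{i_t}\lambda^\intercal,0)$. These are supported on the bit--bit row $i_t$ only, so their supports are pairwise disjoint. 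They satisfy $G_Z\omega_t=0$, because $(I\otimes M_2)(\delta_{i_t}\otimes\lambda)=\delta_{i_t}\otimes M_2\lambda=0$ and the check--check block vanishes. They also satisfy $\omega_t^\intercal a=(W\lambda)_{i_t}=1$. This gives Eq.~\eqref{eq:methods-amplifier-detection} with $D_Z^A=d_1$.

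The remaining task is the parity bound Eq.~\eqref{eq:methods-amplifier-parity} for every zero-flag pure-$Z$ bundle $B$ in every amplifier $Z$-check gadget. The gadgets are unflagged, so every bundle qualifies. The literal $Z$ check indexed by $(i,\beta)$ is the row $\delta_i^\intercal\otimes(M_2)_{\beta,\bullet}$ on the bit--bit block together with $(M_1)_{\bullet,i}^\intercal\otimes\delta_\beta^\intercal$ on the check--check block. Its bit--bit support therefore lies entirely in row $i$. The $\omega_t$ vanish on the check--check block, so only the bit--bit part of the response matters. That part is supported in row $i$, which meets at most one $\omega_t$, namely the one with $i_t=i$ if any. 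The vector $(\omega_t^\intercal e_{\mathsf G_A}(B))_t$ therefore has at most one nonzero entry. If $|B|\ge1$ its weight is at most $1\le|B|$. If $|B|=0$ the response is zero. This completes the $Z$ sector.

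For the $X$ sector, the checks of $G_X$ indexed by $(\alpha,j)$ have bit--bit support $(M_1)_{\alpha,\bullet}^\intercal\otimes\delta_j$, contained in column $j$. The detecting representatives of Lemma~\ref{lem:hgp-app-slices} live on distinct bit--bit columns and vanish on the check--check block. The same single-slice argument then gives weight at most one, hence $D_X^A=d_2$. Together with Proposition~\ref{prop:hgp-code-distance}, which gives $d_Z^A=d_1$ and $d_X^A=d_2$, this yields Eq.~\eqref{eq:hgp-transferable-certificates}.

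The main point needing care is confirming that the representatives have no check--check support. A response landing on the check--check register could otherwise couple to several slices at once. I expect this to be immediate from the explicit form $\omega_i=(\delta_i\lambda^\intercal,0)$, since $\lambda\in\ker M_2$ makes the check--check block of $G_Z\omega$ vanish identically. A second point to check is that the definition permits a single collection per logical $a$ to serve all gadgets. This holds because the row-locality argument does not depend on the gadget's internal CNOT order, so any ideal-correct unflagged schedule suffices.
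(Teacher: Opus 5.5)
Your proposal is correct and takes essentially the same approach as the paper. The detecting representatives of Lemma~\ref{lem:hgp-app-slices} lie on distinct bit--bit rows (resp.\ columns) and vanish on the check--check block, while each literal $Z$ (resp.\ $X$) check has bit--bit support in a single row (resp.\ column), so every bundle flips at most one selected parity regardless of gate order. Your explicit checks of $G_Z\omega_t=0$, $\omega_t^\intercal a=1$ and of $d_Z^A=d_1$, $d_X^A=d_2$ only spell out details the paper leaves implicit.
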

\begin{proof}
Use the detecting representatives of Lemma~\ref{lem:hgp-app-slices}. A literal $Z$ check indexed by $(i,\beta)$ has bit--bit support only in row $i$. Every fault bundle within its gadget therefore affects at most one selected detecting parity; responses on the HGP amplifier's internal check--check sector meet none of the representatives. For a nonempty bundle, the parity weight is at most $1\leq|B|$, while an empty bundle has zero response. This proves Eq.~\eqref{eq:methods-amplifier-parity} for every bundle, independently of its internal fault correlations or gate order. The literal $X$ checks similarly meet only one bit--bit column, proving the dual condition.
\end{proof}

The SJ projection conditions hold because each constituent gadget is preserved and deleting its newest tail leaves its central response and earlier flag outcomes unchanged. No new HGP flags are required. The prescribed side-window order and deferred syndrome readouts give the common history lift. Theorem~\ref{prop:common-joint-amplification} therefore transfers base certificates $D_Z^Q,D_X^Q$ to $d_1D_Z^Q,d_2D_X^Q$ and gives the corresponding circuit-distance bounds.

At each recursive step, all older data, couplings, and flags remain in the base circuit, and the same canonical HGP presentation supplies the new amplifier gadgets. Coverage is preserved: central data meet the amplifier cores, and each newest side datum has a coupling associated with a nonzero amplifier check row. The same SJ timing and fault-location embeddings are retained, so Eq.~\eqref{eq:mixed-circuit-bounds} applies at every finite level, with equality when the seed certificate is attained in the sector under consideration. The symmetric scalar case follows from Eq.~\eqref{eq:common-scalar-recursion}. No HGP structure or check-row independence is required of the base code or its descendants. The full-row-rank classical seed matrices and literal measured generators are essential hypotheses here. Changing the amplifier generators requires a new verification of their response supports.

\section{Rotated surface amplifiers}
\label{app:rotated-surface-code-amplifiers}

We establish the transferable amplifier certificates of Definition~\ref{def:transferable-amplifier-certificate} for rotated surface codes. Disjoint horizontal and vertical logical representatives provide the detecting contractions, while the CNOT order bounds the number of their parities changed by each fault. This verifies the amplifier-specific input to Theorem~\ref{prop:common-joint-amplification}. We then apply Theorem~\ref{prop:common-joint-amplification} to obtain the amplified certificate and circuit-distance bounds at each amplification step. Throughout this appendix, we use the assumptions of Methods~\ref{subsec:circuit-distance-methods}. Let us write $d_{\mathrm{circ},P}(\mathsf S_Q)$ for the base circuit distance and $D_P^Q$ for its fault-response certificate.

\subsection{Logical rows and columns}

Horizontal and vertical logical strings furnish disjoint contractions for the two Pauli sectors. Let $\AAA_{h,w}$ be the standard rotated surface-code rectangle with $h$ rows and $w$ columns of data qubits, where $h,w\geq2$. Choose the boundary types so that logical $X$ strings run vertically and logical $Z$ strings horizontally. We use the literal checkerboard plaquette generators, with weight-two $X$ checks on the top and bottom boundaries and weight-two $Z$ checks on the left and right boundaries, continuing the plaquette coloring. The $hw-1$ generators are independent, so both $G_X$ and $G_Z$ have full row rank and $k_A=1$.

Write $z_r$ for the support vector of the horizontal logical $Z$ in row $r$, and $x_c$ for that of the vertical logical $X$ in column $c$. Then
\begin{equation}
    G_Xz_r=0,\qquad G_Zx_c=0,\qquad z_r^\intercal x_c=1.
    \label{eq:rs-detecting-rows-columns}
\end{equation}
The $h$ row supports are pairwise disjoint, as are the $w$ column supports. All $z_r$ represent the same logical $Z$ class, and all $x_c$ represent the same logical $X$ class. Since $k_A=1$, every nontrivial amplifier $X$ logical pairs to one with every $z_r$, and every nontrivial amplifier $Z$ logical pairs to one with every $x_c$. These collections therefore satisfy the disjointness and detection requirements of Definition~\ref{def:transferable-amplifier-certificate}.

\begin{lemma}[Row and column contractions]
\label{lem:rs-stripes}
    For every nontrivial product $X$ logical, its $h$ row contractions represent the same nontrivial base-code $X$ logical class. Dually, the $w$ column contractions of a nontrivial product $Z$ logical represent the same nontrivial base-code $Z$ logical class.
\end{lemma}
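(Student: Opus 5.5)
The plan is to reduce the statement to the contraction argument of Lemma~\ref{lem:distance-amplification}, specialized to $k_A=1$, in the same way Corollary~\ref{lem:iva-logical-pairing} specialized it to the $\code{4,2,2}$ amplifier. Fix a nontrivial product $X$ logical $E$ with central component $\pi E=\sum_{a}E_a\otimes e_a$, where $E_a\in Q_1$. Because $G_X,G_Z$ have full row rank, the cohomological K\"unneth isomorphism in Eq.~\eqref{eq:app-kunneth-cohomology-group} gives $H^2(\mathcal T)\cong H^1(\QQQ)\otimes H^1(\AAA)$. Since $k_A=1$, the class of $E$ can be written as $[E]=[u]\otimes[\ell_X^A]$ for a single base class $[u]$. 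This class is nonzero because $[E]\neq0$.

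Next, for each row $r$ I form the contraction $E_{z_r}:=(I_{Q_1}\otimes z_r^\intercal)\pi E=\sum_{a\in\operatorname{supp}z_r}E_a$. Two facts are needed.

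First, $E_{z_r}$ must be a base $X$ logical, i.e. it lies in $\ker H_Z$. This follows from the second normalizer equation of Eq.~\eqref{eq:tensor-HZ} (equivalently Eq.~\eqref{eq:common-central-normalizer}). Multiplying $H_Z M+VG_X=0$ on the right by $z_r$, and using $G_Xz_r=0$ from Eq.~\eqref{eq:rs-detecting-rows-columns}, gives $H_Z M z_r=0$. Here $M$ is the matrix form of $\pi E$.

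Second, its class must equal $[u]$, by the dual pairing. For any base $Z$ logical $y\in\ker H_X$, the product $y\otimes z_r$, placed in the central register with zero side components, is a product $Z$ logical. This holds because $H_Xy=0$ and $G_Xz_r=0$ kill both $\widetilde H_X$ blocks in Eq.~\eqref{eq:tensor-HX}. Its pairing with $E$ therefore depends only on classes, and it equals $y^\intercal M z_r$, since the side components of $y\otimes z_r$ vanish. By K\"unneth this pairing is $\langle[y],[u]\rangle\cdot\langle[z_r],[\ell^A_X]\rangle$. The second factor is $1$, since every $z_r$ represents the same $Z$ class and pairs to one with every nontrivial amplifier $X$ logical, as noted after Eq.~\eqref{eq:rs-detecting-rows-columns}. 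Hence $y^\intercal E_{z_r}=\langle[y],[u]\rangle$ for every $y$. By nondegeneracy of the base pairing, $[E_{z_r}]=[u]\neq0$, and the result is the same for every $r$.

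An alternative route uses Lemma~\ref{lem:common-central-pairing} directly. It supplies $x\in\ker H_X$ with $a=M^\intercal x$ a nontrivial amplifier $X$ logical, and then $z_r^\intercal a=1$ gives nontriviality of each $Mz_r$. Sameness of the classes then follows because $z_r+z_{r'}\in\operatorname{im}G_Z^\intercal$. Writing $z_r+z_{r'}=G_Z^\intercal c$, we get $M(z_r+z_{r'})=MG_Z^\intercal c=H_X^\intercal Uc$, which is a base $X$ stabilizer by the first normalizer equation in Eq.~\eqref{eq:common-central-normalizer}. I would present this second route, since it is short and handles the side data explicitly.

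The dual statement is identical with $X\leftrightarrow Z$, rows replaced by the columns $x_c$, and $G_Zx_c=0$ used in place of $G_Xz_r=0$. The only delicate point is the sign and transposition bookkeeping in the normalizer equations, making sure that the side-register terms drop out exactly because the row and column supports lie in $\ker G_X$ and $\ker G_Z$, respectively.
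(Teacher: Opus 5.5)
Your proposal is correct, and the route you say you would present is exactly the paper's proof: Lemma~\ref{lem:common-central-pairing} with $z_r^\intercal a=1$ gives nontriviality of each $Mz_r$. Then $z_r+z_{r'}=G_Z^\intercal c$ together with $MG_Z^\intercal=H_X^\intercal U$ shows that the contractions differ by a base $X$ stabilizer. Your first route, using K\"unneth and the pairing, is also valid, and the only slip is that your ``first'' and ``second'' labels for the two normalizer equations are swapped.
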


\begin{proof}
    Let $E_{r,c}\in\mathbb F_2^n$ be the central data pattern in the base-code copy indexed by amplifier datum $(r,c)$, and collect these vectors as the columns of $M$. The contractions are the modulo-two sums
    \begin{equation}
        Mz_r=\sum_{c=1}^{w}E_{r,c},\qquad
        Mx_c=\sum_{r=1}^{h}E_{r,c}.
        \label{eq:rs-contractions}
    \end{equation}
    For a nontrivial product $X$ logical, the central-pairing Lemma~\ref{lem:common-central-pairing} supplies a nontrivial amplifier $X$ logical detected by all $z_r$. Hence, every $Mz_r$ is a nontrivial base-code $X$ logical. This argument includes product logicals with nonzero support on either newest side register. 
    
    To compare their classes, write the product pattern as $(M,V,U)$ using Appendix~\ref{app:circuit_distance_amplification}. Its normalizer equations include $MG_Z^\intercal=H_X^\intercal U$. Since any two rows differ by amplifier $Z$ stabilizers, $z_r+z_s=G_Z^\intercal y$ for some $y$, giving
    \begin{equation}
        Mz_r+Mz_s=H_X^\intercal Uy.
        \label{eq:rs-stripe-equivalence}
    \end{equation}
    Thus, the contractions differ only by a base-code $X$ stabilizer. Exchanging $X,Z$ and rows, columns proves the dual statement.
\end{proof}

Disjointness also gives the code distances. The central weight in each row is at least the weight of its contraction, which is at least $d_X$ for a nontrivial product $X$ logical. Summing over rows gives $\widetilde d_X\geq h d_X$. A minimum-weight base $X$ logical tensored with a vertical amplifier logical attains this bound with zero newest-side support. The column argument is dual, so
\begin{equation}
    \widetilde d_X=h d_X,\qquad \widetilde d_Z=w d_Z.
    \label{eq:rs-code-distance}
\end{equation}
The same disjoint-representative argument on the amplifier itself gives $(d_X^A,d_Z^A)=(h,w)$. In particular, $\AAA_{d_A,d_A}$ has parameters $\code{d_A^2,1,d_A}$.

\subsection{Measurement order and transferable amplifier certificates}
\label{app:rs-assumptions}

We choose a CNOT order so that a single fault changes at most one detecting row or column parity~\cite{PhysRevA.90.062320,PhysRevApplied.8.034021}. Measure each literal amplifier check with a syndrome ancilla, visiting each supported datum once. An $X$ ancilla controls its data CNOTs, while a $Z$ ancilla receives them, as in Methods~\ref{subsec:syndrome-circ-amplified}. For a bulk plaquette, label the corners northwest, northeast, southwest, and southeast. Use the orders~\cite{PhysRevA.90.062320,PhysRevApplied.8.034021,GoogleThreshold}
\begin{align}
    X:&\quad\mathrm{NW},\mathrm{NE},\mathrm{SW},\mathrm{SE},
    \label{eq:rs-x-order}\\
    Z:&\quad\mathrm{NW},\mathrm{SW},\mathrm{NE},\mathrm{SE}.
    \label{eq:rs-z-order}
\end{align}
Thus, $X$ checks visit one row before the other, and $Z$ checks visit one column before the other. The two-qubit hooks are transverse to the corresponding logical strings. The proof below uses the local CNOT orders specified above, which ensure that a single fault changes at most one detecting row or column parity.

\begin{proposition}[Transferable amplifier certificates for rotated surface codes] \label{prop:rs-transferable-certificates}
These amplifier circuits satisfy Definition~\ref{def:transferable-amplifier-certificate} with
\begin{equation}
    D_X^A=h=d_X^A,\qquad D_Z^A=w=d_Z^A.
    \label{eq:rs-transferable-certificates}
\end{equation}
\end{proposition}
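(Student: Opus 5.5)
We verify Definition~\ref{def:transferable-amplifier-certificate} directly, using the row representatives $z_r$ for the $X$ sector and the column representatives $x_c$ for the $Z$ sector. Disjointness, $G_Xz_r=0$, $G_Zx_c=0$ and the pairing $\omega_i^\intercal a=1$ with every nontrivial logical of the matching type were already recorded around Eq.~\eqref{eq:rs-detecting-rows-columns}, using $k_A=1$. The choice is independent of $a$. It therefore remains only to prove the parity bound Eq.~\eqref{eq:methods-amplifier-parity} for every zero-flag pure-$P$ bundle in every amplifier $P$-check gadget. These gadgets are unflagged, so every bundle qualifies.

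\textbf{Reduction to single faults.} By binary-linear fault propagation, the data response of a bundle $B$ is the sum of its single-fault responses. The map $v\mapsto(\omega_i^\intercal v)_i$ is linear. The Hamming weight of the resulting parity vector is therefore at most the sum of the weights of the single-fault parity vectors. It suffices to show that every single pure-$P$ fault in an amplifier $P$-check gadget changes at most one detecting parity; then $\operatorname{wt}\le|B|$ follows.

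\textbf{Single-fault check for $X$ gadgets and rows $z_r$.} By Eq.~\eqref{eq:dp-bare-responses}, a single fault after the $j$th CNOT has a data response in $\{\mathbf b_{a_j},h_j,\mathbf b_{a_j}+h_j\}$, where $h_j$ is the suffix of the CNOT order. Weight-one responses lie in one row and so meet at most one $z_r$. For suffixes, use the order NW, NE, SW, SE from Eq.~\eqref{eq:rs-x-order}:
\begin{itemize}
\item the suffixes are $\{\mathrm{NE,SW,SE}\}$, $\{\mathrm{SW,SE}\}$, $\{\mathrm{SE}\}$, and the empty set;
\item $\{\mathrm{SW,SE}\}$ lies in one row and meets $z_r$ with even parity;
\item $\{\mathrm{NE,SW,SE}\}$ has odd parity only with the north row;
\item $\{\mathrm{SE}\}$ has odd parity only with the south row.
\end{itemize}
Adding $\mathbf b_{a_j}$ flips at most one row's parity. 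Each combination must still be checked to ensure two rows are never flipped simultaneously. For example, $\mathbf b_{\mathrm{NW}}+h_1$ is the full plaquette up to stabilizer equivalence, which has even parity with every row. Likewise, $\mathbf b_{\mathrm{NE}}+\{\mathrm{SW,SE}\}$ has parity zero in the north row and zero in the south row. The weight-two boundary $X$ checks, on the top and bottom boundaries, lie within a single row, so every response they produce meets only that row.

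\textbf{Single-fault check for $Z$ gadgets and columns $x_c$.} The same argument applies with the order NW, SW, NE, SE from Eq.~\eqref{eq:rs-z-order}, with columns in place of rows and left/right boundary checks in place of top/bottom ones.

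\textbf{Conclusion.} This gives $D_X^A=h$ and $D_Z^A=w$. These equal $d_X^A$ and $d_Z^A$ by Eq.~\eqref{eq:rs-code-distance}.

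\textbf{Main obstacle.} The main obstacle is the exhaustive single-fault case check. Each of the three response types at each CNOT position must be shown to change at most one detecting parity. Particular care is needed with ancilla faults whose propagated suffix spans both rows, or both columns, of a plaquette.
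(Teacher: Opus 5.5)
Your proposal is correct and follows the paper's proof. It uses the same fixed row and column representatives, reduces bundles to single faults by binary linearity and subadditivity of Hamming weight, checks that each single fault flips at most one row (column) parity under the orders in Eqs.~\eqref{eq:rs-x-order} and~\eqref{eq:rs-z-order}, and handles boundary checks by noting that they lie in a single row or column.

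There is one harmless slip: $\mathbf b_{\mathrm{NE}}+\{\mathrm{SW},\mathrm{SE}\}=h_1$ has parity one in the north row, not zero. The paper avoids checking your combinations case by case by observing that $\mathbf b_{a_j}+h_j=h_{j-1}$, so every single-fault response is either a singleton or a suffix of the CNOT order. A preparation fault gives the full plaquette and a readout fault gives no data response, and the suffix row parities are exactly those tabulated in Eq.~\eqref{eq:rs-parity}.
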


\begin{proof}
    For $P=X$, use the fixed detecting collection $z_1,\ldots,z_h$. It remains to verify the bundle-parity inequality. A pure-$X$ fault in an amplifier $X$-check gadget produces either a data singleton or a suffix of the ordered support. At a CNOT, an ancilla-only output fault gives the suffix after the current datum; a correlated ancilla--data $XX$ fault includes that datum in the suffix. Each is one faulty physical location. For the order in Eq.~\eqref{eq:rs-x-order}, the suffix parities, listed as (top row, bottom row), are
    \begin{equation}
        \begin{array}{c|ccccc}
        \text{suffix length}&0&1&2&3&4\\ \hline
        \text{row parities}
         &(0,0)&(0,1)&(0,0)&(1,0)&(0,0).
        \end{array}
        \label{eq:rs-parity}
    \end{equation}
    Thus, a single fault changes at most one detecting row parity. A singleton has the same property. Boundary $X$ checks lie within one row; preparation faults give a full-check response or zero, and readout faults give no data response. These cases obey the same bound.
    
    Binary linearity and sub-additivity of Hamming weight yield
    \begin{equation}
        \operatorname{wt}\!\left[
         \bigl(z_r^\intercal e_{\mathsf G_A}(B)\bigr)_{r=1}^{h}\right]
        \leq\sum_{f\in B}1=|B|.
        \label{eq:rs-bundle-parity}
    \end{equation}
    This is exactly the required parity inequality, for the same detecting collection and every amplifier $X$ gadget and bundle. 
    
    For $P=Z$, use $x_1,\ldots,x_w$. The column-first order in Eq.~\eqref{eq:rs-z-order} gives the same table for the left and right column parities, and boundary $Z$ checks lie within one column. The identical argument proves the bound for every pure-$Z$ bundle.
\end{proof}

The cancellation in Eq.~\eqref{eq:rs-parity} is the reason for the gate order. A horizontal two-qubit $X$ hook cancels in a row contraction, while a three-qubit suffix leaves only one odd row. The physical response may occupy two rows, but its contracted response requires at most one singleton term. The column statement for $Z$ faults is the same. No new flags are needed.

\subsection{SJ amplification and recursive closure}

We now apply the general circuit-distance amplification theorem using the surface-code bound proved above. We use the SJ construction of Methods~\ref{subsec:syndrome-circ-amplified}, under the conditions of Appendix~\ref{app:circuit_distance_amplification}, retaining the base-code circuits and their flags and using the amplifier gate order specified above. To repeat the amplification, we take the entire amplified code and its measurement circuit as the new base.

With this schedule, faults in the newly added side couplings do not affect the data in the base-code copies ($Q_1\otimes A_1$ in Eq.~\ref{eq:tensor-qubit-sectors} of Methods) or earlier flag outcomes. Each extended-check fault group contributes a base-code response to at most one row or column contraction. Proposition~\ref{prop:rs-transferable-certificates} shows that coupling-check faults change no more row or column parities than the number of faults. Applying the base certificate to these contractions then gives the amplified certificate. For the upper bound, we repeat an undetected base-code fault pattern in the copies selected by an amplifier logical operator. The SJ schedule ensures that this pattern remains undetected in the amplified circuit.

\begin{theorem}[Rotated-surface circuit amplification] \label{thm:rs-amplification}
    Let $\mathsf S_Q$ be a CSS base-code extraction circuit with fault-response certificates $D_X^Q,D_Z^Q$. Under the amplifier schedule and SJ conditions specified above, the amplified circuit $\widetilde{\mathsf S}$ inherits the certificates $hD_X^Q,wD_Z^Q$ and satisfies
    \begin{equation}
    \begin{aligned}
        hD_X^Q&\leq d_{\mathrm{circ},X}(\widetilde{\mathsf S})
         \leq h\,d_{\mathrm{circ},X}(\mathsf S_Q),\\
        wD_Z^Q&\leq d_{\mathrm{circ},Z}(\widetilde{\mathsf S})
         \leq w\,d_{\mathrm{circ},Z}(\mathsf S_Q).
    \end{aligned}
    \label{eq:rs-one-step}
    \end{equation}
    In a sector with $D_P^Q=d_{\mathrm{circ},P}(\mathsf S_Q)$, the bounds coincide and the transferred certificate is attained.
\end{theorem}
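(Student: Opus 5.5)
The plan is to obtain the statement as a direct specialization of Theorem~\ref{prop:common-joint-amplification}, taking $\AAA=\AAA_{h,w}$ with the literal checkerboard presentation and the CNOT orders in Eqs.~\eqref{eq:rs-x-order} and~\eqref{eq:rs-z-order}. That theorem has four hypotheses, and I would check each in turn.

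\textbf{Full row rank of the amplifier checks.} The independence of the $hw-1$ plaquette generators gives full row rank of $G_X$ and $G_Z$.

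\textbf{Transferable amplifier certificates.} Proposition~\ref{prop:rs-transferable-certificates} supplies certificates with $D_X^A=h$ and $D_Z^A=w$. The disjoint detecting collections are $\{z_r\}$ and $\{x_c\}$. Because $k_A=1$, these collections work for every nontrivial amplifier logical $a$, as Eq.~\eqref{eq:rs-detecting-rows-columns} and the remark following it show.

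\textbf{Projection conditions of Eq.~\eqref{eq:common-gadget-projections}.} I would argue directly from the SJ ordering.
\begin{itemize}
\item For an extended base check in copy $b$, every fault in its newest side tail occurs after the central couplings and after all inherited flag readouts. Deleting these faults therefore leaves the central response $e_{\mathsf G_Q}(B)\delta_b^\intercal$ and the zero flag record unchanged, with $|B|\le|\widehat B|$.
\item For a new amplifier check at datum $q$, the open core touches only the central qubits $q\otimes e_{(r,c)}$. Its central response is therefore $\delta_q e_{\mathsf G_A}(B)^\intercal$. No flags exist in the rotated-surface gadgets, so the zero-flag condition is vacuous.
\end{itemize}
Faults in opposite-type gadgets and on bare data contribute singletons, as in Lemma~\ref{lem:common-history-bounds}.

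\textbf{History-lifting conditions.} These need the synchronized copies of the base cores and the side~1 window finishing before side~2 samples. Both hold by the construction in Methods~\ref{subsec:syndrome-circ-amplified}. Base fault locations, including deferred readouts, embed with the same count.

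With all four hypotheses in place, Eq.~\eqref{eq:common-joint-bounds} yields $hD_X^Q\le d_{\mathrm{circ},X}(\widetilde{\mathsf S})\le d_X^A d_{\mathrm{circ},X}(\mathsf S_Q)$. Substituting $d_X^A=h$ from Eq.~\eqref{eq:rs-code-distance} applied to the amplifier itself gives the $X$ line of Eq.~\eqref{eq:rs-one-step}. The $Z$ line follows in the same way with $w$. The certificate-transfer part of Theorem~\ref{prop:common-joint-amplification} says $\widetilde{\mathsf S}$ inherits $hD_X^Q$ and $wD_Z^Q$. When $D_P^Q=d_{\mathrm{circ},P}(\mathsf S_Q)$, the two endpoints coincide, so the transferred certificate equals the circuit distance and is attained.

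The main obstacle I expect is verifying the history-lifting condition for the rotated surface amplifier. The concern is that the amplifier cores run between the base cores and the side tails, so the copied base error $e_r\gamma_P^\intercal$ passes through them. I must show that same-type amplifier cores acquire no propagating syndrome component from incoming data errors; this holds because of the CNOT directions. I must also show that opposite-type amplifier cores sample $G_{\bar P}\gamma_P=0$. For that I would take $\gamma_X$ to be a column string $x_c$ and $\gamma_Z$ a row string $z_r$, which have weights $h$ and $w$ and are annihilated by the opposite checks. Then Eq.~\eqref{eq:common-tail-cancellation} gives cancellation of the newest-side emissions before side~2 samples them. I would state this once and note that it uses no property of the plaquette ordering beyond ideal correctness.
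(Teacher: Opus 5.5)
Your proposal is correct and follows essentially the same route as the paper. Both verify full row rank of $G_X,G_Z$, take $D_X^A=h$ and $D_Z^A=w$ from Proposition~\ref{prop:rs-transferable-certificates}, and invoke Theorem~\ref{prop:common-joint-amplification} under the SJ conditions. Both obtain the upper bound by copying a minimum-fault base history into the copies selected by a column string $x_c$ or a row string $z_r$. Your explicit checks of the projection and history-lifting conditions spell out what the paper's proof leaves implicit in ``the SJ conditions stated above.'' These checks are the deletion of newest-tail faults, and the use of $G_Zx_c=0$ and $G_Xz_r=0$ in Eq.~\eqref{eq:common-tail-cancellation}.
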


\begin{proof}
    The standard rotated-surface check matrices have full row rank, and Proposition~\ref{prop:rs-transferable-certificates} gives $D_P^A=d_P^A$. Together with the SJ conditions stated above, this satisfies the assumptions of Theorem~\ref{prop:common-joint-amplification}, which gives the amplified certificates and the stated circuit-distance bounds. For the upper bound, we repeat a minimum-fault undetected logical history of the base circuit in the $h$ copies selected by a vertical amplifier $X$ logical, or the $w$ copies selected by a horizontal amplifier $Z$ logical. The resulting history remains undetected and has $h$ or $w$ times as many faults, respectively. If the base-code certificate equals its circuit distance in a given sector, the lower and upper bounds coincide in that sector.
\end{proof}

For recursion, retain the entire amplified code and circuit as the next base, including all older side data, couplings, and flags. Only the newest side data are discarded in the next projection. The same local orders, flag timing, synchronized cores, side-window order, and fault-location embeddings are retained. Every central datum meets an amplifier check, and each newest side datum has an inherited-check coupling supplied by a nonzero amplifier check row, so physical check coverage is preserved. Neither surface-code structure nor check-row independence is required of the base code or its descendants.

Therefore, successive rectangular amplifiers $\AAA_{h_j,w_j}$ transfer the seed certificates to
\begin{equation}
    D_X^{(t)}=\left(\prod_{j=1}^{t}h_j\right)D_X^{(0)}, \qquad  D_Z^{(t)}=\left(\prod_{j=1}^{t}w_j\right)D_Z^{(0)}.
    \label{eq:rs-recursive-certificates}
\end{equation}
The corresponding circuit-distance bounds are the mixed-amplifier bounds of Methods~\ref{subsec:circuit-distance-methods}, with the same sector-wise gains. Let $t$ count amplification steps. For a fixed square amplifier and a seed certificate attained in sector $P$,
\begin{equation}
    d_{\mathrm{circ},P}(\mathsf S^{(t)}) =d_A^t d_{\mathrm{circ},P}(\mathsf S^{(0)}).
\label{eq:rs-exact-recursion}
\end{equation}
For a fixed square amplifier, the overall circuit distance multiplies exactly if both seed sectors admit a certificate at the common value $D_0=d_{\mathrm{circ}}(\mathcal S^{(0)})$, as discussed in Methods~\ref{subsec:circuit-distance-methods}. The seed's $X$- and $Z$-sector circuit distances need not be equal. Without this condition, the transferred certificates still give lower bounds at every amplification step, but do not establish exact circuit-distance multiplication.

\section{Logical coordinates, symmetry lifts, and surgery guarantees}
\label{app:logical-addressability}

This appendix establishes how the tensor construction carries logical coordinates, compatible physical permutations, and prescribed surgery constraints from a base code to its amplification. Our aim is to identify the inherited logical structure and determine which distance guarantees apply to the resulting merged codes. We use the conventions of Methods~\ref{subsec:addressability-methods}. Throughout, $H_0(\AAA)=H_2(\AAA)=0$, and maps between tensor complexes are formed before central CSS truncation.

\subsection{Inherited logical coordinates and symmetries}

We first verify that the tensor representatives retain the logical labels and dual pairings of the constituent codes. Their physical supports specify the inherited addresses used in the symmetry and surgery constructions below.

\begin{proposition}[Inherited logical coordinates] \label{prop:logical-basis-inheritance}
    For full dual logical bases of $\QQQ$ and $\AAA$, Eq.~\eqref{eq:canonical-basis-product} gives full dual logical bases of $\TTT$. For a base canonical set containing only $k_{\mathrm{can}}$ dual logical pairs, the same construction gives $k_{\mathrm{can}}k_A$ independent dual pairs. Their physical supports and recursive labels are given by Eq.~\eqref{eq:canonical-basis-product} and its recursive application.
\end{proposition}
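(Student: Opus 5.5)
The plan is to check the three ingredients of a dual logical basis directly. These are that each product representative is a valid logical, the $X$--$Z$ pairing matrix is the identity, and the number of pairs equals $\widetilde k=kk_A$.

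\emph{Validity.} Take $\widetilde\ell_{Z,\mu a}=(\ell_{Z,\mu}\otimes\ell^A_{Z,a},0,0)$. Apply $\widetilde H_X$ from Eq.~\eqref{eq:tensor-HX}. The first block row gives $(H_X\ell_{Z,\mu})\otimes\ell^A_{Z,a}=0$ and the second gives $\ell_{Z,\mu}\otimes(G_X\ell^A_{Z,a})=0$, so the pattern commutes with every $X$ check. The $X$ representatives are handled the same way using $\widetilde H_Z$ from Eq.~\eqref{eq:tensor-HZ}, $H_Z\ell_{X,\mu}=0$ and $G_Z\ell^A_{X,a}=0$.

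\emph{Pairing.} Both representatives of a pair are supported only on the central register, so the pairing factorizes:
\[
\widetilde\ell_{X,\mu a}^{\intercal}\widetilde\ell_{Z,\nu b}=(\ell_{X,\mu}^{\intercal}\ell_{Z,\nu})(\ell^{A\intercal}_{X,a}\ell^A_{Z,b})=\delta_{\mu\nu}\delta_{ab}.
\]
This uses the duality of the constituent bases.

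\emph{Independence and completeness.} The pairing identity with the identity matrix implies independence of the classes. If $\sum c_{\mu a}[\widetilde\ell_{Z,\mu a}]=0$, then the combination lies in $\operatorname{im}\widetilde H_Z^\intercal$. It therefore pairs to zero with every $\widetilde\ell_{X,\nu b}$, which lies in $\ker\widetilde H_Z$, and this forces $c_{\nu b}=0$. The same argument applies to the $X$ classes. For a full base basis we have $kk_A$ independent classes. Since $H_0(\AAA)=H_2(\AAA)=0$, the Künneth reduction in Eq.~\eqref{eq:app-kunneth-homology-group} gives $\dim H_2(\mathcal T)=kk_A$, so these classes form a basis of the $Z$-logical space. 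The cohomological version, Eq.~\eqref{eq:app-kunneth-cohomology-group}, gives the same conclusion for the $X$-logical space. Alternatively, the Künneth isomorphism sends $[u]\otimes[v]\mapsto[u\otimes v]$ and so maps product bases to bases directly. For a partial canonical set with $k_{\mathrm{can}}$ base pairs, the same pairing argument still gives $k_{\mathrm{can}}k_A$ independent dual pairs, even though they do not span.

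\emph{Supports and recursion.} The support claim is immediate: the support of $\ell_{P,\mu}\otimes\ell^A_{P,a}$ consists of the copies indexed by $\operatorname{supp}\ell^A_{P,a}$ of $\operatorname{supp}\ell_{P,\mu}$, and there is no side support. For recursion, apply the statement with $\TTT$ as the new base. Its full dual basis $\{\widetilde\ell_{P,\mu a}\}$ yields representatives $(\widetilde\ell_{P,\mu a}\otimes\ell^A_{P,b},0,0)$ with labels $(\mu,a,b)$, and induction on $t$ gives the recursive labels. The only point needing care is that the next-level central register $Q_1^{(t)}$ includes earlier side data. The earlier representatives have zero components there, so the supports stay nested as claimed. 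I expect no real obstacle here; the only delicate point is correctly invoking the full-row-rank hypothesis to rule out the extra Künneth summands.
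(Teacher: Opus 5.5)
Your proposal is correct and follows essentially the same route as the paper. It checks cycle/cocycle validity with vanishing side components, uses the factorized pairing $\delta_{\mu\nu}\delta_{ab}$ for independence (including the partial canonical case), obtains completeness from the K\"unneth reduction under full row rank of $G_X,G_Z$, and reapplies the Cartesian support formula to the full current code for recursion; your explicit block-matrix check and the independence-plus-dimension-count route to spanning are minor variants of the paper's direct appeal to the K\"unneth isomorphism.
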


\begin{proof}
    A tensor of two $Z$ cycles is a degree-two cycle, and a tensor of two $X$ cocycles is a degree-two cocycle. Their side components can be set to zero because each factor separately has zero syndrome. The K\"unneth isomorphism gives
    \begin{equation}
    H_2(\mathcal T)\cong H_1(\QQQ)\otimes H_1(\AAA),
    \label{eq:addressability-kunneth}
    \end{equation}
    and the dual statement holds for cohomology. Central truncation retains the maps into and out of $T_2$, so it preserves these logical spaces. The tensor representatives therefore span all $kk_A$ logical dimensions. Their pairing factors as
    \begin{equation}
        \begin{aligned}
        \widetilde\ell_{X,\mu a}^{\intercal}\widetilde\ell_{Z,\nu b}
        &=(\ell_{X,\mu}^{\intercal}\ell_{Z,\nu})
        ((\ell^A_{X,a})^{\intercal}\ell^A_{Z,b})\\
        &=\delta_{\mu\nu}\delta_{ab}.
        \end{aligned}
        \label{eq:product-logical-pairing}
    \end{equation}
    Restricting this identity pairing to a canonical subset proves independence of its $k_{\mathrm{can}}k_A$ pairs without assuming completeness.
    
    A binary tensor coordinate is nonzero exactly when both factor coordinates are nonzero, proving the Cartesian support formula. Reapplying the construction to the complete current code proves the recursive formula. This argument uses the prescribed tensor-and-truncate order, rather than an assumed associativity of truncated products. The representatives vanish on all newly introduced side data, although the checks and code space involve those data.
\end{proof}

For arbitrary representatives and Pauli types, the set identity $(S\times U)\cap(T\times V)=(S\cap T)\times(U\cap V)$ gives
\begin{equation}
    \begin{aligned}
    &\bigl|\widetilde S_{P,\mu a}\cap\widetilde S_{P',\nu b}\bigr|\\
    &\qquad=|S_{P,\mu}\cap S_{P',\nu}|\,
    |S^A_{P,a}\cap S^A_{P',b}|.
    \end{aligned}
    \label{eq:canonical-intersection-inheritance}
\end{equation}
Consequently, disjoint base supports remain disjoint after replication. If both factor bases have exactly one physical intersection for each conjugate pair and none for nonconjugate pairs, the product has the same property. Duality by itself only fixes intersection parity and does not exclude additional even intersections. In particular, different amplifier labels can have overlapping supports within the same inherited base region.

Canonical HGP and \textit{lifted product} (LP) bases supply structured representatives~\cite{XuZhouZheng2024Homological,ZhengZhengJiangXu2026CanonicalLP}; clustered bases provide another choice~\cite{QGPU}. If the amplifier also has row--column labels, the HGP-style product label can be written $(i,j,i_A,j_A)$. These are statements about the chosen coordinates and supports, with no claim of minimum weight or closure within an ordinary two-factor code family. In particular, the canonical LP definition permits canonical rows to span only a logical subspace~\cite[Definition~III.2]{ZhengZhengJiangXu2026CanonicalLP}; the remaining classes are still needed for a full logical basis.

Having identified the inherited coordinates, we next determine when a base physical permutation lifts to the amplified data. Compatibility with the retained check rows is essential because these rows also label side data.

\begin{proposition}[Lifting a compatible permutation] \label{prop:permutation-lift}
    Under Eq.~\eqref{eq:permutation-chain-condition}, the physical permutation in Eq.~\eqref{eq:lifted-data-permutation} preserves the product stabilizer code. Its $Z$-logical action is $\Pi_{1,*}\otimes I_{H_1(\AAA)}$; its $X$-logical action in dual coordinate bases is the inverse transpose.
\end{proposition}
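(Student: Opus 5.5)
The plan is to upgrade the base permutation $\Pi=(\Pi_0,\Pi_1,\Pi_2)$ to a chain automorphism of $\QQQ$ and then tensor it with the identity on $\AAA$. First, reinterpret Eq.~\eqref{eq:permutation-chain-condition} as saying that $\Pi$ commutes with the differentials $H_X:Q_1\to Q_0$ and $H_Z^\intercal:Q_2\to Q_1$. Each $\Pi_r$ is a permutation matrix, so $\Pi_r^{-1}=\Pi_r^\intercal$. Consequently $\Pi$ is an invertible chain map $\QQQ\to\QQQ$, and its inverse is again a chain map.

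Next, form $\Pi\otimes I_{\AAA}$ on the total complex $\mathcal T$. This map acts on each summand $Q_i\otimes A_j$ by $\Pi_i\otimes I_{A_j}$. It commutes with $\partial_{\QQQ}\otimes I+I\otimes\partial_{\AAA}$: it commutes with the first term because $\Pi$ is a chain map, and with the second term trivially. Restricting to degrees $3,2,1$ gives permutation matrices $\widetilde\Pi_3,\widetilde\Pi_2,\widetilde\Pi_1$ that satisfy
\[
\widetilde H_X\widetilde\Pi_2=\widetilde\Pi_1\widetilde H_X,\qquad
\widetilde H_Z^\intercal\widetilde\Pi_3=\widetilde\Pi_2\widetilde H_Z^\intercal .
\]
I will verify these block by block against Eqs.~\eqref{eq:tensor-HX} and~\eqref{eq:tensor-HZ}. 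In the ordering of Eq.~\eqref{eq:tensor-qubit-sectors}, $\widetilde\Pi_2$ is exactly Eq.~\eqref{eq:lifted-data-permutation}. Since $\widetilde\Pi_1$ and $\widetilde\Pi_3$ are permutations, $\widetilde\Pi_2$ maps $\operatorname{im}\widetilde H_Z^\intercal$ onto itself and $\ker\widetilde H_X$ onto itself. Using $\widetilde\Pi_2^{-\intercal}=\widetilde\Pi_2$, it also maps $\operatorname{im}\widetilde H_X^\intercal$ onto itself and $\ker\widetilde H_Z$ onto itself. So both stabilizer groups are permuted among themselves, and the code space is preserved.

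For the logical action, apply naturality of the Künneth isomorphism of Eq.~\eqref{eq:app-kunneth-homology-group}. On representatives, $[u]\otimes[v]\mapsto[u\otimes v]$ intertwines $(\Pi\otimes I)_*$ with $\Pi_{1,*}\otimes I_{H_1(\AAA)}$, because $(\Pi_1\otimes I)(u\otimes v)=\Pi_1u\otimes v$. The summands $H_2\otimes H_0$ and $H_0\otimes H_2$ vanish, so this is the entire $Z$-logical action. For the $X$ logicals, the permutation acts on cocycles by $\widetilde\Pi_2$ itself. Since $\widetilde\Pi_2$ is orthogonal, the symplectic pairing is preserved: $(\widetilde\Pi_2x)^\intercal(\widetilde\Pi_2z)=x^\intercal z$. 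Hence, in dual bases, the matrix of the cohomology action is the inverse transpose of the homology matrix. This follows because the pairing matrix is the identity and must be preserved.

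The main obstacle is bookkeeping rather than concept. In the block check, the side registers are indexed by check rows, and the cross blocks $H_Z^\intercal\otimes I$ (from $Q_2\otimes A_0$) and $H_X^\intercal\otimes I$ (from $Q_0\otimes A_2$) need $\Pi_2$ and $\Pi_0$ respectively. The transposed relation $\Pi_0^\intercal$-compatibility of $H_X^\intercal$ follows from transposing $H_X\Pi_1=\Pi_0H_X$ and using orthogonality. I will carry this transposition out explicitly to confirm that every block commutes.
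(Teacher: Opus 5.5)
Your proposal is correct and follows essentially the same route as the paper: you promote $(\Pi_0,\Pi_1,\Pi_2)$ to a chain automorphism of $\QQQ$ and tensor it with $I_{\AAA}$. The degree-one, -two and -three components are then coordinate permutations that preserve both stabilizer families, the $Z$-logical action follows from naturality of the K\"unneth map, and the inverse-transpose $X$-logical action follows from preservation of the binary $X$--$Z$ pairing. Your version only makes explicit details the paper leaves implicit, such as using $\widetilde\Pi_2^{-\intercal}=\widetilde\Pi_2$ to preserve $\operatorname{im}\widetilde H_X^\intercal$ and $\ker\widetilde H_Z$.
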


\begin{proof}
    The compatibility equations ensure that $\Pi$ is a chain automorphism of $\QQQ$. Tensoring with the identity on $\AAA$ therefore gives a chain automorphism of the total complex, whose degree-two component is Eq.~\eqref{eq:lifted-data-permutation}. Its components in degrees one, two, and three are coordinate permutations, so it permutes the physical data and both product-check families, preserving the product stabilizer code.
    
    Under the K\"unneth identification, the induced homology map sends $[z]\otimes[\gamma]$ to $[\Pi_1z]\otimes[\gamma]$. Thus the $Z$-logical action is $\Pi_{1,*}\otimes I_{H_1(\AAA)}$. Since physical permutations preserve the binary $X$--$Z$ pairing, the action on the dual $X$-logical coordinates is the inverse transpose.
\end{proof}

\begin{corollary}[Inherited cyclic logical fibers]\label{cor:canonical-orbit-inheritance}
    Suppose compatible base permutations $\sigma_0,\sigma_1,\sigma_2$ act on chosen representatives as $\sigma_1\ell_{P,i,j,m}=\ell_{P,i,j,m+1}$, with $m$ taken modulo the fiber length. Then
    \begin{equation}
        \widetilde\sigma_2\widetilde\ell_{P,i,j,m,a}
        =\widetilde\ell_{P,i,j,m+1,a}.
        \label{eq:canonical-orbit-inheritance}
    \end{equation}
    If the base relation holds only modulo stabilizers, the product relation holds on logical classes.
\end{corollary}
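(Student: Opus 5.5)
The plan is to deduce the corollary directly from Proposition~\ref{prop:permutation-lift} and the tensor form of the inherited representatives in Eq.~\eqref{eq:canonical-basis-product}. By hypothesis, $\sigma_0,\sigma_1,\sigma_2$ satisfy the compatibility equations~\eqref{eq:permutation-chain-condition}. Eq.~\eqref{eq:lifted-data-permutation} therefore defines the lifted physical permutation $\widetilde\sigma_2=\operatorname{diag}(\sigma_1\otimes I_{A_1},\sigma_2\otimes I_{A_0},\sigma_0\otimes I_{A_2})$. The inherited representative with labels $(i,j,m,a)$ is $\widetilde\ell_{P,i,j,m,a}=(\ell_{P,i,j,m}\otimes\ell^A_{P,a},0,0)$, supported only on the central register.

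For the exact relation, I apply $\widetilde\sigma_2$ blockwise. On the central block, the mixed-product rule gives $(\sigma_1\otimes I)(\ell_{P,i,j,m}\otimes\ell^A_{P,a})=(\sigma_1\ell_{P,i,j,m})\otimes\ell^A_{P,a}=\ell_{P,i,j,m+1}\otimes\ell^A_{P,a}$. The two side blocks are zero and are permuted to zero. This is exactly $\widetilde\ell_{P,i,j,m+1,a}$, with $m+1$ read modulo the fiber length, since the base labels are. The computation is identical for $P=X$ and $P=Z$. For the $X$ type I only need that a permutation matrix acts on $X$-patterns by the same coordinate permutation, which holds because the lift permutes physical qubits.

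For the modulo-stabilizer version, suppose $\sigma_1\ell_{P,i,j,m}=\ell_{P,i,j,m+1}+s$ with $s\in\operatorname{im}H_P^\intercal$. The central component of $\widetilde\sigma_2\widetilde\ell$ then differs from $\widetilde\ell_{P,i,j,m+1,a}$ by $s\otimes\ell^A_{P,a}$, and I must show this is a product $P$ stabilizer. In the $Z$ case, write $s=H_Z^\intercal c$. By Eq.~\eqref{eq:common-central-stabilizers} in the stabilizer form $(H_Z^\intercal B,0,BG_X^\intercal)$ with $B=c\,(\ell^A_{Z,a})^\intercal$, the side term is $c\,(G_X\ell^A_{Z,a})^\intercal=0$, because $\ell^A_{Z,a}\in\ker G_X$. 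Hence $(s\otimes\ell^A_{Z,a},0,0)$ is a stabilizer. The $X$ case is dual, using $\ell^A_{X,a}\in\ker G_Z$. Equivalently, one can invoke the homology statement of Proposition~\ref{prop:permutation-lift}: the induced action is $\sigma_{1,*}\otimes I$, and it sends $[\ell_{i,j,m}]\otimes[\ell^A_a]$ to $[\ell_{i,j,m+1}]\otimes[\ell^A_a]$.

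The only delicate point is this last step: confirming that the base stabilizer tensored with an amplifier logical has zero side components. It reduces to the amplifier representative lying in the kernel of the opposite-type check matrix.
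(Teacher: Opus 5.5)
Your proof is correct and follows the paper's argument. You apply the lifted permutation blockwise, so that $\sigma_1\otimes I$ shifts the base label while the zero side components stay zero. For the modulo-stabilizer case, you show that a base $P$ stabilizer tensored with an amplifier representative in $\ker G_{\bar P}$ is a product stabilizer with vanishing side components; this is the paper's ``boundary tensored with a cycle is a total boundary'' step, made explicit through the generating form $(H_P^\intercal B,0,BG_{\bar P}^\intercal)$.
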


\begin{proof}
    The central block $\sigma_1\otimes I$ applies the base cyclic shift while leaving the amplifier representative unchanged, and the side blocks preserve the zero side components. This proves the claimed identity when the base relation holds exactly.
    
    If the base relation holds modulo stabilizers, tensoring a base $Z$ boundary with an amplifier $Z$ cycle gives a total boundary. Dually, tensoring a base $X$ coboundary with an amplifier $X$ cocycle gives a total coboundary. The same cyclic shift therefore holds on the product logical classes.
\end{proof}

The lifted permutation acts in the same way at every amplifier label; independent actions at selected labels do not follow. Moreover, a base automorphism preserving only check row spaces may require nonpermutation changes of check basis, which need not give a physical permutation of the side data. A fault-tolerant circuit for the permutation, or for a fold-transversal gate, requires a separate implementation argument.

\subsection{Surgery lifts and distance guarantees}

We now use the inherited logical coordinates to specify which base logical constraints are promoted to stabilizers by surgery. We first consider the complete lift and establish the conditions under which it preserves the amplified memory's code distance. We then introduce the selective lift, compare the resulting logical spaces, and clarify the scope of the circuit guarantees.

For the inclusion $j:\QQQ\to\mathcal M=\operatorname{Cone}(f)$, the promoted space satisfies $\ker j_*=\mathcal W$. Indeed, $(z,0)=\partial^M(b,u)$ is equivalent to $\partial^Uu=0$ and $z=\partial^Qb+f(u)$. The cone exact sequence then gives
\begin{equation}
    0\longrightarrow H_1(\QQQ)/\mathcal W
    \longrightarrow H_1(\mathcal M)
    \longrightarrow\ker f_{*,0}\longrightarrow0.
    \label{eq:surgery-remaining-space}
\end{equation}
Thus, $\dim H_1(\mathcal M)=k-m+\kappa$. The quotient describes surviving data-logical classes; the additional $\kappa$ dimensions are not automatically protected data logicals.

The complete lift applies the base surgery to all amplifier logical directions by tensoring the entire chain map.

\begin{proposition}[Tensoring the complete surgery map] \label{prop:surgery-tensor-lift}
    For $F=f\otimes I_{\AAA}$, Eq.~\eqref{eq:cone-tensor-identity} is a coordinate-permutation chain isomorphism. The degree-two cone code promotes precisely $\mathcal W\otimes H_1(\AAA)$, of dimension $mk_A$, and has $(k-m+\kappa)k_A$ logical qubits.
\end{proposition}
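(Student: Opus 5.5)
We prove the three claims in order: the chain isomorphism, the identification of the promoted space, and the logical-qubit count.

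\emph{Chain isomorphism.} By Eq.~\eqref{eq:surgery-cone}, $\operatorname{Cone}(F)_r=T_r\oplus(\mathcal U\otimes\AAA)_{r-1}$. Expanding both sides as direct sums gives
\[
\operatorname{Cone}(F)_r=\bigoplus_{i+j=r}Q_i\otimes A_j\oplus\bigoplus_{i+j=r-1}U_i\otimes A_j,
\]
while
\[
(\operatorname{Cone}(f)\otimes\AAA)_r=\bigoplus_{i+j=r}(Q_i\oplus U_{i-1})\otimes A_j.
\]
The identification $(Q_i\oplus U_{i-1})\otimes A_j\cong (Q_i\otimes A_j)\oplus(U_{i-1}\otimes A_j)$ only regroups summands and reorders coordinate basis vectors, so it is a coordinate permutation. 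It remains to check that this regrouping intertwines the differentials.

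The cone differential is $\begin{pmatrix}\partial^T & F\\ 0 & \partial^{U\otimes A}\end{pmatrix}$, where $\partial^T=\partial^Q\otimes I+I\otimes\partial^A$, $\partial^{U\otimes A}=\partial^U\otimes I+I\otimes\partial^A$, and $F=f\otimes I$. On the other side, $\partial^{M}\otimes I+I\otimes\partial^A$ expands blockwise to the same four blocks. No Koszul signs appear because we work over $\mathbb F_2$. This establishes Eq.~\eqref{eq:cone-tensor-identity}.

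The degree-two central truncation of each side uses only the maps into and out of degree two. The isomorphism preserves these maps, so it identifies the two CSS codes physically.

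\emph{Promoted space and logical-qubit count.} We apply the Künneth formula, Eq.~\eqref{eq:app-kunneth-general-homology}, to $\operatorname{Cone}(f)\otimes\AAA$. Because $H_0(\AAA)=H_2(\AAA)=0$, this gives
\[
H_2\cong H_1(\mathcal M)\otimes H_1(\AAA).
\]
By Eq.~\eqref{eq:surgery-remaining-space}, this has dimension $(k-m+\kappa)k_A$, which is the stated logical-qubit count.

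To identify what is promoted, consider the inclusion $J=j\otimes I_{\AAA}:\mathcal T\to\operatorname{Cone}(F)$. Künneth is natural, so $J_*$ on $H_2$ corresponds to $j_*\otimes I$ on $H_1(\QQQ)\otimes H_1(\AAA)$. Tensoring over a field is exact, so
\[
\ker(j_*\otimes I)=\ker j_*\otimes H_1(\AAA)=\mathcal W\otimes H_1(\AAA),
\]
using $\ker j_*=\mathcal W$ from the argument preceding Eq.~\eqref{eq:surgery-remaining-space}. This subspace has dimension $mk_A$.

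\emph{Main obstacle.} The main bookkeeping concern is the truncation step. The promoted classes must be computed in the full tensor complex, and we must confirm that the central truncation of the cone agrees with truncating before and after the isomorphism. The same remark used in Prop.~\ref{prop:logical-basis-inheritance} handles this: truncation retains the degree-one, degree-two and degree-three maps, and these are all that $H_2$ and $H^2$ depend on.

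Naturality of Künneth holds because the Künneth isomorphism is induced by tensoring representatives, so it commutes with maps of the form $g\otimes I$.
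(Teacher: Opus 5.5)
Your proof is correct and follows essentially the same route as the paper: you use the same regrouping of summands to get the chain isomorphism, and the same application of K\"unneth to $\operatorname{Cone}(f)\otimes\AAA$ with $H_0(\AAA)=H_2(\AAA)=0$ to get the $(k-m+\kappa)k_A$ logical count. The only variation is in identifying the promoted space: you compute $\ker J_*=\ker(j_*\otimes I)=\ker j_*\otimes H_1(\AAA)$ from naturality of K\"unneth and exactness of tensoring over $\mathbb F_2$, reusing $\ker j_*=\mathcal W$, whereas the paper takes the image of $F_*=f_*\otimes I$ on $H_1(\mathcal U)\otimes H_1(\AAA)$ and invokes the cone exact sequence; the two agree because $\ker J_*=\operatorname{im}F_*$.
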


\begin{proof}
    Both full complexes have degree-$r$ space
    \begin{equation}
        \bigoplus_{i+j=r}Q_i\otimes A_j\ \oplus\!\bigoplus_{i+j=r-1}U_i\otimes A_j.
    \end{equation}
    After regrouping summands, their differentials coincide: they contain the same factor differentials and the same off-diagonal map $f\otimes I$. Signs vanish over $\mathbb F_2$. Under K\"unneth, the induced degree-two map is $f_*\otimes I$ on $H_1(\mathcal U)\otimes H_1(\AAA)$, with image $\mathcal W\otimes H_1(\AAA)$. The cone exact sequence identifies this image with the promoted data-logical subspace. Applying K\"unneth to $\mathcal M\otimes\AAA$ gives the merged logical dimension $(k-m+\kappa)k_A$. Central truncation preserves degree-two homology.
\end{proof}

Since the complete lift retains the full tensor structure, the code-distance results apply with the merged code $\mathcal M$ as the base code. The following corollary gives the resulting distance bound and the condition under which the complete lift preserves the amplified memory's protection.

\begin{corollary}[Code distance of the complete lift] \label{cor:surgery-code-distance}
    Suppose $\mathcal M=\operatorname{Cone}(f)$ has at least one remaining logical qubit. Let $\mathcal C_{\mathcal M}=\operatorname{tr}_{2}\operatorname{Tot}(\mathcal M\otimes\AAA)$ denote the central CSS code of the complete lift. Under the hypotheses of Lemma~\ref{lem:distance-amplification},
    \begin{equation}
        d_P(\mathcal C_{\mathcal M})\geq\left\lceil\alpha_Pd_P(\mathcal M)\right\rceil,
        \qquad P\in\{X,Z\}.
        \label{eq:surgery-static-bound}
    \end{equation}
    For each of the three amplifier families with exact code-distance multiplication established in this work,
    \begin{equation}
        d_P(\mathcal C_{\mathcal M})=d_P^A d_P(\mathcal M).
        \label{eq:surgery-static-multiplication}
    \end{equation}
    If $d_P(\mathcal M)\geq d_P(\QQQ)$, the complete lift therefore retains the original amplified memory's certified lower bound. For an exact-distance family, it also satisfies $d_P(\mathcal C_{\mathcal M}) \geq d_P(\TTT)$.
\end{corollary}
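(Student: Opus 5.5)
The plan is to obtain Corollary~\ref{cor:surgery-code-distance} by applying the earlier code-distance results with the merged code $\mathcal M$ as the base. Proposition~\ref{prop:surgery-tensor-lift} already supplies the isomorphism $\operatorname{Cone}(f\otimes I_{\AAA})\cong\operatorname{Cone}(f)\otimes\AAA$, so the central CSS code of the complete lift is precisely $\operatorname{tr}_2\operatorname{Tot}(\mathcal M\otimes\AAA)$. The first step is to make sure $\mathcal M$ fits the setting of Lemma~\ref{lem:distance-amplification}. Its degree-two truncation $M_2\to M_1\to M_0$ is a CSS code, and it has $k_{\mathcal M}=k-m+\kappa>0$ by hypothesis. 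Two facts make this work. First, the lemma places no independence requirement on the base checks. Second, the K\"unneth reduction uses only $H_0(\AAA)=H_2(\AAA)=0$. One point deserves a remark here: the full cone carries terms in degrees $-1$ and $3$ (from $U_{-1}$, and from $U_2$ if it is present). Because central truncation preserves degree-two (co)homology, I would replace $\mathcal M$ by its three-term truncation before tensoring, or simply note that only the maps adjacent to degree two matter. That way $H_2$ of the total complex is still $H_1(\mathcal M)\otimes H_1(\AAA)$.

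Given this, inequality~\eqref{eq:surgery-static-bound} follows verbatim from the contraction and overlap argument of Lemma~\ref{lem:distance-amplification}. The families $\Omega_{P,i}$ depend only on $\AAA$, so the same factors $\alpha_P$ apply for any base with $k>0$. For the three amplifier families, I would cite the exact identities already established for an arbitrary CSS base code with $k>0$. These are the flagged $\code{4,2,2}$ identity $d_P(\TTT)=2d_P(\QQQ)$, Proposition~\ref{prop:hgp-code-distance}, and Eq.~\eqref{eq:rs-code-distance}. Substituting $\mathcal M$ for $\QQQ$ then gives~\eqref{eq:surgery-static-multiplication}. The matching upper bound is the tensor of minimum-weight representatives, as in the lemma.

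The final claims are monotonicity statements. If $d_P(\mathcal M)\ge d_P(\QQQ)$, then $\lceil\alpha_Pd_P(\mathcal M)\rceil\ge\lceil\alpha_Pd_P(\QQQ)\rceil$, which is the certified bound for $\TTT$. For an exact-distance family, the chain is $d_P(\mathcal C_{\mathcal M})=d_P^Ad_P(\mathcal M)\ge d_P^Ad_P(\QQQ)=d_P(\TTT)$.

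The main obstacle is interpretive rather than computational: what exactly $d_P(\mathcal M)$ refers to. The $\kappa$ auxiliary logical classes are counted in $H_1(\mathcal M)$, so the distance must be taken over all nontrivial classes of the truncated merged code, and the lifted distance must be stated for the same notion. I would make this explicit and check that the K\"unneth identification covers those auxiliary classes too. It does, since $H_1(\mathcal M)$ is treated as a whole.
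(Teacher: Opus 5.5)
Your proposal is correct and follows the paper's own proof. Like the paper, you identify the complete lift with $\operatorname{tr}_2\operatorname{Tot}(\mathcal M\otimes\AAA)$ via the cone isomorphism, apply Lemma~\ref{lem:distance-amplification} or the family-specific exact-distance results with $\mathcal M$ as the base code, and take $d_P(\mathcal M)$ over all nontrivial classes, including the $\kappa$ auxiliary ones. One harmless slip: since $M_r=Q_r\oplus U_{r-1}$ and $\mathcal U$ sits in degrees $1,0,-1$, the cone already lives in degrees $2,1,0$, so your truncation remark is unnecessary.
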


\begin{proof}
    The cone isomorphism identifies the complete lift with the central tensor construction applied to $\mathcal M$. Apply Lemma~\ref{lem:distance-amplification}, or the applicable exact-distance result, with $\mathcal M$ as the base code. The distance here is that of the full stabilizer code, including any additional logical degrees of freedom counted by $\kappa$. It is not a dressed subsystem distance obtained by discarding them.
\end{proof}

To restrict the promoted subspace to one amplifier logical direction, we instead use the selective lift.

\begin{proposition}[Selecting one amplifier logical direction] \label{prop:selective-surgery-lift}
    For a nonzero amplifier class $[\gamma]$, the map in Eq.~\eqref{eq:selective-surgery-map} is a chain map. Its degree-two cone code promotes precisely $\mathcal W\otimes\operatorname{span}\{[\gamma]\}$, of dimension $m$, and has $kk_A-m+\kappa$ logical qubits.
\end{proposition}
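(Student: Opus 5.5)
The plan has three parts: verify the chain-map identity directly, compute the promoted subspace with the cone long exact sequence, and count dimensions with the K\"unneth formula.

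\emph{Chain map.} Write $F^\gamma_r(u)=f_{r-1}(u)\otimes\gamma$ for $u\in U_{r-1}$. With the differential $\partial_{\mathcal T}=\partial_{\QQQ}\otimes I+I\otimes\partial_{\AAA}$, we get
\[
\partial_{\mathcal T}F^\gamma(u)=\partial_{\QQQ}f(u)\otimes\gamma+f(u)\otimes G_X\gamma .
\]
The second term vanishes because $\gamma\in\ker G_X$. The first equals $f(\partial^U u)\otimes\gamma=F^\gamma(\partial^{U[1]}u)$, since $f$ is a chain map; over $\mathbb F_2$ the shift introduces no signs. The nonzero components are only $f_1\otimes\gamma$ and $f_0\otimes\gamma$, so the required identity reduces to $H_Xf_1=f_0\partial^U_1$ together with $G_X\gamma=0$. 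This is exactly the CSS commutation relation quoted after Eq.~\eqref{eq:selective-surgery-checks}.

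\emph{Promoted subspace.} Next I compare the cone with its inclusion $j:\mathcal T\to\operatorname{Cone}(F^\gamma)$. Arguing exactly as for $\ker j_*=\mathcal W$ in the base case, a class $[z]\in H_2(\mathcal T)$ dies in the cone if and only if $z=\partial b+F^\gamma(u)$ with $u$ a cycle of $\mathcal U[1]$ in degree $2$, that is, $u\in U_1$ with $\partial^U u=0$. Hence
\[
\ker j_*=\operatorname{im}\bigl(F^\gamma_*:H_1(\mathcal U)\to H_2(\mathcal T)\bigr).
\]
Under the K\"unneth isomorphism~\eqref{eq:app-kunneth-homology-group}, this map is $[u]\mapsto f_*[u]\otimes[\gamma]$, because tensor products of cycles represent tensor products of classes. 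Its image is therefore $\mathcal W\otimes\operatorname{span}\{[\gamma]\}$. Since $[\gamma]\neq0$, tensoring with $[\gamma]$ is injective on $H_1(\QQQ)$, so this image has dimension $m$. Central truncation leaves degree-two homology unchanged.

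\emph{Logical count.} The cone long exact sequence reads
\[
H_2(\mathcal U[1])\xrightarrow{F^\gamma_*} H_2(\mathcal T)\to H_2(\operatorname{Cone}F^\gamma)\to H_1(\mathcal U[1])\xrightarrow{F^\gamma_*} H_1(\mathcal T).
\]
From it, $\dim H_2(\operatorname{Cone}F^\gamma)=kk_A-m+\dim\ker\bigl(F^\gamma_*\text{ on }H_0(\mathcal U)\bigr)$. The key step, and the one I expect to be the main obstacle, is to identify this last kernel with $\ker f_{*,0}$ so that the correction is $\kappa$.

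By K\"unneth together with $H_0(\AAA)=H_2(\AAA)=0$, we have $H_1(\mathcal T)\cong H_0(\QQQ)\otimes H_1(\AAA)\oplus H_1(\QQQ)\otimes H_0(\AAA)=H_0(\QQQ)\otimes H_1(\AAA)$. The map then sends $[u]\mapsto f_{*,0}[u]\otimes[\gamma]$, so its kernel is $\ker f_{*,0}$, again using injectivity of $-\otimes[\gamma]$ for $[\gamma]\neq0$.

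I would check carefully that the K\"unneth identification is natural for maps of the form $x\mapsto g(x)\otimes\gamma$. This holds because such a map is the composite $\mathcal U\xrightarrow{f}\QQQ\xrightarrow{-\otimes\gamma}\mathcal T$ of chain maps, up to shift. Combining the two steps gives $kk_A-m+\kappa$ logical qubits.
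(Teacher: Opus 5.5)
Your proposal is correct and follows the paper's proof essentially step for step. Both arguments verify the chain-map identity using $G_X\gamma=0$ and the fact that $f$ is a chain map. Both identify the induced degree-two map with $[u]\mapsto f_*[u]\otimes[\gamma]$, whose image has dimension $m$ because $[\gamma]\neq0$, and identify the degree-one map via K\"unneth with $[u]\mapsto f_{*,0}[u]\otimes[\gamma]$, whose kernel has dimension $\kappa$; the cone exact sequence then combines the two. Your explicit derivation of $\ker j_*$ and the reduction $H_1(\mathcal T)\cong H_0(\QQQ)\otimes H_1(\AAA)$ simply make explicit what the paper leaves implicit.
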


\begin{proof}
    Because $G_X\gamma=0$ and $f$ is a chain map,
    \begin{equation}
        \partial F^\gamma(u)=\partial^Qf(u)\otimes\gamma = f(\partial^Uu)\otimes\gamma=F^\gamma(\partial u).
    \end{equation}
    The induced degree-two map sends $[u]$ to $f_*[u]\otimes[\gamma]$. Tensoring with the nonzero vector $[\gamma]$ is injective on $\mathcal W$, so its image has dimension $m$. In degree one, K\"unneth identifies the induced map with $[u]\mapsto f_{*,0}[u]\otimes[\gamma]$, whose kernel has dimension $\kappa$. The cone exact sequence therefore yields $kk_A-m+\kappa$ remaining logical dimensions.
\end{proof}

The selective cone is generally not the complete tensor product, so Corollary~\ref{cor:surgery-code-distance} does not apply to it automatically. Its connection weights must also be assessed using Eq.~\eqref{eq:selective-connection-weights}.

For example, take $\mathcal{W}=\operatorname{span}\{[\ell_{Z,\mu}+\ell_{Z,\nu}]\}$ and $\gamma=\ell^A_{Z,a}$. The selective lift imposes $\bar Z_{\mu,a}\bar Z_{\nu,a}$ without imposing the corresponding constraint at every other label. The complete lift imposes the constraint for every amplifier basis class.

To illustrate the different numbers of remaining logical qubits, take $\QQQ=\AAA=\code{4,2,2}$ with one $X$ check and one $Z$ check, and let $f$ add one nontrivial base $Z$ logical using $U_1=\mathbb F_2$ and $U_0=U_{-1}=0$. Then $m=1$ and $\kappa=0$. The amplified memory has $4\cdot4+1\cdot1+1\cdot1=18$ data qubits and four logical qubits. The base cone has four data qubits, one $X$ check, and two $Z$ checks, so its complete lift has $4\cdot4+2\cdot1+1\cdot1=19$ data qubits and two logical qubits. The selective cone adds one $Z$ check and no data qubits, leaving three logical qubits on the original 18 data qubits. This checks the distinction between the promoted subspaces; it does not establish fault tolerance of a direct measurement of the added generator.

Neither cone construction alone establishes a fault-tolerant measurement protocol. Ancilla preparation, syndrome extraction, merge and split boundaries, and the recorded logical outcome each require a fault analysis. For a merged code with an admissible extraction circuit and the required fault-response certificate, the memory theorems bound its SJ tensor extraction; an attained certificate gives the corresponding exact multiplication. These statements do not bound faults during preparation or deformation. Finally, measuring base operators independently on the central slices can reveal more information than measuring their product and can fail to preserve the amplified code. The inherited addresses and chain maps specify the intended logical constraints without justifying such independent measurements.

\section{Circuit-level simulations}
\label{app:bb18-simulations}

To assess the logical-error suppression provided by recursive distance amplification, we perform circuit-level Monte Carlo simulations of the $\code{18,4,4}$ bivariate bicycle code and its tensor constructions with the flagged $\code{4,2,2}$ amplifier. We consider four levels, $t=0,1,2,3$, with $(n,k)=(18,4),(90,8),(468,16),(2484,32)$, respectively. The seed uses the specified unflagged syndrome-extraction circuit, while each amplified circuit includes both newly introduced and inherited flags.

We use a circuit-level Pauli noise model with physical error probability $p$. Each CNOT is followed by two-qubit depolarizing noise with total probability $p$. Ancilla preparation in $|0\rangle$ or $|+\rangle$ is followed by an $X$ or $Z$ error, respectively, with probability $p$, and each syndrome or flag measurement outcome flips with probability $p$. Faults at distinct circuit locations are independent, and idling errors are neglected.

Each shot starts from an ideal encoded input and consists of $R=8$ noisy syndrome-extraction rounds followed by two noiseless rounds. Check detectors are formed from consecutive syndrome differences, with the first round compared against the ideal initial syndrome. Individual flag outcomes are included in the detector record without postselection. Both logical Pauli components are tracked in each shot to determine whether either CSS sector suffers a logical failure. We decode the two CSS sectors independently using belief propagation with localized statistics decoding (BP+LSD)~\cite{hillmann_localized_2025}. We use parallel min-sum BP with at most $60$ iterations and scaling factor $0.625$ for $t=0$, and at most $120, 240, 480$ iterations and scaling factor $0.75$ for $t=1,2,3$. The LSD settings are \texttt{LSD\_CS}, order $2$, and \texttt{bits\_per\_step}=8 throughout.

A shot is counted as a failure if either sector has a nonzero residual logical Pauli vector or returns a correction inconsistent with its input syndrome. Let $F$ denote the number of failed shots among $N$ samples. We report the effective logical error rate per logical qubit per noisy extraction round,
\begin{equation}
     \widehat q=\frac{F}{N},
     \qquad
     \widehat p_L=1-(1-\widehat q)^{1/(kR)}.
     \label{eq:bb18-joint-rate}
\end{equation}
Here $\widehat q$ estimates the joint block failure probability: a shot contributes once to $F$ even when multiple logical qubits or both CSS sectors fail. Error bars indicate $1\sigma$ statistical uncertainties in the estimated logical error rates.

The circuit resources at each amplification level are summarized in Table~\ref{tab:bb18-circuit-resources}. Simulations use Stim~\cite{gidney2021stim} for circuit sampling and \texttt{ldpc.BpLsdDecoder}~\cite{hillmann_localized_2025} for decoding. Sampling records are retained for subsequent statistical analysis.

\begin{table}[t]
\centering
\caption{Circuit resources for the $\code{18,4,4}$ seed and its recursive flagged-$\code{4,2,2}$ amplification. Qubit counts assume no optimization of ancilla reuse. CNOT counts and layer counts refer to one syndrome-extraction round, excluding preparation and measurement slots.}
\label{tab:bb18-circuit-resources}
\small
\setlength{\tabcolsep}{4pt}
\renewcommand{\arraystretch}{1.1}
\begin{tabular}{lrrrr}
\toprule
Resource & $t=0$ & $t=1$ & $t=2$ & $t=3$ \\
\midrule
Syndrome auxiliary systems      & 18  & 108 & 612  & 3384  \\
Flag auxiliary systems         & 0   & 36  & 324  & 2232  \\
Total physical qubits & 36  & 234 & 1404 & 8100  \\
CNOTs per round       & 108 & 828 & 5580 & 35316 \\
CNOT layers per round & 8   & 25  & 56   & 115   \\
\bottomrule
\end{tabular}
\end{table}

\section{Circuit distance versus fault-response certificates: an unflagged four-qubit example}
\label{app:unflagged422-certificate}

An unflagged $\code{4,2,2}$ circuit illustrates why circuit distance cannot generally replace the fault-response certificate in the amplification theorem. We use the fault model of Appendix~\ref{app:circuit_distance_amplification}, with ideal encoded input, noiseless idles, perfect terminal checks, and the complete noisy measurement record. Let $D_P^{\max}(\mathsf S)$ denote the largest certificate allowed by Definition~\ref{def:joint-response-certificate}. Circuit distance is defined by Eq.~\eqref{eq:iva-circuit-distance}; the number of noisy rounds is specified below.

\begin{figure*}[t]
 \centering
 \includegraphics[width=0.95\textwidth]{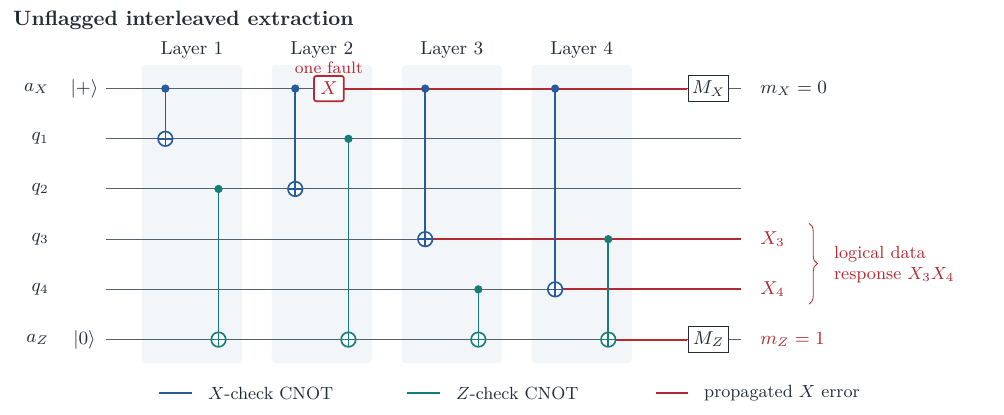}
 \caption{\textbf{A detected single fault with a logical data response.} An $X_{a_X}$ fault after the second $X$-check CNOT produces $X_3X_4$, noisy outcomes $(m_X,m_Z)=(0,1)$, and zero terminal syndrome. Gates within each layer act on disjoint qubits simultaneously.}
 \label{fig:uf422-circuit}
\end{figure*}

The circuit $\mathsf S_4$ measures
\begin{equation}
     S_X=X_1X_2X_3X_4,
     \qquad S_Z=Z_1Z_2Z_3Z_4.
     \label{eq:uf422-stabilizers}
\end{equation}
Prepare $a_X$ in $|+\rangle$ and $a_Z$ in $|0\rangle$. Apply $\operatorname{CX}(a_X,q_j)$ in the order $(q_1,q_2,q_3,q_4)$ and $\operatorname{CX}(q_j,a_Z)$ in the order $(q_2,q_1,q_4,q_3)$, pairing gates at the same position into four parallel layers (Fig.~\ref{fig:uf422-circuit}). Then measure $a_X$ in $X$ and $a_Z$ in $Z$. Reordering the checks into consecutive extraction blocks introduces two identical ancilla CNOTs that cancel by Eq.~\eqref{eq:common-cnot-exchange}. The ideal unitary $U$ therefore satisfies
\begin{equation}
     \begin{aligned}
     U^\dagger X_{a_X}U&=X_{a_X}S_X,\\
     U^\dagger Z_{a_Z}U&=Z_{a_Z}S_Z.
     \end{aligned}
     \label{eq:uf422-ideal-observables}
\end{equation}

An $X_{a_X}$ fault immediately after $\operatorname{CX}(a_X,q_2)$ propagates to
\begin{equation}
     E^{(4)}=X_{a_X}X_{a_Z}X_3X_4.
     \label{eq:uf422-hook-propagation}
\end{equation}
Its data response is a nontrivial logical, while $X_{a_Z}$ flips the noisy $Z$ outcome:
\begin{equation}
     \begin{aligned}
     (m_X,m_Z)&=(0,1),\\
     (s_{X,\mathrm{terminal}},s_{Z,\mathrm{terminal}})&=(0,0).
     \end{aligned}
     \label{eq:uf422-hook-record}
\end{equation}
This one-fault bundle is admissible in the $X$-check extraction gadget: there are no flags, and the certificate imposes no constraint on ordinary check outcomes. Thus $D_X^{\max}=1$. The dual fault $Z_{a_Z}$ after $\operatorname{CX}(q_1,a_Z)$ in layer 2 gives $Z_3Z_4$ with $(m_X,m_Z)=(1,0)$, establishing $D_Z^{\max}=1$.

To determine the circuit distance, a pure-$X$ fault $X_{a_X}^{\alpha}X_j^{\beta}$ after the $j$th $X$-check CNOT has data response
\begin{equation}
     e=\beta\mathbf e_j+\alpha\sum_{\ell=j+1}^{4}\mathbf e_\ell,
     \label{eq:uf422-suffix-response}
\end{equation}
where $\mathbf e_j$ is the unit vector at $q_j$ and $\alpha,\beta\in\{0,1\}$ are not both zero. This is a nontrivial logical only for $(j,\alpha,\beta)=(2,1,0)$ or $(3,1,1)$; both produce $X_3X_4$ and flip $m_Z$. Opposite-type CNOT faults affect at most one data qubit, and preparation or readout faults produce no nontrivial logical data response. The dual argument excludes single-fault zero-record histories in the $Z$ sector. This exclusion holds for any positive number of noisy rounds, since subsequent ideal rounds cannot erase an earlier recorded outcome or a nonzero data syndrome. Together with the upper bound $d_{\mathrm{circ},P}\le d_P=2$ from Appendix~\ref{app:circuit_distance_amplification}, this gives
\begin{equation}
     D_P^{\max}(\mathsf S_4)=1
     <d_{\mathrm{circ},P}(\mathsf S_4)=2,
     \qquad P\in\{X,Z\}.
     \label{eq:uf422-exact-circuit-distance}
\end{equation}

The same hook prevents $\mathsf S_4$ from directly replacing the flagged amplifier in our proof. Its response $v=(0,0,1,1)$ violates the local bound of Lemma~\ref{lem:iva-local-flag}:
\begin{equation}
     \min_{b\in\mathbb F_2}
     \operatorname{wt}\bigl(v+b(1,1,1,1)\bigr)=2>1.
     \label{eq:uf422-local-bound-failure}
\end{equation}
Ordinary check outcomes are unrestricted in this bound, and the flag condition is vacuous. Failure of this sufficient condition therefore does not preclude amplification for a particular unflagged circuit.

\begin{figure*}[t]
 \centering
 \includegraphics[width=0.7\textwidth]{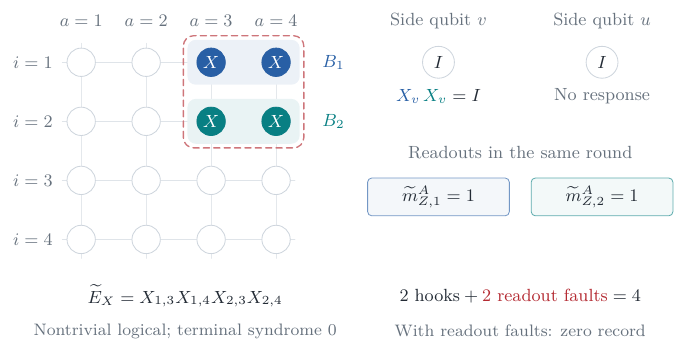}
 \caption{\textbf{Logical responses in an unflagged self-tensor circuit.} In one noisy round, an ancilla-$X$ fault after the second central $X$-check CNOT in each amplifier row $i=1,2$ produces $X_{i,3}X_{i,4}X_v$. The side responses cancel, leaving a logical rectangle and noisy outcomes $\widetilde m^A_{Z,1}=\widetilde m^A_{Z,2}=1$. Two additional readout faults hide these outcomes, giving a four-fault zero-record logical history. Colors distinguish the two hook responses.}
 \label{fig:uf422-tensor-mechanism}
\end{figure*}

For a concrete example, apply the SJ construction with $\mathsf S_4$ as both constituent circuits and no flags. The resulting $\mathsf S_{18}$ has central data $q_{i,a}$, $i,a\in\{1,2,3,4\}$, and side data qubits $v\in Q_2\otimes A_0$ and $u\in Q_0\otimes A_2$. The checks in Eqs.~\eqref{eq:tensor-HX} and~\eqref{eq:tensor-HZ} give eight retained checks of rank seven per sector, yielding $\code{18,4,4}$. The base and amplifier windows run $\mathsf S_4$ on columns and rows, with auxiliary systems $b_{P,a}$ and $a_{P,i}$, respectively. Side~1 applies $\operatorname{CX}(b_{X,a},u)$ and $\operatorname{CX}(v,b_{Z,a})$; side~2 applies $\operatorname{CX}(a_{X,i},v)$ and $\operatorname{CX}(u,a_{Z,i})$, visiting coordinates in increasing order within each family. Ancilla preparation and measurement bases are inherited from $\mathsf S_4$, with all readouts after side~2, giving 80 CNOTs per round.

The two hooks in Fig.~\ref{fig:uf422-tensor-mechanism} have combined central support $(\mathbf e_1+\mathbf e_2)\otimes(\mathbf e_3+\mathbf e_4)$ and identity on both side data qubits. They therefore give an admissible logical decomposition with assigned fault count two; two additional readout faults give a zero-record history with four faults. The dual construction gives the same upper bounds in the $Z$ sector.

For one, two, and eight noisy rounds, exhaustive enumeration of CNOT, preparation, and readout faults excludes zero-record logical histories with at most three faults in either sector. The enumeration retains the complete noisy record, terminal syndrome, and data response modulo stabilizers. A separate response enumeration, including singleton terms but without ordinary record constraints, excludes logical decompositions with assigned fault count one. Hence, for these round counts,
\begin{equation}
     \begin{aligned}
     D_X^{\max}(\mathsf S_{18})
     &=D_Z^{\max}(\mathsf S_{18})=2,\\
     d_{\mathrm{circ},X}(\mathsf S_{18})
     &=d_{\mathrm{circ},Z}(\mathsf S_{18})=4.
     \end{aligned}
     \label{eq:uf422-product-exact}
\end{equation}
This example establishes one-step circuit-distance doubling for the verified round counts, without implying recursive doubling.

\end{document}

%% file: newcommands.tex
\renewcommand{\ker}{\mathrm{ker}\text{ }}

\newcommand{\AAA}{\mathcal{A}}

\newcommand{\TTT}{\mathcal{T}}

\newcommand{\QQQ}{\mathcal{Q}}

\newcommand{\ox}{\otimes}
\newcommand{\x}{\times}

\newcommand{\tand}{\text{ and }}

\newcommand{\code}[1]{[\![#1]\!]}